\documentclass[a4paper, onecolumn, 11pt, accepted=2022-07-04]{quantumarticle}
\pdfoutput=1

\usepackage{tabu}
\usepackage{array}
\usepackage{amsmath}
\usepackage{amsthm}
\usepackage{amssymb}
\usepackage{algorithm}
\usepackage{subfigure}
\usepackage{dirtytalk}
\usepackage{color}
\usepackage[english]{babel}
\usepackage{graphicx}
\usepackage{grffile}
\usepackage{wrapfig,epsfig}
\usepackage{epstopdf}
\usepackage{url}
\usepackage{color}
\usepackage{epstopdf}
\usepackage{algpseudocode}
\usepackage[T1]{fontenc}
\usepackage{bbm}
\usepackage{comment}
\usepackage{dsfont}
\usepackage{mathtools}
\usepackage{enumitem}
\usepackage{tikz}
\usetikzlibrary{backgrounds,fit,decorations.pathreplacing,calc}
\usepackage{qcircuit}
\let\C\relax
\usepackage{hyperref}
\usetikzlibrary{arrows}
\usepackage[margin=1in]{geometry}
\graphicspath{{./fig/}}
\usepackage[nameinlink,capitalize]{cleveref}
\usepackage{textcomp}
\usepackage{multirow}

\newtheorem{theorem}{Theorem}[section]
\newtheorem{lemma}[theorem]{Lemma}
\newtheorem{definition}[theorem]{Definition}

\newtheorem{corollary}[theorem]{Corollary}

\newtheorem{remark}[theorem]{Remark}
\newtheorem{claim}[theorem]{Claim}

\newtheorem{problem}[theorem]{Problem}


\newcommand{\wh}{\widehat}
\newcommand{\wt}{\widetilde}
\newcommand{\ov}{\overline}

\newcommand{\R}{\mathbb{R}}

\renewcommand{\d}{\mathrm{d}}
\renewcommand{\tilde}{\wt}
\renewcommand{\hat}{\wh}

\renewcommand{\d}{\mathrm{d}}

\DeclareMathOperator*{\E}{{\mathbb{E}}}
\DeclareMathOperator*{\Var}{{\bf {Var}}}

\DeclareMathOperator*{\Z}{\mathbb{Z}}
\DeclareMathOperator*{\C}{\mathbb{C}}

\newcommand{\ket}[1]{\left| #1 \right\rangle}
\newcommand{\bra}[1]{\left\langle #1 \right|}

\DeclareMathOperator{\poly}{poly}




\definecolor{mygreen}{RGB}{80,180,0}
\definecolor{b2}{RGB}{51,153,255}

\newcommand{\nc}{\newcommand}

\nc{\nnl}{\nn \\ &}  
\nc{\fot}{\frac{1}{2}} 
\nc{\oo}[1]{\frac{1}{#1}} 
\newcommand{\ben}{\begin{enumerate}}
\newcommand{\een}{\end{enumerate}}
\nc{\mc}{\mathcal}

\nc{\onenorm}[1]{\L\| #1 \R\|_1} 

\nc{\Ra}{\Rightarrow}
\nc{\zo}{\{0,1\}}

\mathchardef\mhyphen="2D 

\title{Computing Ground State Properties with Early Fault-Tolerant Quantum Computers}

\author{Ruizhe Zhang}
\affiliation{Department of Computer Science, The University of Texas at Austin, Austin, TX 78712, USA.}
\email{ruizhe@utexas.edu}
\author{Guoming Wang}
\affiliation{Zapata Computing Inc., Boston, MA 02110, USA.}
\email{guoming.wang@zapatacomputing.com}
\author{Peter Johnson}
\affiliation{Zapata Computing Inc., Boston, MA 02110, USA.}
\email{peter@zapatacomputing.com}


\begin{document}

\maketitle
\begin{abstract}
Significant effort in applied quantum computing has been devoted to the problem of ground state energy estimation for molecules and materials.
Yet, for many applications of practical value, additional properties of the ground state must be estimated.
These include Green's functions used to compute electron transport in materials and the one-particle reduced density matrices used to compute electric dipoles of molecules.
In this paper, we propose a quantum-classical hybrid algorithm to efficiently estimate such ground state properties with high accuracy using low-depth quantum circuits.
We provide an analysis of various costs (circuit repetitions, maximal evolution time, and expected total runtime) as a function of target accuracy, spectral gap, and initial ground state overlap.
This algorithm suggests a concrete approach to using early fault tolerant quantum computers for carrying out industry-relevant molecular and materials calculations.
\end{abstract}

\newpage
\section{Introduction}

One of the primary applications of quantum computing is the simulation of materials and molecules, which are inherently quantum mechanical.
It is hoped that future powerful quantum computers will be used in the development of materials and drug discovery \cite{cao2018potential}.
Although they have yet to realize commercial application, quantum computers have been improving at a rapid rate, increasing the demand for quantum algorithms with high-impact use cases.
To date, the main focus of quantum algorithm development for quantum chemistry and materials has been on ground state energy estimation \cite{cao2019quantum}.
This problem is mathematically formulated as estimating the lowest eigenvalue of the Hamiltonian matrix that characterizes the physical system.
One of the first quantum chemistry applications of quantum computers was to use quantum phase estimation for estimating the ground state energy of small molecules \cite{aspuru2005simulated}.
More recently, the variational quantum eigensolver algorithm \cite{peruzzo2014variational} was developed to use near-term intermediate-scale quantum (NISQ) computers to solve the ground state energy estimation problem.


However, in characterizing materials or analyzing small molecules for drug discovery, one often needs to estimate properties of the ground state beyond just the energy. 
These include transport properties \cite{meir1992landauer}, electric dipole moments \cite{jensen2017introduction}, and molecular forces \cite{o2019calculating}.
Such properties depend on
expectation values of observables $O$ with respect to the ground state of a Hamiltonian $H$.
The problem of estimating such quantities was studied in \cite{amb14, gy19, gpy20}, showing that it is even harder, in a complexity theoretic sense, than the ground state energy estimation problem in general. 
A straightforward approach to estimating ground state properties is to first (approximately) prepare the ground state, from which properties can be estimated. Many algorithms (e.g. \cite{pw09,gtc19,lt20}) have been developed for ground state preparation.  
However, these algorithms only work for idealistic quantum computers, and the quantum circuit depths involved in these methods are too deep to even be implemented on early fault-tolerant quantum computers.
Another approach to preparing ground states that is more amenable to near-term quantum computers is to use the variational quantum eigensolver algorithm \cite{mcardle2019digital, o2019calculating}.
However, recent work has suggested that VQE alone is not practical for solving problems of industrial relevance \cite{gonthier2020identifying}; 
estimation methods which are more efficient (e.g. \cite{wang2021minimizing}) than prepare and measure estimation, as used in VQE,
seem necessary in order for quantum computers to compete with state-of-the-art methods in quantum chemistry and materials.
Further issues with the variational quantum eigensolver and its variants are that there are no guarantees on the quality of the output ground state and that heuristic optimization methods struggle to prepare high-fidelity ground states.

This motivates the development of quantum algorithms for ground state property estimation (GSPE) which are both reliable \emph{and} able to be run on near-term quantum computers (e.g. early fault-tolerant quantum devices) with the following characteristics: (1) The circuit depth (or the maximal Hamiltonian evolution time) is small even with the price of increasing the total circuit size (or evolution time). (2) The number of logical qubits is limited. The early fault-tolerant model captures the challenges of building a large-scale long-time coherent quantum device, while also being able to solve many important problems with provable performance guarantees \cite{bmn21,bom21,cam21,lt21,lay22}.
The central question that this paper addresses is then:
\begin{center}
\emph{Is it possible to estimate ground state properties of a Hamiltonian reliably using early fault-tolerant quantum computers?}
\end{center}




In this paper, we provide an affirmative answer to this question. Furthermore, we propose an algorithm for the ground state property estimation using low-depth quantum circuits. The main theorem is stated as follows:

\begin{theorem}[Main theorem, informal]\label{thm:intro_app_sim}
Given a Hamiltonian $H$ and an observable $O$. Suppose we have access to a unitary $U_I$ that prepares a state $\ket{\phi_0}$ that has non-trivial overlap with the ground state $\ket{\psi_0}$ of $H$. Then, there exists an algorithm to estimate $\bra{\psi_0}O\ket{\psi_0}$ with high accuracy and low-depth: the maximal Hamiltonian evolution time is $\wt{O}(\gamma^{-1})$, where $\gamma$ is the spectral gap of $H$.
\end{theorem}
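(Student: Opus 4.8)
The plan is to reduce the estimation of $\bra{\psi_0}O\ket{\psi_0}$ to a ratio of two expectation values, each measurable by low-depth Hadamard-test circuits. The starting point is the identity
$$\bra{\psi_0}O\ket{\psi_0} = \frac{\bra{\phi_0} P_0\, O\, P_0 \ket{\phi_0}}{\bra{\phi_0} P_0 \ket{\phi_0}},$$
where $P_0 = \ketbra{\psi_0}{\psi_0}$ is the ground-state projector: since $P_0 O P_0 = \bra{\psi_0}O\ket{\psi_0}\, P_0$, the right-hand side collapses to $\bra{\psi_0}O\ket{\psi_0}$ whenever the overlap $p_0 = \bra{\phi_0}P_0\ket{\phi_0} > 0$. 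Note that filters are needed on \emph{both} sides of $O$; a single filter would leave cross terms $\bra{\psi_0}O\ket{\psi_k}$ with $k \neq 0$ and fail to isolate the diagonal matrix element.

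First I would replace the exact projector $P_0$ by a band-limited spectral filter $F(H)$ implementable with short Hamiltonian evolution. Because the gap is $\gamma$, any function $F$ that equals $1$ at $E_0$ and decays to $0$ before $E_0 + \gamma$ acts as a near-projector onto $\ket{\psi_0}$. Writing $F(x) = \sum_j c_j e^{-i x t_j}$ as a finite sum of complex exponentials with $\max_j |t_j| = \wt{O}(\gamma^{-1})$, the operator $F(H) = \sum_j c_j e^{-iHt_j}$ becomes a linear combination of time evolutions, each of maximal evolution time $\wt{O}(\gamma^{-1})$. Constructing $F$ requires centering it at the unknown ground energy $E_0$, so a preliminary step is to estimate $E_0$ to accuracy $o(\gamma)$ using an existing low-depth ground-state-energy routine (e.g.\ \cite{lt21}), which also stays within maximal evolution time $\wt{O}(\gamma^{-1})$.

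With the filter in hand, numerator and denominator expand as
$$\bra{\phi_0} F(H)\, O\, F(H) \ket{\phi_0} = \sum_{j,k} \ov{c_j} c_k\, \bra{\phi_0} e^{iHt_j}\, O\, e^{-iHt_k} \ket{\phi_0}, \qquad \bra{\phi_0} F(H)^2 \ket{\phi_0} = \sum_{j,k} \ov{c_j} c_k\, \bra{\phi_0} e^{iH(t_j - t_k)} \ket{\phi_0}.$$
Each term is a transition amplitude $\bra{\phi_0} e^{iHt_j}\, V\, e^{-iHt_k} \ket{\phi_0}$ with $V \in \{O, I\}$, whose real and imaginary parts can be estimated by a Hadamard test using controlled time evolution and, after decomposing $O = \sum_\ell \beta_\ell U_\ell$ into a linear combination of unitaries, a controlled application of each $U_\ell$. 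I would estimate the full sums by importance-sampling the indices $(j,k,\ell)$ according to the coefficient magnitudes, so the number of circuit repetitions scales with the squared $L^1$-norms $(\sum_j |c_j|)^2 (\sum_\ell |\beta_\ell|)^2$ rather than the number of terms, while the maximal evolution time per circuit remains $\wt{O}(\gamma^{-1})$.

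The main obstacle is the filter construction and the resulting accuracy/sample-complexity trade-off. Band-limiting $F$ to time support $\wt{O}(\gamma^{-1})$ forces a nonzero error $\norm{F(H) - P_0}$ that I must drive below the target accuracy while keeping $\sum_j |c_j|$ polylogarithmic; this classical approximation-theory step is the heart of the argument and can be handled by a smoothed (mollified) indicator or a Gaussian-type window with exponentially small Fourier tails. The second difficulty is that dividing by the estimated denominator $\approx p_0$ amplifies statistical error by $1/p_0$, so I would carefully propagate errors through the ratio and show that additive accuracy $\epsilon$ in $\bra{\psi_0}O\ket{\psi_0}$ is attainable with sample complexity polynomial in $1/(\epsilon\, p_0)$, while the maximal evolution time stays $\wt{O}(\gamma^{-1})$ as claimed.
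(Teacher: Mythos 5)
Your proposal is correct and follows essentially the same route as the paper: your two-sided filter identity $\bra{\psi_0}O\ket{\psi_0}=\bra{\phi_0}P_0OP_0\ket{\phi_0}/\bra{\phi_0}P_0\ket{\phi_0}$ with $P_0$ replaced by a Fourier-limited mollified indicator centered via a preliminary $O(\gamma)$-accurate energy estimate from \cite{lt21} is precisely the paper's construction, where the 2-d $O$-weighted ACDF $\wt{C_{O,2}}(x_{\mathsf{good}},x_{\mathsf{good}})$ evaluated at the good point plays the role of $\bra{\phi_0}F(H)\,O\,F(H)\ket{\phi_0}$, estimated by Hadamard tests on $\bra{\phi_0}e^{-ij\tau H}Oe^{-ij'\tau H}\ket{\phi_0}$ with importance sampling proportional to the Fourier coefficient magnitudes, followed by the same ratio-estimator error propagation. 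The only cosmetic deviations are your LCU-over-Paulis treatment of general $O$ (which the paper mentions but replaces by block-encoding precisely because the $(\sum_\ell|\beta_\ell|)^2$ overhead you note can be impractical) and your denominator $\bra{\phi_0}F(H)^2\ket{\phi_0}$ in place of the paper's single-filter 1-d ACDF, both of which approximate $p_0$ equally well.
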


We make a few remarks about our main result. First, we note that the maximal evolution time, which is the maximal length of time we need to perform coherent time evolution, can roughly determine the depth of the quantum circuit. Our result achieves a nearly-linear dependence on $\gamma^{-1}$ and only poly-logarithmic on the accuracy $\epsilon^{-1}$, which improves the $\wt{O}(\epsilon^{-1})$ maximal evolution time in the ground state energy estimation algorithms \cite{som19,lt21,cbk21,ral21}. Second, our result does not violate the Heisenberg limit because the total evolution time still depends on $\poly(\epsilon^{-1})$.  Third, similar to almost all prior works in ground state preparation and energy estimation (e.g. \cite{som19,lt20,lt21}), we need the assumption that the initial state has some \emph{nontrivial overlap} with the ground state, as otherwise the problem will become computationally intractable. Last, we consider the Hamiltonian as a black-box, which is a common model in this field. To implement our algorithm, for sparse local Hamiltonian, we can use the current state-of-the-art Hamiltonian simulation methods \cite{bcc15,lc17,cmn18,lc19} with gate complexity depending linearly in the evolution time and logarithmically in the accuracy. 

\paragraph{Comparison to the straightforward method. }We can compare our algorithm with the straightforward approach of GSPE that first prepares the ground state and then applies quantum phase estimation (QPE) to estimate the ground state property. 
\begin{itemize}
\item In the first step, to achieve an $\epsilon$-accuracy for the estimation, the ground state need to be prepared with fidelity at least $1-\epsilon$ using the methods in \cite{gtc19,lt20}, which have circuit depth $\wt{O}(\gamma^{-1}\eta^{-1})$ where $\eta$ is the overlap between the initial state and the ground state. 

\item In the second step, QPE \cite{kos07,ral21} requires circuit depth $\wt{O}(\epsilon^{-1})$ for an $\epsilon$-accuracy estimation for the ground state property. 
\end{itemize}
Therefore, this straightforward approach has circuit depth $\wt{O}(\gamma^{-1}\eta^{-1}+\epsilon^{-1})$, while our algorithm has circuit depth $\wt{O}(\gamma^{-1})$. Furthermore, they also need many (i.e., $\omega(1)$) additional ancilla qubits for preparing the ground state, while we only use one ancilla qubit. 
Our algorithm has a great advantage when the Hamiltonian's spectral gap is much larger than the estimation accuracy, making it easier to be implemented in the early fault-tolerant devices.

\paragraph{Organization.}
In Section \ref{sec:gspe} we formally state the problem of ground state property estimation.
In Section \ref{sec:gsee} we review the method developed in \cite{lt21} for estimating ground state energies.
In the next three sections we explain our main algorithms and give an analysis for their performances starting from the simplest case and building to the most-involved, general case.
Section \ref{sec:commutative_alg} presents the case of a unitary observable which commutes with the Hamiltonian.
Section \ref{sec:unitary_alg} presents the case of a unitary observable which does not necessarily commute with the Hamiltonian.
Section \ref{sec:general_alg} describes the case of a general observable.
Then, Section \ref{sec:apps} gives two applications of the ground state property estimation algorithm. Section~\ref{sec:discuss} gives a discussion of the results and presents some open questions. 
\section{Ground State Property Estimation Problem}
\label{sec:gspe}
In this section, we will formally define the ground state property estimation problem. This problem was initially studied by Ambainis \cite{amb14} as the approximate simulation problem (\textsf{APX-SIM}), and he proved that \textsf{APX-SIM} is $\mathsf{P^{QMA[\log]}}$-complete\footnote{$\mathsf{P^{QMA[\log]}}$ contains the problems with polynomial-time classical algorithms that are allowed to make $O(\log n)$ queries to an oracle solving a promise problem in \textsf{QMA}.}. 
\begin{problem}[Approximate simulation (\textsf{APX-SIM}), \cite{amb14}]\label{prob:app_sim}
Given a $k$-local Hamiltonian $H$, an $\ell$-local observable $O$, and real numbers $a,b,\epsilon$ such that $b-a\geq 1/\poly(n)$, and $\epsilon\geq 1/\poly(n)$, for $n$ the number of qubits the Hamiltonian $H$ acts on, decide:
\begin{itemize}
    \item \textbf{Yes case:} $H$ has a ground state $\ket{\psi_0}$ such that $\bra{\psi_0}O\ket{\psi_0}\leq a$,
    \item \textbf{No case:} for any state $\ket{\psi}$ with $\bra{\psi}H\ket{\psi}\leq \lambda_0 + \epsilon$ where $\lambda_0$ is the ground state energy of $H$, it holds that $\bra{\psi_0}O\ket{\psi_0}\geq b$.
\end{itemize}
\end{problem}

In the follow-up works, \textsf{APX-SIM} was shown to be $\mathsf{P^{QMA[\log]}}$-complete even for 5-local Hamiltonian and 1-local observable \cite{gy19}, and also for some physics models like 2D Heisenberg model and 1D nearest-neighbor, translationally invariant model \cite{gpy20,wbg20}. However, these previous studies only focused on the decision version of this problem. For the purpose of designing efficient algorithms, we first define the ``search version'' of \textsf{APX-SIM} as follows: 
\begin{problem}[Search version of \textsf{APX-SIM}]\label{prob:apx_sim_search}
Given a Hamiltonian $H$, an (local) observable $O$, and $\epsilon\in (0, 1)$, with $\Omega(1)$ probability, estimate $\bra{\psi_0} O \ket{\psi_0}$ with an additive/multiplicative error at most $\epsilon$.
\end{problem}

In general, Problem~\ref{prob:apx_sim_search} will not be more tractable than Problem~\ref{prob:app_sim}. Thus, we may need some prior information about the Hamiltonian $H$ and its ground state. Motivated by the widely used variational quantum eigensolver (VQE) \cite{pmsy14,mrba16} and the Hartree-Fock method \cite{so12} in quantum chemistry, it is often the case that for many real-world Hamiltonians, we are able to efficiently prepare an initial state $\ket{\phi_0}$ that has a nontrivial overlap with the ground state. Moreover, we assume that the Hamiltonian $H$ has a nontrivial spectral gap, where a large family of Hamiltonians in practice satisfy this condition. With these assumptions, we formally define the ground state property estimation problem as follows:

\begin{problem}[Ground state property estimation (GSPE)]\label{prob:gs_prop_est}
Given a Hamiltonian $H$ with spectral gap $\gamma$ and ground state $\ket{\psi_0}$, an observable $O$, a unitary $U_I$ such that it prepares an initial state $\ket{\phi_0}$ with $|\langle \phi_0|\psi_0\rangle|^2\geq \eta$, and $\epsilon\in (0,1)$, estimate $\bra{\psi_0} O\ket{\psi_0}$ with an additive/multiplicative error at most $\epsilon$ with $\Omega(1)$ probability.
\end{problem}

\begin{remark}
We notice that when $O=H$, Problem~\ref{prob:gs_prop_est} becomes the ground state energy estimation problem. Moreover, the prior knowledge of a large overlap for the initial state is required for all quantum algorithms with provable performance guarantees (e.g. \cite{gtc19,lt20,lt21}). It is also worth noting that even with these assumptions, it is unlikely to use a purely classical algorithm to estimate the ground state energy or property to high precision (unless $\mathsf{P}=\mathsf{BQP}$)~\cite{gg21}.
\end{remark}

We propose a high-accuracy, early fault-tolerant quantum algorithm for GSPE that satisfies the following properties:
\begin{itemize}
    \item The maximal evolution time depends \emph{logarithmically} on the accuracy $\epsilon$ and overlap $\eta$.
    \item In addition to the Hamiltonian evolution and observable implementation, it only uses one additional ancilla qubit.
\end{itemize}
\section{An Overview of the Low-Depth Ground State Energy Estimation}
\label{sec:gsee}
In this section, we provide a brief overview of the low-depth ground state energy estimation algorithm proposed by Lin and Tong \cite{lt21}. Our algorithms are inspired by this algorithm and use it as a subroutine.

More specifically, they showed that:
\begin{theorem}[\cite{lt21}]\label{thm:lt21_main}
Given a Hamiltonian $H$ with eigenvalues in the interval $[-\pi/3, \pi /3]$ and its ground state $\ket{\psi_0}$ has energy $\lambda_0$. And suppose we can prepare an initial state $\ket{\phi_0}$ such that $p_0\geq \eta$ for some known $\eta$, where $p_0:=|\langle \phi_0 | \psi_0\rangle|^2$. Then, for any $\epsilon, \nu\in (0, 1)$, there exists an algorithm that estimates $\lambda_0$ with an additive error $\epsilon$ with probability $1-\nu$, by running a parameterized quantum circuit with the maximum quantum evolution time $\wt{O}(\epsilon^{-1})$ and the expected total quantum evolution time $\wt{O}(\epsilon^{-1}\eta^{-2})$.
\end{theorem}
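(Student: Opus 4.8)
The plan is to reduce the problem of locating $\lambda_0$ to estimating a smoothed version of the spectral cumulative distribution function (CDF) of $\ket{\phi_0}$ relative to $H$. Expanding $\ket{\phi_0} = \sum_k c_k \ket{\psi_k}$ in the eigenbasis of $H$, with $p_k = |c_k|^2$ and $\lambda_0 < \lambda_1 \le \cdots$, the exact CDF $C(x) = \sum_{k:\lambda_k \le x} p_k$ is a step function whose first jump, of height $p_0 \ge \eta$, sits precisely at $\lambda_0$. Since $C$ is discontinuous and not directly accessible, I would instead work with an approximate CDF obtained by convolving the spectral measure with a finite Fourier approximation of the Heaviside step, which is exactly the object a Hadamard test can reach.

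First I would construct a degree-$d$ trigonometric polynomial $F(x) = \sum_{|j|\le d}\wh F_j\, e^{ijx}$, with $d = \wt{O}(\epsilon^{-1})$, that approximates the $2\pi$-periodic Heaviside step: $F(x)\in[0,\epsilon']$ on $[-\pi+\delta,-\delta]$ and $F(x)\in[1-\epsilon',1]$ on $[\delta,\pi-\delta]$, with transition width $\delta = \Theta(\epsilon)$ and Fourier $\ell^1$-norm $\|\wh F\|_1 = \polylog(\epsilon'^{-1})$; this follows from mollifying the sign function and truncating, needing degree $O(\delta^{-1}\log(\epsilon'^{-1}))$. Here the hypothesis that all eigenvalues lie in $[-\pi/3,\pi/3]$ is essential: restricting the evaluation points $x$ to the same window keeps every argument $x-\lambda_k \in [-2\pi/3,2\pi/3]$, away from the spurious transition of the periodic $F$ near $\pm\pi$. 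Defining the approximate CDF $\wt C(x) = \sum_k p_k\, F(x-\lambda_k)$, a direct rearrangement gives $\wt C(x) = \sum_{|j|\le d}\wh F_j\, e^{ijx}\,\langle \phi_0 | e^{-ijH} | \phi_0\rangle$. Because all eigenvalues lie to the right of $\lambda_0$, I would then verify the one-sided bounds $\wt C(x)\le \epsilon'$ for $x\le \lambda_0-\delta$ and $\wt C(x)\ge p_0 - \epsilon' \ge \eta-\epsilon'$ for $x\ge \lambda_0+\delta$, so that any crossing of the level $\eta/2$ is confined to the window $[\lambda_0-\delta,\lambda_0+\delta]$ of width $O(\epsilon)$.

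Next I would estimate $\wt C(x)$ pointwise. Each coefficient $\langle \phi_0 | e^{-ijH} | \phi_0\rangle$ is obtained from a Hadamard test using one ancilla and controlled $e^{-ijH}$, whose evolution time is $|j|\le d = \wt{O}(\epsilon^{-1})$, matching the claimed maximal evolution time. For an unbiased estimator of $\wt C(x)$ I would sample the index $j$ with probability proportional to $|\wh F_j|/\|\wh F\|_1$ and rescale the $\pm1$ Hadamard outcome; the estimator is bounded by $\|\wh F\|_1 = \polylog(\epsilon'^{-1})$, so by a Chernoff bound $\wt{O}(\eta^{-2})$ samples estimate $\wt C(x)$ to additive accuracy $\eta/4$ with high probability. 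I would then locate $\lambda_0$ by a robust binary search on an $O(1)$-length interval containing $[\lambda_0-\delta,\lambda_0+\delta]$: at each of the $O(\log(\epsilon^{-1}))$ steps, evaluate $\wt C$ at the midpoint to accuracy $\eta/4$ and retain the half on which the level-$\eta/2$ crossing must lie, using the two one-sided bounds to ensure the crossing is never discarded. This terminates with an interval of width $O(\epsilon)$, giving $\lambda_0$ to additive error $\epsilon$; a union bound over the search steps (or median-of-means amplification) pushes the total failure probability below $\nu$ at the cost of a $\log(\nu^{-1})$ factor. Multiplying the per-evaluation count $\wt{O}(\eta^{-2})$ by the maximal time $\wt{O}(\epsilon^{-1})$ yields expected total evolution time $\wt{O}(\epsilon^{-1}\eta^{-2})$.

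I expect the main obstacle to be the interplay between the soft transition region and the binary search. The approximate CDF is only provably well-separated outside a window of width $2\delta$, so the search must be engineered so that statistical fluctuations of size $\eta/4$ in the estimates of $\wt C$ can never exclude the true crossing, and the parameters $d$, $\delta$, and $\epsilon'$ must be balanced so that the evolution time stays $\wt{O}(\epsilon^{-1})$ while the crossing remains trapped in an $O(\epsilon)$ window. Quantitatively, the crux is controlling $\|\wh F\|_1$ (which governs the estimator variance and hence the $\eta^{-2}$ sample count) and cleanly propagating the per-step failure probability through all binary-search stages.
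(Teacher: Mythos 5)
Your proposal is correct and follows essentially the same route as the paper's proof (Appendix~\ref{sec:lt21_details}, after \cite{lt21}): a low-Fourier-degree approximation of the periodic Heaviside function convolved with the spectral measure, Hadamard tests for $\bra{\phi_0}e^{-ijH}\ket{\phi_0}$ with $j$ importance-sampled proportionally to $|\wh{F}_j|$, and a robust binary search that inverts the approximate CDF at the level set determined by $\eta$. All the quantitative ingredients match the paper's, including $d=\wt{O}(\epsilon^{-1})$ for the maximal evolution time, the $\polylog$ bound on $\sum_j|\wh{F}_j|$ controlling the estimator variance, the $\wt{O}(\eta^{-2})$ sample count with Chernoff-type amplification, and the resulting $\wt{O}(\epsilon^{-1}\eta^{-2})$ expected total evolution time.
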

The pseudo-code of their algorithm is given in Algorithm~\ref{alg:gs_energy}.

\begin{algorithm}[ht]
\caption{Ground State Energy Estimation}
\label{alg:gs_energy} 
\begin{algorithmic}[1]
    \algrenewcommand\algorithmicprocedure{\textbf{procedure}}
	\Procedure{EstimateGSE}{$\epsilon,\tau, \eta, \nu$}
	    \State \Comment{Initialization}
	    \State $d\gets O(\delta^{-1}\log(\delta^{-1}\eta^{-1}))$, $\delta \gets \tau \epsilon$\label{ln:set_d}
	    \For{$i\gets -d,\dots,d$}
	        \State $\hat{F}_i\gets \hat{F}_{d,\delta,i}$
	        \State Compute $\theta_i$, the phase angle of $\hat{F}_i$
	    \EndFor
	    \State ${\cal F}\gets \sum_{|i|\leq d}|\hat{F}_i|$
	    \State $N_b\gets \Omega(\log(1/\nu)+\log\log(1/\delta))$, $N_s\gets O(\eta^{-2}\log^2(d))$
	    \State \Comment{Sampling from the quantum circuit}
	    \For{$k\gets 1,\dots,N_bN_s$}
	        \State Independently sample $J_k\sim [-d,d]$ with $\Pr[J_k=j]\propto |\hat{F}_j|$
	        \State Measure $(X_k,Y_k)$ by running the quantum circuit with (Figure.~\ref{fig:hadamard_test}) parameter $k$ 
	        \State $Z_k\gets X_k + iY_k$ 
	    \EndFor
	    \State \Comment{Classical post-processing}
        \State $x_{L}\gets -\pi/3$, $X_{R}\gets \pi/3$
        \While{$x_R - x_L > 2\delta$}\label{ln:inv_cdf_while}\Comment{Invert CDF}
            \State $x_M\gets (x_L + x_R)/2$
            \For{$r\gets 1,\dots, N_b$}
                \State $\ov{G}_r\gets \frac{{\cal F}}{N_s}\sum_{k=(r-1)N_s + 1}^{rN_s} Z_k e^{i(\theta_{J_k} + J_k x_M)}$\Comment{Multi-level Monte Carlo method}\label{ln:mlmc}
            \EndFor
            \If{$|\{r:\ov{G}_r\geq (3/4)\eta\}|\leq N_b / 2$}
                \State $x_R\gets x_M + (2/3)\delta$
            \Else 
                \State $x_L\gets x_M - (2/3)\delta$
            \EndIf
        \EndWhile
        \State \Return $(x_L + x_R)/2$
	\EndProcedure
\end{algorithmic}
\end{algorithm}

The main technique of their algorithm is a classical post-processing procedure that extracts information from the following Hadamard test circuit (Figure~\ref{fig:hadamard_test}).
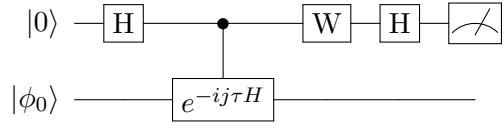
\begin{figure}[H]
    \centering
		\begin{displaymath}
		\Qcircuit @C=1.0em @R=1.2em {
			& & & &\\			
			\lstick{\ket{0}}
			&\gate{\mathrm{H}}	 &\ctrl{1}	& \gate{\mathrm{W}}
			& \gate{\mathrm{H}}			&\meter\\
			\lstick{\ket{\phi_0}} 	 & \qw & \gate{e^{-ij\tau H}} 		 
			&\qw &\qw &\qw
		}		
		\end{displaymath}		
\caption{Quantum circuit parameterized by $j$. $\mathrm{H}$ is the Hadamard gate and $\mathrm{W}$ is either $I$ or a phase gate. A detailed analysis of this circuit is given in Appendix~\ref{sec:quantum_hadamard}.}
    \label{fig:hadamard_test}
\end{figure}

Let the initial state $\ket{\phi_0}$ be expanded as $\ket{\phi_0}=\sum_k \alpha_k \ket{\psi_k}$ in the eigen-basis of $H$ and let $p_k:=|\alpha_k|^2$ be the overlap with the $k$-th eigenstate. They considered the overlaps $p_0,p_1,\dots$ as a density function:
\begin{align}
    p(x):=\sum_k p_k \delta(x- \lambda_k)~~~\forall x\in [-\pi,\pi].
\end{align}
Then, the cumulative distribution function (CDF) $C(x):=\int_{-\pi}^x p(t)\d t$ can be expressed as a convolution of $p(x)$ and the $2\pi$-periodic Heaviside function $H(x)$, which is 0 in $[(2k-1)\pi, 2k\pi)$ and  1 in $[2k\pi, (2k+1)\pi)$ for any $k\in \Z$. Thus, $C(x)$ is also a periodic function, which makes it convenient to apply the Fourier approximation. They showed that $H(x)$ can be approximated by a low-Fourier degree function $F(x)$ in the intervals $[-\pi+\delta, -\delta]$ and $[\delta, \pi-\delta]$. Then, they defined the approximated cumulative distribution function (ACDF) as $\wt{C}(x) := (F \star p)(x)$ and proved that 
\begin{align}
    C(x-\delta) -\eta/8 \leq \wt{C}(x) \leq C(x+\delta) + \eta/8~~~\forall x\in [-\pi/3, \pi/3].
\end{align}
Moreover, for each $x$, we have
\begin{align}
    \wt{C}(x) = \sum_{|j|\leq d} \hat{F}_j e^{ijx} \cdot \bra{\phi_0}e^{-ijH}\ket{\phi_0},
\end{align}
where $\hat{F}_j$ is the Fourier coefficient of $F(x)$. Note that $\bra{\phi_0} e^{-ijH}\ket{\phi_0}$ can be estimated via the parameterized quantum circuit (Figure~\ref{fig:hadamard_test}). Hence, we can estimate the ACDF at every point in $[-\pi/3, \pi/3]$. Moreover, they showed that the multi-level Monte Carlo method can be applied here to save the number of samples needed to achieve a high-accuracy estimation (Line~\ref{ln:mlmc}).

Therefore, we can estimate the ground state energy $\lambda_0$ by locating the first non-zero point of the CDF $C(x)$, which is $\eta/8$-approximated by the ACDF $\wt{C}(x)$. Since we assume that $p_0\geq \eta$, the approximation error and the estimation error of $\wt{C}(x)$ can be tolerated, and we can find $\lambda_0$ via a robust binary search (Line~\ref{ln:inv_cdf_while}). 

We note that the maximal evolution time of this algorithm corresponds to the Fourier degree of $F(x)$, which is $\wt{O}(\epsilon^{-1})$ by the construction, making their algorithm suitable for early fault-tolerant quantum devices.
More details of this algorithm and the proofs are given in Appendix~\ref{sec:lt21_details}.
\section{Algorithm for Commutative Case}
\label{sec:commutative_alg}
In this section, we consider a easier case that $O$ is unitary and commutes with the Hamiltonian $H$, and give a two-step quantum-classical hybrid algorithm for Problem~\ref{prob:gs_prop_est}. More specifically, suppose the initial state can be expanded in the eigenbasis as follows: $\ket{\phi_0}=\sum_{k} c_k \ket{\psi_k}$ with $p_k:=|c_k|^2$. We note that $\{\ket{\psi_k}\}$ is also an eigenbasis of $O$ since $O$ and $H$ commute. In Step 1, we run \cite{lt21}'s algorithm to estimate the ground state energy $\lambda_0$ and the overlap between the initial state and the ground state $p_0$. In Step 2, we construct a similar CDF function for the density $\sum_k O_k p_k \delta(x-\lambda_k)$, where $O_k := \bra{\psi_k} O\ket{\psi_k}$. If we evaluate the CDF at $\lambda_0$, we can obtain an estimate of $O_0$.

\subsection{Step 1: estimate the initial overlap}\label{sec:est_overlap} 
We first run the procedure $\textsc{EstimateGSE}$ (Algorithm~\ref{alg:gs_energy}) to estimate the ground state energy $\lambda_0$ with an additive error $\epsilon$. Let $x^\star$ be the output.
We remark that $x^\star$ satisfy $C(x^\star + \tau \epsilon)\geq p_0$ and $C(x^\star - \tau \epsilon) = 0$. However, we can only extract $p_0$ from the ACDF $\wt{C}(x)$, which satisfies:
\begin{align}
    C(x-\tau\epsilon) -\eta/8 \leq \wt{C}(x)\leq C(x+\tau\epsilon) + \eta/8~~~\forall x\in [-\pi/3, \pi/3].
\end{align}
If $[x-\tau \epsilon, x+\tau\epsilon]$ contains a ``jump'' of $C(x)$, i.e., an eigenvalue $\lambda_k$, then the approximation error of $\wt{C}(x)$ will be large.

Hence, we say a point $x$ is ``good'' for $\lambda_k$ if $[x-\tau \epsilon, x+\tau\epsilon]$ is contained in $[\tau \lambda_k, \tau \lambda_{k+1})$. It is easy to see that $\wt{C}(x)$ will be an $\eta/8$-additive approximation of $\sum_{j\leq k}p_k$ if $x$ is good. Our goal is to find an $x_{\mathsf{good}}$ that is good for $\lambda_0$, and estimating $\wt{C}(x_{\mathsf{good}})$ gives the overlap $p_0$. The following claim gives a way to construct $x_{\mathsf{good}}$ using the spectral gap of $H$.
\begin{claim}[Construct $x_{\mathsf{good}}$]\label{clm:good_x}
Let $\gamma$ be the spectral gap of the Hamiltonian $H$. For any  $\epsilon\in (0, \gamma /4)$, $x^\star + \tau \gamma /2$ is good for $\lambda_0$, where $x^\star$ is the output of $\textsc{EstimateGSE}(\epsilon, \eta)$ (Algorithm~\ref{alg:gs_energy}).
\end{claim}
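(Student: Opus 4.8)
The plan is to reduce the claim to two elementary interval inequalities. First I would translate the two stated properties of the output $x^\star$ of $\textsc{EstimateGSE}$ into a two-sided bound on $x^\star$. Since the CDF $C$ is the step function that is zero below $\tau\lambda_0$ and equals $p_0$ on the interval $[\tau\lambda_0, \tau\lambda_1)$, the condition $C(x^\star - \tau\epsilon) = 0$ forces $x^\star - \tau\epsilon < \tau\lambda_0$, while the condition $C(x^\star + \tau\epsilon) \geq p_0$ forces $x^\star + \tau\epsilon \geq \tau\lambda_0$. Together these give
\begin{align}
\tau(\lambda_0 - \epsilon) \leq x^\star < \tau(\lambda_0 + \epsilon),
\end{align}
i.e. $x^\star$ lies within $\tau\epsilon$ of the rescaled ground-state energy $\tau\lambda_0$.

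Next, writing $x_{\mathsf{good}} := x^\star + \tau\gamma/2$, I would verify directly that $[x_{\mathsf{good}} - \tau\epsilon, x_{\mathsf{good}} + \tau\epsilon] \subseteq [\tau\lambda_0, \tau\lambda_1)$, which is precisely the definition of being good for $\lambda_0$. For the left endpoint, the lower bound on $x^\star$ yields $x_{\mathsf{good}} - \tau\epsilon \geq \tau\lambda_0 + \tau(\gamma/2 - 2\epsilon)$, which is at least $\tau\lambda_0$ exactly because $\epsilon < \gamma/4$. For the right endpoint, the upper bound on $x^\star$ yields $x_{\mathsf{good}} + \tau\epsilon < \tau\lambda_0 + \tau(\gamma/2 + 2\epsilon)$; invoking the spectral gap in the form $\lambda_1 - \lambda_0 \geq \gamma$, this quantity is strictly below $\tau\lambda_1$ again exactly when $2\epsilon < \gamma/2$, i.e. when $\epsilon < \gamma/4$.

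Combining the two endpoint bounds shows that the entire window $[x_{\mathsf{good}} - \tau\epsilon, x_{\mathsf{good}} + \tau\epsilon]$ sits inside $[\tau\lambda_0, \tau\lambda_1)$, which completes the proof. There is no genuine obstacle here: the argument is a bookkeeping exercise with the inequalities above. The one point that requires care is matching the \emph{half-open} target interval, so that the right endpoint must satisfy a \emph{strict} inequality; this is why the upper bound on $x^\star$ must be kept strict. It is also worth noting that the shift $\tau\gamma/2$ is the natural choice because it places the center of the uncertainty window at $\tau\lambda_0 + \tau\gamma/2$, the midpoint between $\tau\lambda_0$ and $\tau\lambda_1$, so both endpoint constraints reduce symmetrically to the single requirement $\epsilon < \gamma/4$.
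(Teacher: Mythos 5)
Your proposal is correct and follows essentially the same route as the paper's proof: both translate the guarantees on $x^\star$ into the two-sided bound $\tau\lambda_0 - \tau\epsilon \leq x^\star < \tau\lambda_0 + \tau\epsilon$ and then check the two endpoint inequalities for $x^\star + \tau\gamma/2$, with $\epsilon < \gamma/4$ and the gap $\lambda_1 - \lambda_0 \geq \gamma$ entering in exactly the same places. The only (harmless) difference is that you spell out how the bound on $x^\star$ follows from $C(x^\star - \tau\epsilon) = 0$ and $C(x^\star + \tau\epsilon) \geq p_0$, which the paper simply asserts, and you keep the left endpoint non-strict, which the half-open target interval $[\tau\lambda_0, \tau\lambda_1)$ indeed permits.
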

\begin{proof}
We know that $x^\star$ satisfies:
\begin{align}
    x^\star - \tau \epsilon < \tau\lambda_0 \leq x^\star + \tau \epsilon.
\end{align}
Then, we have
\begin{align}
    x^\star + \tau \gamma/2 > \tau\lambda_0 - \tau \epsilon + \tau \gamma/2 > \tau \lambda_0+\tau \epsilon.
\end{align}
We also have
\begin{align}
    x^\star + \tau \gamma/2 < &~ \tau \lambda_0 + \tau \epsilon + \tau \gamma/2 \\
    \leq &~ \tau (\lambda_1 - \gamma) + \tau \epsilon + \tau \gamma /2\notag\\
    = &~ \tau \lambda_1 + \tau (\epsilon - \gamma/2)\notag\\
    < &~ \tau \lambda_1-\tau \epsilon.
\end{align}
Therefore, $x^\star$ is good for $\lambda_0$.
\end{proof}

We note that in \cite{lt21}, the ACDF's approximation error is chosen to be $\eta/8$. We may directly change it to $\epsilon\eta/8$ without significantly changing the circuit depth, since by Lemma~\ref{lem:approx_Heaviside} the degree of $F$ can only blowup by a log factor of $\epsilon$.

\begin{lemma}[Estimating the overlap]\label{lem:est_overlap}
For any $\epsilon_0,\nu\in (0, 1)$, the overlap $p_0:=|\langle \phi_0|\psi_0\rangle|^2$ can be estimated with multiplicative error $1\pm O(\epsilon_0)$ with probability $1-\nu$ by running the quantum circuit (Figure~\ref{fig:hadamard_test}) $\wt{O}(\epsilon_0^{-2}\eta^{-2})$ times with expected total evolution time $\wt{O}(\gamma^{-1}\epsilon^{-2}\eta^{-2})$ and maximal evolution time $O(\gamma^{-1})$.
\end{lemma}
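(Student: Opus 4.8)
The plan is to reduce the overlap estimation to a single, well-chosen evaluation of the ACDF $\wt C$, and then to control both the statistical error (from sampling the Hadamard-test circuit) and the systematic error (from the Fourier approximation of the Heaviside function) so that their combined effect yields a $1\pm O(\epsilon_0)$ multiplicative estimate of $p_0$. First I would invoke Claim~\ref{clm:good_x}: with the output $x^\star$ of \textsc{EstimateGSE}, the point $x_{\mathsf{good}}:=x^\star+\tau\gamma/2$ is good for $\lambda_0$, meaning the interval $[x_{\mathsf{good}}-\tau\epsilon,\,x_{\mathsf{good}}+\tau\epsilon]$ sits strictly between $\tau\lambda_0$ and $\tau\lambda_1$. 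Consequently $\wt C(x_{\mathsf{good}})$ approximates $C$ evaluated in the flat region just above the first jump, where $C=p_0$ exactly. So the target reduces to estimating $\wt C(x_{\mathsf{good}})=\sum_{|j|\le d}\hat F_j e^{ijx_{\mathsf{good}}}\bra{\phi_0}e^{-ijH}\ket{\phi_0}$ to sufficient accuracy.

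Next I would sharpen the approximation error. As the remark preceding the lemma points out, running \cite{lt21} with an ACDF error of $\epsilon_0\eta/8$ rather than $\eta/8$ costs only a logarithmic-in-$\epsilon_0$ blowup in the Fourier degree $d$ (via Lemma~\ref{lem:approx_Heaviside}), so $\wt C(x_{\mathsf{good}})$ lands within additive $\epsilon_0\eta/8$ of $p_0$. Since $p_0\ge\eta$, this additive error is at most $\epsilon_0 p_0/8$, i.e.\ it is already multiplicative at the required level; the maximal evolution time, being proportional to $d$, remains $\wt O(\gamma^{-1})$ because $x_{\mathsf{good}}$ was built from $\tau\gamma/2$ and the degree scales as $\wt O(\delta^{-1})=\wt O((\tau\epsilon)^{-1})$ with $\epsilon=\Theta(\gamma)$.

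For the statistical side I would estimate each $\bra{\phi_0}e^{-ijH}\ket{\phi_0}$ through the random-sampling scheme of Algorithm~\ref{alg:gs_energy}: sample $J_k$ with probability $\propto|\hat F_{J_k}|$, run the Hadamard-test circuit to obtain $Z_k=X_k+iY_k$, and form the unbiased estimator $\tfrac{\mathcal F}{N_s}\sum_k Z_k e^{i(\theta_{J_k}+J_k x_{\mathsf{good}})}$ whose expectation is exactly $\wt C(x_{\mathsf{good}})$. A variance computation gives $\Var\le\mathcal F^2/N_s$ with $\mathcal F=\wt O(1)$, so a median-of-means amplification over $N_b=\Omega(\log(1/\nu))$ blocks, each of size $N_s=\wt O(\epsilon_0^{-2}\eta^{-2})$, drives the estimate to within additive $\epsilon_0\eta/8$ with probability $1-\nu$. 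Combining the two error sources and dividing through by $p_0\ge\eta$ yields the multiplicative $1\pm O(\epsilon_0)$ guarantee. The total number of circuit repetitions is $N_bN_s=\wt O(\epsilon_0^{-2}\eta^{-2})$, and since each run evolves for expected time $\E[|J_k|]\tau=\wt O(\gamma^{-1})$, the expected total evolution time is the product $\wt O(\gamma^{-1}\epsilon_0^{-2}\eta^{-2})$.

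I expect the main obstacle to be the conversion from additive to multiplicative accuracy together with the error budgeting: one must verify that choosing the ACDF tolerance as $\epsilon_0\eta/8$ (rather than the $\eta/8$ of the energy-estimation routine) genuinely costs only a $\polylog(\epsilon_0^{-1})$ factor in $d$ and hence leaves the maximal evolution time at $O(\gamma^{-1})$, while simultaneously confirming that the good-point property of Claim~\ref{clm:good_x} holds under the reduced tolerance so that $\wt C(x_{\mathsf{good}})$ really equals $p_0$ up to the allotted error. The variance bound and median-of-means amplification are routine once $\mathcal F=\wt O(1)$ is established.
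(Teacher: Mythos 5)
Your proposal is correct and follows essentially the same route as the paper's proof: construct $x_{\mathsf{good}}$ via Claim~\ref{clm:good_x} from the output of \textsc{EstimateGSE} with accuracy $O(\gamma)$, tighten the ACDF tolerance to $O(\epsilon_0\eta)$ at only a logarithmic cost in the Fourier degree $d$ (Lemma~\ref{lem:approx_Heaviside}), estimate $\wt{C}(x_{\mathsf{good}})$ with the unbiased importance-sampled Hadamard-test estimator of variance $\wt{O}(1)$ combined with median-of-means, and divide by $p_0\geq\eta$ to convert the additive $\epsilon_0\eta$ error into a $1\pm O(\epsilon_0)$ multiplicative one. Your bookkeeping of the per-block sample size as $\wt{O}(\epsilon_0^{-2}\eta^{-2})$ and of the expected per-run evolution time $\tau\E[|J|]=\wt{O}(\gamma^{-1})$ matches (and, if anything, states more carefully) the paper's cost accounting.
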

\begin{proof}
By Claim~\ref{clm:good_x}, if we set the additive error of ground state energy $\lambda_0$ to be $O(\gamma)$, then we can construct an  $x_{\mathsf{good}}$ that is good for $\lambda_0$. By Theorem~\ref{thm:lt21_main}, it can be done with maximum quantum evolution time $\wt{O}(\gamma^{-1})$ and the expected total quantum evolution time $\wt{O}(\gamma^{-1}\eta^{-2})$. Notice that we need to take $d=O(\delta^{-1}\log(\delta^{-1}\epsilon_0^{-1}\eta^{-1}))$ (Line~\ref{ln:set_d} in Algorithm~\ref{alg:gs_energy}) to make $\wt{C}(x_{\mathsf{good}})$ be an $O(\epsilon_0 \eta)$-approximation of $p_0$, where $\delta = \tau \gamma$. 

Next, we estimate $\wt{C}(x_{\mathsf{good}})$ with additive error $\eta \epsilon$ with probability $1-\nu$. We have an unbiased estimator 
\begin{align}
    \ov{G}(x; \mathbf{Z,J})={\cal F}\mathbf{Z}e^{i\theta_{\mathbf{J}}+\mathbf{J}x} 
\end{align}
for $\wt{C}(x)$, where ${\bf Z}:=X+iY$ is measured from the Hadamard test, and ${\bf J}$ is a random variable for the Hamiltonian evolution time sampled proportional to the Fourier weight of $F$, i.e., $\Pr[{\bf J}=j]=|\hat{F}_j|/{\cal F}$ for $-d\leq j\leq d$ and ${\cal F}:=\sum_{|j|\leq d}|\hat{F}_j|$.   

We can show that $\ov{G}(x; \mathbf{Z,J})$ has variance $O(\log^2(d))$, and one estimate can be obtained with evolution time $\wt{O}(\tau d/\log (d))$ in expectation. If we repeatedly sample $\ov{G}(x; \mathbf{Z,J})$ and take the mean of them, then by Chebyshev's inequality, the sample complexity is $\wt{O}(\epsilon_0^{-2}\eta^{-2} \nu^{-2})$ to have an additive error $O(\epsilon_0\eta)$ with probability $1-\nu$. 

Instead, we can use the so-called ``median-of-means'' trick to reduce the sample complexity. More specifically, let $N_g=O(\log(1/\nu))$ and $K=O(\epsilon_0^{-2})$. We first partition $m=N_gK$ samples $(Z_1, J_1),\dots,(Z_{m},J_m)$ into $N_g$ groups of size $K$. Then, for any $i\in [N_g]$, the $i$-th group mean is 
\begin{align}
    \ov{G}_i := \frac{1}{K}\sum_{j=1}^{K} \ov{G}(x; Z_{(i-1)K + j}, J_{(i-1)K + j}).
\end{align}
The final estimator is given by the median of these group means, i.e.,
\begin{align}
    \ov{G}(x):=\mathrm{median}(\ov{G}_1,\dots,\ov{G}_{N_g}).
\end{align}
By Chernoff bound, it is easy to see that $\ov{G}(x)$ has an additive error at most $(\eta\epsilon_0)$ with probability $1-\nu$. It will imply that multiplicative error is at most $1\pm O(\epsilon_0)$ since $p_0=\Theta(\eta)$. And the sample complexity of $\ov{G}(x)$ is $\wt{O}(\epsilon_0^{-2}\eta^{-2})$.
Hence, the expected total evolution time is $\wt{O}(\gamma^{-1}\epsilon_0^{-2}\eta^{-2})$. Since we run the same quantum circuit to estimate $\ov{G}(x)$, the maximal evolution time is still $\wt{O}(\gamma^{-1})$.
\end{proof}

\subsection{Step 2: estimate the $O$-weighted CDF}

To estimate the expectation value of $O$, consider the following quantum circuit:
\begin{figure}[ht]
    \centering
		\begin{displaymath}
		\Qcircuit @C=1.0em @R=1.2em {
			& & & &\\			
			\lstick{\ket{0}}
			&\gate{\mathrm{H}}	 &\ctrl{1}	& \ctrl{1} & \gate{\mathrm{W}}
			& \gate{\mathrm{H}}			&\meter\\
			\lstick{\ket{\phi_0}} 	 & \qw & \gate{e^{-ij\tau H}} & \gate{O}		 
			&\qw &\qw &\qw
		}		
		\end{displaymath}		
    \caption{Quantum circuit parameterized by $j$. $\mathrm{H}$ is the Hadamard gate and $\mathrm{W}$ is either $I$ or a phase gate $S$.}
    \label{fig:hadamard_test_o}
\end{figure}
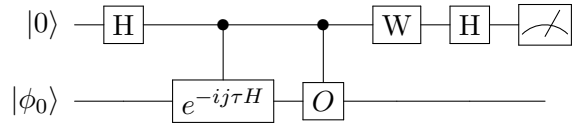

Define the random variables $X_j, Y_j$ be as follows: for $W=I$, $X_j:=1$ if the outcome is 0, and $X_j:=-1$ if the outcome is 1. For $W=S$, $Y_j:=-1$ if the outcome is 0, and $Y_j:=1$ if the outcome is 1. 

Then, we have the following claim on the expectation of the random variables $X_j,Y_j$:
\begin{claim}[A variant of Hadamard test]\label{clm:estimator_expectation_observable}
For any $j\in \Z$, the random variable $X_j + i Y_j$ is an un-biased estimator for $\bra{\phi_0} O e^{-ij\tau H}\ket{\phi_0}$.
\end{claim}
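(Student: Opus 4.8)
The plan is to analyze the circuit in Figure~\ref{fig:hadamard_test_o} directly by tracking the state through each gate, following the standard Hadamard-test computation but with the extra controlled-$O$ gate inserted. I would start from the joint state $\ket{0}\otimes\ket{\phi_0}$ on the ancilla and system registers. After the first Hadamard gate the ancilla becomes $\tfrac{1}{\sqrt2}(\ket{0}+\ket{1})$. The two controlled gates, $\mathrm{ctrl}\text{-}e^{-ij\tau H}$ followed by $\mathrm{ctrl}\text{-}O$, act as identity on the $\ket{0}$-branch and apply $O e^{-ij\tau H}$ to $\ket{\phi_0}$ on the $\ket{1}$-branch, yielding $\tfrac{1}{\sqrt2}\bigl(\ket{0}\ket{\phi_0}+\ket{1}\,O e^{-ij\tau H}\ket{\phi_0}\bigr)$.

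Next I would apply the $\mathrm{W}$ gate and the final Hadamard to the ancilla and compute the measurement probabilities. For the case $\mathrm{W}=I$, a short calculation shows that the probability of outcome $0$ minus the probability of outcome $1$ equals $\operatorname{Re}\bra{\phi_0}O e^{-ij\tau H}\ket{\phi_0}$; with the sign convention defining $X_j$ ($+1$ for outcome $0$, $-1$ for outcome $1$), this gives $\E[X_j]=\operatorname{Re}\bra{\phi_0}O e^{-ij\tau H}\ket{\phi_0}$. For the case $\mathrm{W}=S$, the phase gate rotates the relative phase by $i$, so the analogous difference of probabilities picks out the imaginary part; with the sign convention for $Y_j$ ($-1$ for outcome $0$, $+1$ for outcome $1$), one obtains $\E[Y_j]=\operatorname{Im}\bra{\phi_0}O e^{-ij\tau H}\ket{\phi_0}$. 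Combining the two gives $\E[X_j+iY_j]=\operatorname{Re}(\cdot)+i\operatorname{Im}(\cdot)=\bra{\phi_0}O e^{-ij\tau H}\ket{\phi_0}$, which is the claim.

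The only genuine subtlety, and the step I would be most careful about, is verifying that the sign conventions stated just before the claim are consistent with the action of the phase gate $S$, so that the imaginary part comes out with the correct overall sign rather than its negative. Since $O$ need not be Hermitian and $e^{-ij\tau H}$ is not self-adjoint, the quantity $\bra{\phi_0}O e^{-ij\tau H}\ket{\phi_0}$ is genuinely complex, so both the real and imaginary parts carry information and the relative sign between the two Hadamard-test variants must match exactly. I would therefore write out the $\mathrm{W}=S$ branch explicitly, confirming that $S$ contributes the factor $i$ to the $\ket{1}$-branch amplitude before the final Hadamard, and check that after the final Hadamard the outcome-$0$-minus-outcome-$1$ difference equals $-\operatorname{Im}\bra{\phi_0}O e^{-ij\tau H}\ket{\phi_0}$, which is precisely compensated by the flipped sign convention for $Y_j$. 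The remainder of the argument is routine linear algebra and requires no additional ideas; the controlled-$O$ insertion changes nothing structurally compared to the original Hadamard test of Figure~\ref{fig:hadamard_test}, merely replacing $e^{-ij\tau H}$ by $O e^{-ij\tau H}$ in the relevant matrix element.
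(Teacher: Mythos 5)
Your proposal is correct and takes essentially the same route as the paper: the deferred proof in Appendix~\ref{sec:quantum_hadamard} is precisely this state-tracking Hadamard-test computation for the base circuit of Figure~\ref{fig:hadamard_test}, and the claim follows by identical algebra with $e^{-ij\tau H}$ replaced by the (still unitary) $V = O e^{-ij\tau H}$. Your careful check of the $W=S$ branch is also right and consistent with the convention stated just before the claim ($Y_j=-1$ on outcome $0$, $+1$ on outcome $1$), which exactly compensates the difference $\Pr[0]-\Pr[1]=-\operatorname{Im}\bra{\phi_0}O e^{-ij\tau H}\ket{\phi_0}$ you computed.
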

The proof is deferred to Appendix~\ref{sec:quantum_hadamard}.

We can expand $\bra{\phi_0}O e^{-ij\tau H} \ket{\phi_0}$ in the eigenbasis of $H$ (which is also an eigenbasis of $O$):
\begin{align}
    \bra{\phi_0}O e^{-ij\tau H} \ket{\phi_0} = &~ \sum_{k,k'} c_k^* c_{k'}e^{-ij\tau \lambda_k} \bra{\psi_k} O \ket{\psi_k'}\notag\\
    =&~ \sum_k p_k O_k e^{-ij\tau \lambda_k}, 
\end{align}
where the last step follows from the simultaneous diagonalization of $O$ and $H$, and $O_k:=\bra{\psi_k} O \ket{\psi_k}$. We may assume that $|O_k|\leq 1$ for any $k\in \mathbb{N}$.

Inspired by the ground state energy estimation algorithm in \cite{lt21}, we define the $O$-weighted ``density function'' for the observable as follows:
\begin{align}
    p_O(x) := \sum_k p_k O_k \delta(x - \tau \lambda_k).
\end{align}
Note that $p_O(x)$ can be negative at some points.

Suppose the eigenvalues of $\tau H$ is within $[-\pi/3, \pi/3]$. Then,  we define the $O$-weighted CDF and ACDF for $p_O(x)$ similar to \cite{lt21}:
\begin{align}
    C_O(x):=(H * p_O)(x), ~~~\wt{C_O}(x) := (F * p_O)(x),
\end{align}
where $H$ is the $2\pi$-periodic Heaviside function and $F=F_{d,\delta}$ is the Fourier approximation of $H$ constructed by Lemma~\ref{lem:approx_Heaviside}. It is easy to verify that $C_O(x)$ equals to $\sum_{k}p_kO_k\mathbf{1}_{x\geq p_kO_k}$ for any $x\in [-\pi/3, \pi/3]$. 

The following lemma gives an unbiased estimator for the $O$-weighted ACDF.

\begin{lemma}[Estimating the $O$-weighted ACDF]\label{lem:est_acdf_observable}
For any $x\in [-\pi, \pi]$, there exists an unbiased estimator $\ov{G_O}(x)$ for the $O$-weighted ACDF $\wt{C_O}(x)$ with variance $\wt{O}(1)$. 

Furthermore, $\ov{G_O}(x)$ runs the quantum circuit (Figure~\ref{fig:hadamard_test_o}) with expected total evolution time $O(\tau d/\log(d))$, where $d$ is the Fourier degree of $F$ .
\end{lemma}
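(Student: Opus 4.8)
The plan is to mirror the unbiased-estimator construction of Lemma~\ref{lem:est_overlap}, swapping the Hadamard-test circuit of Figure~\ref{fig:hadamard_test} for its controlled-$O$ variant in Figure~\ref{fig:hadamard_test_o}. First I would rewrite $\wt{C_O}(x)$ as an explicit truncated Fourier series in the measured loop amplitudes. Expanding $F(x)=\sum_{|j|\le d}\hat{F}_j e^{ijx}$ and inserting $p_O(y)=\sum_k p_k O_k\,\delta(y-\tau\lambda_k)$ into $\wt{C_O}(x)=(F*p_O)(x)$, the convolution collapses to
$$\wt{C_O}(x)=\sum_{|j|\le d}\hat{F}_j e^{ijx}\sum_k p_k O_k e^{-ij\tau\lambda_k}=\sum_{|j|\le d}\hat{F}_j e^{ijx}\bra{\phi_0}Oe^{-ij\tau H}\ket{\phi_0},$$
where the second equality is exactly the eigenbasis expansion already computed. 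This has the same shape as the ACDF in \cite{lt21}, with the amplitude $\bra{\phi_0}e^{-ijH}\ket{\phi_0}$ replaced by $\bra{\phi_0}Oe^{-ij\tau H}\ket{\phi_0}$.

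Next I would define the estimator. Writing $\hat{F}_j=|\hat{F}_j|e^{i\theta_j}$ and $\mathcal{F}=\sum_{|j|\le d}|\hat{F}_j|$, I sample $J$ with $\Pr[J=j]=|\hat{F}_j|/\mathcal{F}$, run the circuit of Figure~\ref{fig:hadamard_test_o} at parameter $J$ to obtain $Z=X_J+iY_J$, and set
$$\ov{G_O}(x):=\mathcal{F}\cdot Z\cdot e^{i(\theta_J+Jx)}.$$
Unbiasedness follows from the law of total expectation: conditioning on $J=j$, Claim~\ref{clm:estimator_expectation_observable} gives $\E[Z\mid J=j]=\bra{\phi_0}Oe^{-ij\tau H}\ket{\phi_0}$, and the sampling weight $|\hat{F}_j|/\mathcal{F}$ cancels the prefactor $\mathcal{F}$ to reproduce each Fourier term $\hat{F}_j e^{ijx}\bra{\phi_0}Oe^{-ij\tau H}\ket{\phi_0}$, so $\E[\ov{G_O}(x)]=\wt{C_O}(x)$.

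For the variance, the outcomes $X_J,Y_J\in\{\pm1\}$ force $|Z|=\sqrt{2}$ deterministically, so $|\ov{G_O}(x)|=\sqrt{2}\,\mathcal{F}$ and $\Var[\ov{G_O}(x)]\le\E[|\ov{G_O}(x)|^2]=2\mathcal{F}^2$. Invoking the bound $\mathcal{F}=O(\log d)$ that follows from the Fourier-coefficient decay in Lemma~\ref{lem:approx_Heaviside} then yields variance $\wt{O}(1)$. For the running time, the circuit at parameter $J$ evolves for time $|J|\tau$, so the expected cost of a single draw of $\ov{G_O}(x)$ is $\tau\,\E[|J|]=\tau\sum_{|j|\le d}|j|\,|\hat{F}_j|/\mathcal{F}$; the same coefficient estimates give $\sum_{|j|\le d}|j|\,|\hat{F}_j|=O(d)$ and $\mathcal{F}=\Theta(\log d)$, hence $O(\tau d/\log d)$.

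The main obstacle is only to confirm that the two Fourier-weight estimates, $\mathcal{F}=O(\log d)$ and $\sum_{|j|\le d}|j|\,|\hat{F}_j|=O(d)$, transfer verbatim from the energy-estimation setting. Since the controlled-$O$ circuit changes only which amplitude is measured and leaves the classical weights $\{\hat{F}_j\}$ and the sampling distribution identical to \cite{lt21}, these bounds carry over unchanged from Lemma~\ref{lem:approx_Heaviside}; the genuinely new ingredient is Claim~\ref{clm:estimator_expectation_observable}, which is already in hand. The rest is a line-by-line analog of Lemma~\ref{lem:est_overlap}.
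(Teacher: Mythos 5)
Your proposal is correct and follows essentially the same route as the paper's own proof: the same Fourier expansion of $\wt{C_O}(x)$ into terms $\hat{F}_j e^{ijx}\bra{\phi_0}Oe^{-ij\tau H}\ket{\phi_0}$, the same importance-sampled estimator $\mathcal{F}\, Z\, e^{i(\theta_J+Jx)}$ made unbiased via Claim~\ref{clm:estimator_expectation_observable}, the same variance bound $2\mathcal{F}^2=O(\log^2 d)$ from $X_J,Y_J\in\{\pm 1\}$ and $|\hat{F}_j|=\Theta(1/|j|)$, and the same computation $\tau\E[|J|]=O(\tau d/\log d)$. Your closing observation that the Fourier-weight bounds carry over unchanged because only the measured amplitude differs is exactly the point the paper relies on implicitly.
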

\begin{proof}
$\wt{C_O}(x)$ can be expanded in the following way:
\begin{align}
    \wt{C_O}(x) = &~ (F * p_O)(x)\\
    = &~ \int_{-\pi}^\pi F(x-y) p_O(y) \d y\notag\\
    = &~ \sum_{|j|\leq d} \int_{-\pi}^\pi \hat{F}_j e^{ij(x-y)} p_O(y)\d y\notag\\
    = &~ \sum_{|j|\leq d} \hat{F}_j e^{ijx} \int_{-\pi}^\pi p_O(y) e^{-ijy}\d y\notag\\
    = &~ \sum_{|j|\leq d} \hat{F}_j e^{ijx} \sum_k p_k O_ke^{-ij\tau \lambda_k}\notag\\
    = &~ \sum_{|j|\leq d} \hat{F}_j e^{ijx} \cdot \bra{\phi_0} O e^{-ij\tau H} \ket{\phi_0},
\end{align}
where the third step follows from the Fourier expansion of $F(x-y)$, the fifth step follows from the property of Dirac's delta function, and the last step follows from the definition of $p_k$ and the eigenvalues of matrix exponential.

Define an estimator $G(x; {\bf J}, {\bf Z})$ as follows:
\begin{align}
    G(x; {\bf J}, {\bf Z}):={\cal F}\cdot {\bf Z} e^{i(\theta_{\bf J} + {\bf J}x)},
\end{align}
where $\theta_j$ is defined by $\hat{F}_j = |\hat{F}_j|e^{i\theta_j}$, ${\bf Z}=X+iY$ measured from the quantum circuit (Figure~\ref{fig:hadamard_test_o}) with parameter $j={\bf J}$, and ${\cal F}=\sum_{|j|\leq d}|\hat{F}_j|$. 

Then, we show that $G(x; {\bf J}, {\bf Z})$ is un-biased:
\begin{align}
    \E[G(x; {\bf J}, {\bf Z})] = &~ \sum_{|j|\leq d} \E\left[(X_j + iY_j)e^{i(\theta_j + jx)}|\hat{F}_j|\right]\\
    = &~ \sum_{|j|\leq d} \hat{F}_j e^{ijx} \cdot \E\left[X_j + iY_j\right]\notag\\
    = &~ \sum_{|j|\leq d} \hat{F}_j e^{ijx} \cdot \bra{\phi_0} O e^{-ij\tau H} \ket{\phi_0}\notag\\
    = &~ \wt{C}(x),
\end{align}
where the third step follows from Claim~\ref{clm:estimator_expectation_observable}. Moreover, the variance of $G$ can be upper-bounded by:
\begin{align}
    \Var[G(x; {\bf J}, {\bf Z})]= &~ \E[|G(x; {\bf J}, {\bf Z})|^2] - |\E[G(x; {\bf J}, {\bf Z})]|^2\\
    \leq &~ \E[|G(x; {\bf J}, {\bf Z})|^2]\notag\\
    \leq &~ 2{\cal F}^2,
\end{align}
where the third step follows from $|e^{i(\theta_J+Jx)}|=1$, and the last step follows from $X_j, Y_j\in \{\pm 1\}$.
By Lemma~\ref{lem:approx_Heaviside}, we know that $|\hat{F}_j|=O(1/|j|)$. Hence,
we have ${\cal F} = \sum_{|j|\leq d}O(1/|j|) = O(\log d)$. Thus, $\Var[G(x; {\bf J}, {\bf Z})]=O(\log^2(d))$.

The expected total evolution time is
\begin{align}
    {\cal T}_{\mathsf{tot}} := \E[|J|]=  \tau \sum_{|j|\leq d}|j|\cdot \frac{|\hat{F}_j| }{{\cal F}}= O(\tau d / \log(d)).
\end{align}

The lemma is then proved.
\end{proof}

The following lemma shows that the $O$-weighted CDF $C_O(x)$ can be approximated by the $O$-weighted ACDF $\wt{C_O}(x)$:
\begin{lemma}[Approximating the $O$-weighted CDF]\label{lem:approx_o_cdf}
For any $\epsilon>0$, $0<\delta < \pi/6$, let $F(x) := F_{d,\delta}(x)$ constructed by Lemma~\ref{lem:approx_Heaviside} with approximation error $\eta \epsilon/8$. Then, for any $x\in [-\pi/3, \pi/3]$, it holds that:
\begin{align}
    C_O(x-\delta) - \eta\epsilon/8 \leq \wt{C_O}(x) \leq C_O(x + \delta) + \eta \epsilon/8.
\end{align}
\end{lemma}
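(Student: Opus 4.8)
The plan is to reduce the lemma to the pointwise approximation properties of the Fourier Heaviside surrogate $F = F_{d,\delta}$ supplied by Lemma~\ref{lem:approx_Heaviside}, exactly as in the unsigned ($O = I$) case treated in \cite{lt21}, and then to account for the fact that the weights $O_k$ can change sign. First I would record the key pointwise property of $F$: built with approximation error $\eta\epsilon/8$, it satisfies $|F(u) - H(u)| \le \eta\epsilon/8$ on the two ``good'' intervals $[-\pi+\delta,-\delta]$ and $[\delta,\pi-\delta]$, while $0 \le F(u) \le 1$ everywhere. Since both the eigenvalues and the evaluation point lie in $[-\pi/3,\pi/3]$, the shifted arguments $u_k := x-\tau\lambda_k$ stay in $[-2\pi/3,2\pi/3]$, inside a single period of $H$, so only the jump at the origin matters and $F$ obeys the two-sided shifted-Heaviside sandwich
\begin{align}
    H(u-\delta) - \tfrac{\eta\epsilon}{8} \le F(u) \le H(u+\delta) + \tfrac{\eta\epsilon}{8}
\end{align}
on this range; indeed, on the $2\delta$-wide transition window one has $H(u-\delta)=0$ and $H(u+\delta)=1$, so the crude bound $0\le F\le 1$ already fits between the shifted steps.

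Next I would expand both sides over the spectrum. Writing $C_O(x) = \sum_k p_k O_k\, H(u_k)$ and $\wt{C_O}(x) = \sum_k p_k O_k\, F(u_k)$, the claim becomes a bound on $\sum_k p_k O_k\,[F(u_k) - H(u_k \pm \delta)]$. For each shifted threshold I would split the index set into ``good'' indices, where $u_k$ lies outside the transition window and the bracket has absolute value at most $\eta\epsilon/8$, and ``window'' indices $k$ with $\tau\lambda_k \in (x-\delta,\, x+\delta]$. The good-index contribution is controlled by $\tfrac{\eta\epsilon}{8}\sum_k |p_k O_k| \le \tfrac{\eta\epsilon}{8}$, where the normalization $\sum_k p_k|O_k| \le \sum_k p_k = 1$ (using $|O_k|\le 1$) replaces the ``$\int p = 1$'' step of the unsigned proof. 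This already yields both inequalities cleanly at any $x$ whose window $(x-\delta,\,x+\delta]$ contains no eigenvalue.

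The hard part will be the signedness of $p_O$ on the window indices. For a genuine probability density one multiplies the sandwich by $p(y)\ge 0$ and integrates, preserving both inequalities; here $p_k O_k$ may be negative, so the bounds must be applied term-by-term (using $F(u_k)\le H(u_k+\delta)+\tfrac{\eta\epsilon}{8}$ when $p_kO_k\ge 0$ and $F(u_k)\ge H(u_k-\delta)-\tfrac{\eta\epsilon}{8}$ when $p_kO_k<0$), after which the two shifted thresholds no longer telescope into a single $C_O(x\pm\delta)$ unless the window is eigenvalue-free. I would resolve this either by decomposing $O_k = O_k^+ - O_k^-$ into two nonnegative-weighted signed measures and applying the unsigned \cite{lt21} argument to each part, or---matching how the lemma is actually used in Step~2---by restricting to the good points $x_{\mathsf{good}}$ from Claim~\ref{clm:good_x}, for which $[x-\delta,x+\delta]$ sits inside a spectral gap $(\tau\lambda_k,\tau\lambda_{k+1})$ so the window sum is empty and the good-index bound alone delivers the stated sandwich. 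Confirming that the window contribution vanishes (or is negligible) in the regime where the lemma is invoked is the step I expect to require the most care.
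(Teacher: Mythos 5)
Your suspicion about the signed window terms is exactly right, and it is the one point where this lemma genuinely differs from its unsigned template. The paper omits the proof with the remark that it is ``very similar to Lemma~\ref{lem:approx_acdf}'', but transplanting that proof verbatim fails precisely at the step you flag: the nonnegativity of the density is used twice there --- once to bound the off-window integral $\int p\,|H-F_L|$ by $\epsilon\int p$, and once to bound the window contribution \emph{one-sidedly} by the nonnegative mass $C(x)-C(x-2\delta)$ --- and neither use survives when the weights are $p_kO_k$ with $O_k$ of arbitrary sign or phase. (The same silent appeal to nonnegativity appears explicitly in the paper's two-dimensional analogue, Lemma~\ref{lem:approx_acdf_2d}, where one step is justified by ``$p_{O,2}(x)$ is a density function'' even though $p_{O,2}$ has complex weights $c_k^*c_{k'}O_{k,k'}$.) Your good-index bookkeeping, with $\sum_k p_k|O_k|\leq \sum_k p_k = 1$ replacing the total-mass step, is the correct repair off the window, and what it proves at an arbitrary $x$ is the two-sided bound
\begin{align*}
\left|\wt{C_O}(x) - C_O(x)\right| \;\leq\; \frac{\eta\epsilon}{8} \;+\; \sum_{k:\,\tau\lambda_k\in(x-\delta,\,x+\delta]} p_k|O_k|,
\end{align*}
i.e., the stated sandwich plus an unavoidable total-variation window term.

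Moreover, the residual worry you deferred resolves in your favor: the lemma \emph{as stated}, quantified over all $x\in[-\pi/3,\pi/3]$, is false for sign-changing weights, so no argument could have closed your window case. Take a single eigenstate with $\tau\lambda_0=0$, $p_0=1$, $O_0=-1$ (e.g., $O=-I$, which is unitary and commutes with $H$), and evaluate at $x=0$: then $C_O(-\delta)=0$, $C_O(\delta)=-1$, while $\wt{C_O}(0)=-F(0)\approx -1/2$ since the mollifier in Lemma~\ref{lem:mollifier} is even, so \emph{both} claimed inequalities fail for small $\eta\epsilon$. Hence the restriction to points whose $2\delta$-window is eigenvalue-free is not merely the easy case but the only regime in which the sandwich holds --- and it is exactly and only there that the paper invokes the lemma, via the $x_{\mathsf{good}}$ of Claim~\ref{clm:good_x}, to conclude $|\wt{C_O}(x_{\mathsf{good}})-p_0O_0|\leq \eta\epsilon/8$; on such points your good-index bound alone finishes the proof. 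Your alternative $O_k=O_k^+-O_k^-$ decomposition cannot rescue the general statement, for the reason you yourself note: the shifted thresholds from the two parts do not recombine into $C_O(x\pm\delta)$, and the mismatch is again exactly the window mass. So there is no missing idea on your side; rather, the statement needs either the good-point hypothesis or the explicit window term above, and your argument delivers both correct versions.
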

The proof is very similar to Lemma~\ref{lem:approx_acdf}, so we omit it here.

We can take $\delta:= \tau \gamma/5$ and let $x_{\mathsf{good}}:=x^\star + \tau \gamma/2$. Then, by Claim~\ref{clm:good_x}, we know that $x_{\mathsf{good}}$ is good for $\lambda_0$, i.e., $[x_{\mathsf{good}}-\tau \gamma, x_{\mathsf{good}} + \tau\gamma]\subset (\tau \lambda_0, \tau \lambda_1)$. Hence, $\wt{C}_O(x_{\mathsf{good}})$  satisfies
\begin{align}
    \left|\wt{C}_O(x_{\mathsf{good}})-p_0 O_0\right|\leq \eta \epsilon /8.
\end{align}

The following lemma shows how to estimate $\wt{C_O}(x_{\mathsf{good}})$, which is very similar to Lemma~\ref{lem:est_overlap}.
\begin{lemma}[Estimating $p_0O_0$]\label{lem:est_p0O0}
For any $\epsilon_1, \nu\in (0, 1)$, $p_0O_0$ can be estimated with multiplicative error $1\pm O(\epsilon_1)$ with probability $1-\nu$ by runs the quantum circuit (Figure~\ref{fig:hadamard_test}) $\wt{O}(\epsilon_1^{-2}\eta^{-2})$ times with expected total evolution time $\wt{O}(\gamma^{-1} \epsilon_1^{-2}\eta^{-2})$ and maximal evolution time $O(\gamma^{-1})$.
\end{lemma}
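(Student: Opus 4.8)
The plan is to follow the same two-stage template as Lemma~\ref{lem:est_overlap}, with the overlap estimator replaced by the $O$-weighted one: first locate a point $x_{\mathsf{good}}$ at which the $O$-weighted ACDF already equals $p_0O_0$ up to a tiny bias, and then drive the statistical error below that same scale using the unbiased estimator $\ov{G_O}$ of Lemma~\ref{lem:est_acdf_observable} together with a median-of-means amplification.

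For the first stage, I would run $\textsc{EstimateGSE}$ (Algorithm~\ref{alg:gs_energy}) to pin down $\lambda_0$ to additive error $O(\gamma)$; by Theorem~\ref{thm:lt21_main} this costs maximal evolution time $\wt{O}(\gamma^{-1})$ and expected total evolution time $\wt{O}(\gamma^{-1}\eta^{-2})$. Feeding the output $x^\star$ into Claim~\ref{clm:good_x} gives $x_{\mathsf{good}}:=x^\star+\tau\gamma/2$, which is good for $\lambda_0$, so that $[x_{\mathsf{good}}-\tau\gamma, x_{\mathsf{good}}+\tau\gamma]\subset(\tau\lambda_0,\tau\lambda_1)$. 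Taking $\delta:=\tau\gamma/5$ and building $F=F_{d,\delta}$ via Lemma~\ref{lem:approx_Heaviside} with approximation error $\eta\epsilon_1/8$, Lemma~\ref{lem:approx_o_cdf} together with the goodness of $x_{\mathsf{good}}$ yields $|\wt{C_O}(x_{\mathsf{good}})-p_0O_0|\leq\eta\epsilon_1/8$. Crucially, pushing the ACDF error down from $\eta$ to $\eta\epsilon_1$ only inflates the Fourier degree to $d=O(\delta^{-1}\log(\delta^{-1}\epsilon_1^{-1}\eta^{-1}))=\wt{O}(\gamma^{-1})$, so the maximal evolution time stays $\wt{O}(\gamma^{-1})$.

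For the second stage, I would estimate $\wt{C_O}(x_{\mathsf{good}})$ to additive error $O(\eta\epsilon_1)$ using the estimator $\ov{G_O}(x_{\mathsf{good}})$ from Lemma~\ref{lem:est_acdf_observable}, which is unbiased, has variance $\wt{O}(1)=O(\log^2 d)$, and consumes expected evolution time $O(\tau d/\log d)=\wt{O}(\gamma^{-1})$ per sample. To reach probability $1-\nu$ cheaply I would use the median-of-means estimator: partition the samples into $N_g=O(\log(1/\nu))$ groups, choose the group size large enough that each group mean has standard deviation below $\eta\epsilon_1$ (which, from the $\wt{O}(1)$ variance, requires $\wt{O}(\epsilon_1^{-2}\eta^{-2})$ samples per group), and output the median of the group means. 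A Chernoff bound over the $N_g$ groups then gives additive error at most $\eta\epsilon_1$ with probability $1-\nu$, for a total sample count $\wt{O}(\epsilon_1^{-2}\eta^{-2})$, expected total evolution time $\wt{O}(\gamma^{-1}\epsilon_1^{-2}\eta^{-2})$, and maximal evolution time unchanged at $\wt{O}(\gamma^{-1})$. Combining the two error sources, the estimate lands within $O(\eta\epsilon_1)$ of $p_0O_0$; since $p_0=\Theta(\eta)$, this is the claimed multiplicative guarantee $1\pm O(\epsilon_1)$.

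I expect two points to require more care than in Lemma~\ref{lem:est_overlap}. First, the $O$-weighted density $p_O$ and hence $C_O$, $\wt{C_O}$ are \emph{signed} (because $O_0$ may be negative), so none of the monotonicity or nonnegativity intuition from the ordinary CDF is available; the saving grace is that the median-of-means analysis relies only on unbiasedness and the variance bound of Lemma~\ref{lem:est_acdf_observable}, both of which hold verbatim for signed targets, so the concentration argument transfers unchanged. Second, and more delicate, is the additive-to-multiplicative conversion itself: an additive error $\eta\epsilon_1$ translates into $1\pm O(\epsilon_1)$ relative error only because $|p_0O_0|=\Theta(\eta)\,|O_0|$, i.e. the multiplicative claim is measured at the $\eta$-scale and is genuinely clean when $|O_0|=\Omega(1)$; what the analysis really delivers — and what the downstream recovery of $O_0=(p_0O_0)/p_0$ by division against the Step~1 estimate of $p_0$ actually needs — is the $O(\eta\epsilon_1)$ additive guarantee, which I would state explicitly to avoid over-claiming when $O_0$ is small.
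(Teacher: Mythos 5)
Your proposal is correct and takes essentially the same approach as the paper: the paper omits the proof of this lemma, noting only that it is ``very similar to Lemma~\ref{lem:est_overlap}'', and your reconstruction --- $\textsc{EstimateGSE}$ plus Claim~\ref{clm:good_x} to locate $x_{\mathsf{good}}$, Lemma~\ref{lem:approx_o_cdf} with approximation error $\eta\epsilon_1/8$ for the bias, and the unbiased estimator of Lemma~\ref{lem:est_acdf_observable} amplified by median-of-means --- is exactly that adaptation, with the same cost accounting. Your closing caveat, that the honest guarantee is the additive bound $O(\eta\epsilon_1)$ and the multiplicative claim is only clean when $|O_0|=\Omega(1)$, is well taken and consistent with how the lemma is actually used in Theorem~\ref{thm:app_sim_com}, whose error-propagation argument invokes only the additive guarantee of Eq.~\eqref{eq:approx_p0O0}.
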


\subsection{Putting it all together}
In this section, we will put the components together and prove the following main theorem, which gives an algorithm for the ground state property estimation.

\begin{theorem}[Ground state property estimation with commutative observable, restate]\label{thm:app_sim_com}
Suppose $p_0\geq \eta$ for some known $\eta$, and let $\gamma>0$ be the spectral gap of the Hamiltonian. Then, for any $\epsilon, \nu\in (0, 1)$, the ground state property $\bra{\psi_0}O\ket{\psi_0}$ can be estimated within additive error at most $\epsilon$ with probability $1-\nu$, such that:
\begin{enumerate}
    \item the number of times running the quantum circuits (Figure~\ref{fig:hadamard_test} and \ref{fig:hadamard_test_o}) is $\wt{O}(\epsilon^{-2}\eta^{-2})$,
    \item the expected total evolution time is $\wt{O}(\gamma^{-1}\epsilon^{-2}\eta^{-2})$,
    \item the maximal evolution time is $\wt{O}(\gamma^{-1})$.
\end{enumerate}
\end{theorem}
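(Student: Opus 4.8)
The plan is to write the target quantity as a ratio, $\bra{\psi_0}O\ket{\psi_0} = O_0 = (p_0 O_0)/p_0$, and to estimate the numerator and denominator separately using the two subroutines already established, then divide. First I would invoke Lemma~\ref{lem:est_overlap} with precision parameter $\epsilon_0 = \Theta(\epsilon)$ and failure probability $\nu/2$ to obtain an estimate $\wh{p}_0$ of the overlap $p_0$, and then invoke Lemma~\ref{lem:est_p0O0} with precision $\epsilon_1 = \Theta(\epsilon)$ and failure probability $\nu/2$ to obtain an estimate $\wh{q}$ of the $O$-weighted quantity $p_0 O_0$. The final estimator is simply $\wh{O}_0 := \wh{q}/\wh{p}_0$.

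The core of the argument is the error propagation through this quotient. From the proofs of the two lemmas the underlying estimators have additive accuracy $|\wh{p}_0 - p_0| \le O(\epsilon_0 \eta)$ and $|\wh{q} - p_0 O_0| \le O(\epsilon_1 \eta)$, each holding with probability $1-\nu/2$; a union bound makes both hold simultaneously with probability $1-\nu$. Writing $e_p := \wh{p}_0 - p_0$ and $e_q := \wh{q} - p_0 O_0$, I would expand
\begin{align}
\wh{O}_0 - O_0 = \frac{p_0 O_0 + e_q}{p_0 + e_p} - O_0 = \frac{e_q - O_0 e_p}{p_0 + e_p}.
\end{align}
Using $|O_0| \le 1$, the promise $p_0 \ge \eta$, and $|e_p| \le O(\epsilon_0 \eta) = O(\epsilon_0 p_0)$ to keep the denominator bounded away from zero, this gives
\begin{align}
|\wh{O}_0 - O_0| \le \frac{|e_q| + |e_p|}{p_0 - |e_p|} \le \frac{O((\epsilon_0 + \epsilon_1)\eta)}{\eta(1 - O(\epsilon_0))} = O(\epsilon_0 + \epsilon_1).
\end{align}
Choosing the constants in $\epsilon_0 = \epsilon_1 = \Theta(\epsilon)$ small enough then forces the additive error to be at most $\epsilon$, as required.

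Finally I would aggregate the costs. Both Lemma~\ref{lem:est_overlap} and Lemma~\ref{lem:est_p0O0} run the Hadamard-test circuits (Figures~\ref{fig:hadamard_test} and~\ref{fig:hadamard_test_o}) $\wt{O}(\epsilon^{-2}\eta^{-2})$ times with expected total evolution time $\wt{O}(\gamma^{-1}\epsilon^{-2}\eta^{-2})$ and maximal evolution time $\wt{O}(\gamma^{-1})$; since the two stages are run sequentially, summing them preserves these asymptotics and yields items (1)--(3) of the statement. The step I expect to be the main obstacle is the quotient error analysis: one must ensure the division is well-conditioned, which relies crucially on the promise $p_0 \ge \eta$ to lower-bound the denominator, and one must verify that the separate \emph{multiplicative} guarantees of the two lemmas combine into a clean \emph{additive} guarantee on $O_0$ via $|O_0|\le 1$. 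Everything else is routine bookkeeping of the precision and confidence parameters.
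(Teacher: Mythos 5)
Your proposal is correct and follows essentially the same route as the paper's own proof: estimate $p_0$ via Lemma~\ref{lem:est_overlap} and $p_0 O_0$ via Lemma~\ref{lem:est_p0O0} with additive accuracy $O(\eta\epsilon)$ each, combine with a union bound at confidence $\nu/2$ per stage, and propagate the errors through the quotient using $p_0 \geq \eta$ and $|O_0| \leq 1$ to get $O(\epsilon_0 + \epsilon_1)$ additive error. Your exact identity $\wh{O}_0 - O_0 = (e_q - O_0 e_p)/(p_0 + e_p)$ is a slightly cleaner packaging of the paper's triangle-inequality split, but it is the same argument with the same cost accounting.
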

\begin{proof}
By Lemma~\ref{lem:est_overlap}, we obtain an estimate $\ov{p_0}$ for $p_0$ with the guarantee that
\begin{align}\label{eq:approx_p_0}
    \big|\ov{p_0}-p_0\big|\leq O(\eta \epsilon_0),
\end{align}
where $\epsilon_0$ will be chosen shortly.

By Lemma~\ref{lem:est_p0O0}, we obtain an estimate $\ov{p_0O_0}$ for $p_0O_0$ with the guarantee that 
\begin{align}\label{eq:approx_p0O0}
    \big|\ov{p_0O_0}-p_0O_0\big|\leq O(\eta\epsilon_1), 
\end{align}
where $\epsilon_1$ will be chosen shortly.

Then, we have
\begin{align}
    \left|\frac{\ov{p_0O_0}}{\ov{p_0}}-O_0\right|= &~ \left|\frac{\ov{p_0O_0}}{\ov{p_0}} - \frac{p_0O_0}{\ov{p_0}}+\frac{p_0O_0}{\ov{p_0}}-\frac{p_0O_0}{p_0}\right|\\
    \leq &~ \frac{|\ov{p_0O_0}-p_0O_0|}{\ov{p_0}} + |p_0O_0|\left|\frac{1}{\ov{p_0}}-\frac{1}{p_0}\right|\notag\\
    \leq &~ \frac{O(\eta\epsilon_1)}{p_0-O(\eta\epsilon_0)} + |p_0O_0|\left|\frac{1}{p_0-O(\eta\epsilon_0)}-\frac{1}{p_0}\right|\notag\\
    \leq &~ \frac{O(\eta\epsilon_1)}{\eta-O(\eta\epsilon_0)}+|p_0O_0|\left|\frac{1}{p_0-p_0O(\epsilon_0)}-\frac{1}{p_0}\right|\notag\\
    \leq &~ O(\epsilon_1)(1-O(\epsilon_0)) + |O_0|(1+O(\epsilon_0)-1)\notag\\
    \leq &~ O(\epsilon_0+\epsilon_1),
\end{align}
where the second step follows from the triangle inequality, the third step follows from Eqs.~\eqref{eq:approx_p_0} and \eqref{eq:approx_p0O0}, the third step follows from $p_0\geq \eta$, the fifth step follows from $\frac{1}{1-x}\leq 1+O(x)$ for $x\in (0,1)$.

Hence, if we take $\epsilon_0=\epsilon_1=O(\epsilon)$, we will achieve additive error at most $\epsilon$.

For the success probability, we can make Eq.\eqref{eq:approx_p_0} hold with probability $1-\nu/2$ in Lemma~\ref{lem:est_overlap} and  Eq.\eqref{eq:approx_p0O0} hold with probability $1-\nu/2$ in Lemma~\ref{lem:est_p0O0}. Then, by the union bound, we get a good estimate with probability at least $1-\nu$.

The computation costs follow directly from Lemma~\ref{lem:est_overlap} and Lemma~\ref{lem:est_p0O0}. And the proof of the theorem is then completed.
\end{proof}

\begin{algorithm}[ht]
\caption{Ground State Property Estimation (Commutative Case)}
\label{alg:gs_prop_com} 
\begin{algorithmic}[1]
    \algrenewcommand\algorithmicprocedure{\textbf{procedure}}
	\Procedure{EstimateGSProp}{$\epsilon,\tau, \eta, \gamma, \nu$}

        \State $\delta \gets O(\tau \gamma)$, $d\gets O(\delta^{-1}\log(\delta^{-1}\epsilon^{-1}\eta^{-1}))$
        \For{$j\gets -d,\dots,d$}
            \State Compute $\hat{F}_j:=\hat{F}_{d,\delta,j}$ and $\theta_j$ 
        \EndFor
        \State \Comment{Estimate the ground state energy}
        \State $x^\star\gets \textsc{EstimateGSE}(\gamma/8, \tau, \eta, \nu/10)$
        \State $x_{\mathsf{good}}\gets x^\star + \tau \gamma / 2$
        
	    \State \Comment{Generate samples from the Hadamard test circuits}
        \State $N_g\gets O(\log(1/\nu))$, $K\gets O(\epsilon^{-2})$
	    \For{$k\gets 1,\dots,N_g K$}
	        \State Sample $(Z_k, J_k)$ from the quantum circuit (Figure~\ref{fig:hadamard_test})
	        \State Sample $(Z_k', J_k')$ from the quantum circuit (Figure~\ref{fig:hadamard_test_o})
	    \EndFor
	    \State \Comment{Estimate $p_0$}
	    \For{$i\gets 1,\dots, N_g$}
	        \State $\ov{G}_i\gets \frac{1}{K}\sum_{j=1}^K\ov{G}(x_{\mathsf{good}}; Z_{(i-1)K+j}, J_{(i-1)K+j})$
	    \EndFor
	    \State $\ov{p_0}\gets \mathrm{median}(\ov{G}_1,\dots,\ov{G}_{N_g})$
	    \State \Comment{Estimate $p_0O_0$}
	    \For{$i\gets 1,\dots, N_g$}
	        \State $\ov{G}_i'\gets \frac{1}{K}\sum_{j=1}^K\ov{G}(x_{\mathsf{good}}; Z_{(i-1)K+j}', J_{(i-1)K+j}')$
	    \EndFor
	    \State $\ov{p_0O_0}\gets \mathrm{median}(\ov{G}_1',\dots,\ov{G}_{N_g}')$
        \State \Return $\ov{p_0O_0}/\ov{p_0}$
	\EndProcedure
\end{algorithmic}
\end{algorithm}
\section{Algorithm for General Unitary Observables}
\label{sec:unitary_alg}
In this section, we will prove the following theorem for unitary observables in the general case:
\begin{theorem}[Ground state property estimation with general unitary  observable]\label{thm:app_sim}
Suppose $p_0\geq \eta$ for some known $\eta$ and the spectral gap of the Hamiltonian $H$ is at least $\gamma$. For any $\epsilon,\nu\in (0, 1)$, there exists an algorithm for estimating the ground state property $\bra{\psi_0}O\ket{\psi_0}$ within additive error at most $\epsilon$ with probability at least $1-\nu$, such that:
\begin{enumerate}
    \item the expected total evolution time is $\wt{O}(\gamma^{-1}\epsilon^{-2}\eta^{-2})$
    \item the maximal evolution time is $\wt{O}(\gamma^{-1})$.
\end{enumerate}
\end{theorem}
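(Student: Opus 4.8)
The plan is to keep the two-step, ratio-based structure of the commutative case (Theorem~\ref{thm:app_sim_com}): write
\begin{align}
\bra{\psi_0}O\ket{\psi_0} = \frac{p_0 O_0}{p_0}, \qquad p_0 O_0 = \bra{\phi_0} P_0\, O\, P_0 \ket{\phi_0},
\end{align}
where $P_0 = \ketbra{\psi_0}{\psi_0}$ (using $P_0\ket{\phi_0} = c_0\ket{\psi_0}$), estimate numerator and denominator separately, and divide. The denominator $p_0 = |\langle\phi_0|\psi_0\rangle|^2$ does not involve $O$, so I would estimate it verbatim as in Lemma~\ref{lem:est_overlap}, at the stated costs and with failure probability $\nu/2$. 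The entire new difficulty is the numerator. In the commutative case $\bra{\phi_0}O e^{-ij\tau H}\ket{\phi_0} = \sum_k p_k O_k e^{-ij\tau\lambda_k}$ because $O$ is diagonal in the eigenbasis, and a single Fourier filter isolates the $k=0$ term. For non-commuting unitary $O$ one only gets $\bra{\phi_0}O e^{-ij\tau H}\ket{\phi_0} = \sum_{k,k'} c_k^* c_{k'}\bra{\psi_k}O\ket{\psi_{k'}} e^{-ij\tau\lambda_{k'}}$, and a single filter on $\lambda_{k'}$ would return $c_0\bra{\phi_0}O\ket{\psi_0}$, not $p_0 O_0$. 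The fix is to filter on \emph{both} eigen-indices, i.e.\ to realize a two-sided spectral projector onto $\ket{\psi_0}$.

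Concretely, I would define the doubly-filtered quantity
\begin{align}
\wt{D}_O(x) := \sum_{|j|\le d}\sum_{|j'|\le d}\hat{F}_j\hat{F}_{j'}\,e^{i(j+j')x}\,\bra{\phi_0}e^{-ij\tau H}\,O\,e^{-ij'\tau H}\ket{\phi_0},
\end{align}
and expand it in the eigenbasis using $\sum_j \hat{F}_j e^{ij(x-\tau\lambda_k)} = F(x-\tau\lambda_k)$ to obtain
\begin{align}
\wt{D}_O(x) = \sum_{k,k'} c_k^* c_{k'}\,\bra{\psi_k}O\ket{\psi_{k'}}\,F(x-\tau\lambda_k)\,F(x-\tau\lambda_{k'}).
\end{align}
Taking $x_{\mathsf{good}} = x^\star + \tau\gamma/2$, which is good for $\lambda_0$ by Claim~\ref{clm:good_x}, and writing $F(x_{\mathsf{good}}-\tau\lambda_k) = \mathbf{1}[k=0] + e_k$ with $|e_k|\le \eta\epsilon/8$ (choosing the Heaviside approximation error in Lemma~\ref{lem:approx_Heaviside} to be $\eta\epsilon/8$, as in Lemma~\ref{lem:approx_o_cdf}), the $(0,0)$ term yields exactly $p_0 O_0$. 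The remaining error terms group into $c_0^*\bra{\psi_0}O\ket{b}$, $\bra{a}O\ket{\psi_0}c_0$, and $\bra{a}O\ket{b}$, where $\ket{a} := \sum_k c_k e_k\ket{\psi_k}$ and $\ket{b} := \sum_{k'} c_{k'} e_{k'}\ket{\psi_{k'}}$ both have norm at most $\eta\epsilon/8$; since $O$ is unitary ($\norm{O}=1$), these are bounded by $O(\eta\epsilon)$, giving $|\wt{D}_O(x_{\mathsf{good}}) - p_0 O_0| = O(\eta\epsilon)$.

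For the estimator, since $O$ is unitary the operator $V_{j,j'} := e^{-ij\tau H}\,O\,e^{-ij'\tau H}$ is unitary, so I would estimate $\bra{\phi_0}V_{J,J'}\ket{\phi_0}$ with the Hadamard-test circuit of Figure~\ref{fig:hadamard_test_o} augmented by one more controlled time evolution (controlled-$e^{-ij'\tau H}$, then controlled-$O$, then controlled-$e^{-ij\tau H}$, all on a single ancilla); by the argument of Claim~\ref{clm:estimator_expectation_observable} the output $\mathbf{Z}=X+iY$ is unbiased for $\bra{\phi_0}V_{J,J'}\ket{\phi_0}$. Sampling $J,J'$ independently with $\Pr[J=j]=|\hat{F}_j|/{\cal F}$ and setting $G_O(x) := {\cal F}^2\,\mathbf{Z}\,e^{i(\theta_J + \theta_{J'} + (J+J')x)}$ makes $\E[G_O(x)] = \wt{D}_O(x)$. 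Since $|\mathbf{Z}|\le\sqrt2$ and ${\cal F}=O(\log d)$, the variance is $\Var[G_O]\le 2{\cal F}^4 = \wt{O}(1)$; each run costs expected evolution time $\tau(\E|J|+\E|J'|) = O(\tau d/\log d)$ and maximal evolution time $\le 2\tau d = \wt{O}(\gamma^{-1})$ (as $d=\wt{O}((\tau\gamma)^{-1})$). A median-of-means estimate over $\wt{O}(\epsilon^{-2}\eta^{-2})$ samples then yields $\ov{p_0 O_0}$ with $|\ov{p_0O_0}-p_0O_0|\le O(\eta\epsilon)$ with probability $1-\nu/2$, at expected total evolution time $\wt{O}(\gamma^{-1}\epsilon^{-2}\eta^{-2})$ (this is the analogue of Lemma~\ref{lem:est_p0O0}).

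Finally I would combine the two estimates exactly as in Theorem~\ref{thm:app_sim_com}: the same ratio manipulation, using $p_0\ge\eta$, gives $|\ov{p_0O_0}/\ov{p_0} - O_0| = O(\epsilon)$, and rescaling $\epsilon$ by a constant together with a union bound over the two $(\nu/2)$-failure events yields additive error $\epsilon$ with probability $1-\nu$; the costs are dominated by the two estimation steps, producing the claimed expected total and maximal evolution times. I expect the main obstacle to be precisely the non-commutativity: justifying that the two-sided filter is the right object, verifying that the augmented Hadamard test remains unbiased for the product unitary $V_{j,j'}$, and controlling the full double sum $\sum_{k,k'}$ of cross terms by $O(\eta\epsilon)$ using only $\norm{O}=1$. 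Everything downstream (variance, median-of-means, ratio) is a routine adaptation of the commutative analysis.
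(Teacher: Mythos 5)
Your proposal is correct, and the algorithm it describes is exactly the paper's: your doubly-filtered quantity $\wt{D}_O(x)$ is precisely the paper's 2-d $O$-weighted ACDF $\wt{C_{O,2}}(x,y)$ of Lemma~\ref{lem:est_2d_acdf} restricted to the diagonal $y=x$; your augmented Hadamard test is the circuit of Figure~\ref{fig:hadamard_test_o_2d}; and the independent sampling of $(J,J')$ proportional to $|\hat{F}_j||\hat{F}_{j'}|$, the estimator ${\cal F}^2 Z e^{i(\theta_J+\theta_{J'}+(J+J')x)}$, the variance bound $2{\cal F}^4=\wt{O}(1)$, the median-of-means step, and the final ratio $\ov{p_0O_0}/\ov{p_0}$ with the error-propagation argument all coincide with Lemma~\ref{lem:est_2d_acdf}, Claim~\ref{clm:2d_good}, and the proof of Theorem~\ref{thm:app_sim}. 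Where you genuinely diverge is the approximation step. The paper builds the full 2-d object $C_{O,2}=(H_2 * p_{O,2})$ and proves a sandwich bound (Lemma~\ref{lem:approx_acdf_2d}) by decomposing the 2-d convolution integral into regions, then specializes to $(x_{\mathsf{good}},x_{\mathsf{good}})$ via Claim~\ref{clm:2d_good}; you instead expand the diagonal value in the eigenbasis, write $F(x_{\mathsf{good}}-\tau\lambda_k)=\mathbf{1}[k=0]+e_k$ with $|e_k|\le \eta\epsilon/8$ (valid since every argument $x_{\mathsf{good}}-\tau\lambda_k$ lands in $[-\pi+\delta,-\delta]\cup[\delta,\pi-\delta]$, exactly because $x_{\mathsf{good}}$ is good and the spectrum of $\tau H$ lies in $[-\pi/3,\pi/3]$), and bound the three cross-term groups by Cauchy--Schwarz using $\norm{O}=1$ and $\sum_k p_k e_k^2 \le (\eta\epsilon/8)^2$. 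This buys two things. First, it is shorter and avoids the 2-d CDF machinery entirely, since only the diagonal value is ever used. Second, it is arguably more rigorous: $p_{O,2}$ has complex coefficients $c_k^* c_{k'}O_{k,k'}$, so the steps in the paper's Lemma~\ref{lem:approx_acdf_2d} that treat $p_2$ as a nonnegative density (pulling absolute values through the integral and invoking monotonicity of $C_{O,2}$) require extra justification for non-commuting $O$, which your operator-norm bound supplies for free. The one bookkeeping nit: your two-way $\nu/2$ split omits the failure probability of the initial $\textsc{EstimateGSE}$ call producing $x^\star$ (the paper allocates failure probability to three stages in the proof of Theorem~\ref{thm:app_sim}); allocating $\nu/3$ to each of the three stages fixes this trivially.
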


In the following parts, we will first introduce the 2-d $O$-weighted density function and CDF, which extend the commuting observables to the general case. Then, we will show how to combine them with the overlap estimation in Section~\ref{sec:est_overlap} for proving Theorem~\ref{thm:app_sim}.
\subsection{2-d $O$-weighted density function and CDF}
Let $\ket{\phi_0} = \sum_k c_k \ket{\psi_k}$ where $|c_k|^2 = p_k$. In general, $O$ and $H$ may not commute. Hence, we consider a more symmetric form of expectation: $\bra{\phi_0}e^{-ij\tau H} Oe^{-ij'\tau H}\ket{\phi_0}$, which can be expanded in the eigenbasis of $H$ as follows: 
\begin{align}
    \bra{\phi_0}e^{-ij\tau H} O e^{-ij'\tau H}\ket{\phi_0} = &~ \sum_{k,k'} c_k^*  c_{k'} e^{-ij\tau \lambda_k} e^{-ij'\tau \lambda_{k'}}\bra{\psi_k} O \ket{\psi_k'}\notag\\
    =&~ \sum_{k,k'} c_k^* c_{k'} e^{-ij\tau \lambda_{k'}}e^{-ij'\tau \lambda_{k'}}\bra{\psi_k}O\ket{\psi_{k'}}
\end{align}

Similar to the commutative case, we define a 2-d $O$-weighted density function:
\begin{align}
    p_{O,2}(x,y) := \sum_{k,k'}c_k^*c_{k'} O_{k,k'}\delta(x-\tau\lambda_k) \delta(y-\tau\lambda_{k'}),
\end{align}
where $O_{k,k'}:=\bra{\psi_k}O\ket{\psi_{k'}}$. Then, define the corresponding 2-d $O$-weighted CDF function as follows:
\begin{align}
    C_{O,2}(x):=(H_2 * p_{O,2})(x,y),
\end{align}
where $H_2(x,y):=H(x)\cdot H(y)$, the 2-d $2\pi$-periodic Heaviside function. 

We first justify that $C_{O,2}$ is indeed a CDF of $p_{O,2}$ in $[-\pi/3, \pi/3]$:
\begin{align}
    C_2(x,y) = &~ \int_{-\pi}^\pi \int_{-\pi}^\pi H_2(x - u, y-v) p(u,v) \d u \d v\\
    = &~ \sum_{k,k'} c_k^*c_{k'} O_{k,k'}\cdot \int_{-\pi}^\pi \int_{-\pi}^\pi H_2(x- u,y-v) \delta(u-\tau\lambda_k) \delta(v-\tau\lambda_{k'})\d u\d v\notag\\
    = &~ \sum_{k,k'} c_k^*c_{k'} O_{k,k'}\cdot H(x - \tau \lambda_k)H(y - \tau \lambda_{k'})\notag\\
    = &~ \sum_{k,k'} c_k^*c_{k'} O_{k,k'} \cdot \mathbf{1}_{x \geq \tau \lambda_k,y\geq \tau\lambda_{k'}}\notag\\
    = &~ \sum_{\substack{k: \tau\lambda_k \leq x,\\k': \tau \lambda_{k'}\leq y}} c_k^*c_{k'} O_{k,k'}.
\end{align}
Hence, the definition of $C_{O,2}$ is reasonable. 

Then, we show that $C_{O,2}$ can be approximated similar to the 1-d case. Let $F_2(x)$ be the 2-d approximated Heaviside function, i.e.,
\begin{align}
    F_2(x,y):=F(x) \cdot F(y).
\end{align}
The 2-d $O$-weighted approximated CDF (ACDF) is defined to be
\begin{align}
    \wt{C_{O,2}}(x,y) := (F_2 * p_{O,2})(x,y).
\end{align}
The following lemma shows that $\wt{C_{O,2}}(x,y)$ is close to $C_{O,2}(x',y')$ for some $(x',y')$ close to $(x,y)$.
\begin{lemma}[Approximation ratio of the 2-d $O$-weighted ACDF]\label{lem:approx_acdf_2d}
For any $\epsilon>0$, $0<\delta < \pi/6$, let $F_2(x,y) := F_{d,\delta}(x) \cdot F_{d,\delta}(y)$ constructed by Lemma~\ref{lem:approx_Heaviside}. Then, for any $x,y\in [-\pi/3, \pi/3]$, the 2-d $O$-weighted ACDF $\wt{C_{O,2}}(x,y) = (F_2*p_{O,2})(x,y)$ satisfies:
\begin{align}
    C_{O,2}(x-\delta, y-\delta) -2\epsilon \leq \wt{C_{O,2}}(x,y) \leq C_{O,2}(x + \delta, y+\delta) + 2\epsilon.
\end{align}
\end{lemma}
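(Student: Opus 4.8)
The plan is to mirror the one–dimensional $O$-weighted argument already used (implicitly) for Lemma~\ref{lem:approx_o_cdf}, but to decouple the two coordinates by a hybrid (telescoping) step so that each coordinate is handled by a single one–dimensional comparison. First I would expand both objects in the eigenbasis exactly as in Lemma~\ref{lem:est_acdf_observable}, writing
\begin{align}
\wt{C_{O,2}}(x,y) = \sum_{k,k'} c_k^* c_{k'} O_{k,k'}\, F(x-\tau\lambda_k)\,F(y-\tau\lambda_{k'}),
\end{align}
and likewise $C_{O,2}(x',y') = \sum_{k,k'} c_k^* c_{k'} O_{k,k'}\, H(x'-\tau\lambda_k)\,H(y'-\tau\lambda_{k'})$ (equivalently, $\wt{C_{O,2}}(x,y)=\bra{\phi_0}F(x-\tau H)\,O\,F(y-\tau H)\ket{\phi_0}$, with $H$ in place of $F$ for $C_{O,2}$). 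The entire claim then reduces to controlling, inside the weighted sum, the product $F(a)F(b)$ against the shifted product $H(a\pm\delta)H(b\pm\delta)$.

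The input from the single-variable construction is the pointwise one–sided estimate implied by Lemma~\ref{lem:approx_Heaviside},
\begin{align}
H(z-\delta) - \epsilon \le F(z) \le H(z+\delta) + \epsilon \qquad \forall z,
\end{align}
together with the boundedness $0\le F\le 1+\epsilon$; these are precisely the inequalities that drive the one–dimensional bound. For the upper shift I would insert a hybrid term,
\begin{align}
F(a)F(b) - H(a+\delta)H(b+\delta) = F(a)\,[\,F(b)-H(b+\delta)\,] + H(b+\delta)\,[\,F(a)-H(a+\delta)\,],
\end{align}
and symmetrically for the lower shift. Summing the first group against the weights $c_k^* c_{k'} O_{k,k'}$ collapses, after absorbing the $x$–factor $F(x-\tau\lambda_k)$ into an effective weight, to a one–dimensional $O$-weighted ACDF–versus–CDF comparison in the $y$ variable; the second group reduces, after absorbing the shifted Heaviside $H(y+\delta-\tau\lambda_{k'})$, to the analogous comparison in $x$. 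Bounding each group by $\epsilon$ exactly as in Lemma~\ref{lem:approx_o_cdf} and applying the triangle inequality gives the total error $2\epsilon$, and the fact that the two shifts $x\mapsto x\pm\delta$ and $y\mapsto y\pm\delta$ act independently is what lets the two one–dimensional estimates be applied simultaneously.

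The hard part, and the place where this genuinely differs from the Lin--Tong probability CDF, is that the 2-d weights $c_k^* c_{k'} O_{k,k'}$ are complex and \emph{not} sign-definite, so one cannot simply transfer a pointwise product inequality into the weighted sum the way one does in the nonnegative case. Thus the delicate bookkeeping is in the telescoping step: after absorbing one coordinate's factor, I must argue that the resulting effective one–dimensional weights still have controlled total variation, and — crucially — that the transition (``bad'') window of width $\delta$ around the evaluation point is \emph{absorbed by the shift} rather than contributing an $O(1)$ term from eigenvalues $\tau\lambda_k$ lying inside that window. This is where the product structure $F_2 = F\cdot F$ and the uniform bounds $|F|,|H|\le 1+\epsilon$ have to be used with care, and showing that each hybrid piece really costs only $\epsilon$ (and not an uncontrolled contribution from the shifted window) is the step I expect to demand the most attention.
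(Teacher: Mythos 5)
Your decomposition is genuinely different from the paper's: the paper bounds $\lvert C_{O,2}-\wt{C_{L,2}}\rvert$ by one direct two-dimensional convolution estimate, partitioning the integration square into regions where $\lvert H_2-F_{L,2}\rvert\le 2\epsilon$ plus two ``bad'' strips of width $2\delta$ (its Figure~5), whereas you telescope the product $F(a)F(b)-H(a\pm\delta)H(b\pm\delta)$ into two hybrid terms and invoke the one-dimensional comparison twice. Two remarks on the ingredients before the main point: your pointwise sandwich $H(z-\delta)-\epsilon\le F(z)\le H(z+\delta)+\epsilon$ is \emph{not} valid for all $z$ — by $2\pi$-periodicity it fails on the wrap-around window near $z=\pm\pi$ (there $F(z)\approx 1$ while $H(z+\delta)=0$) — but it does hold on $[-2\pi/3,2\pi/3]$, which covers all arguments $x-\tau\lambda_k$ arising here, so this is a repairable domain restriction rather than a real error.

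The genuine gap is exactly the step you flag and then leave open: transferring the hybrid bounds through the weights $c_k^*c_{k'}O_{k,k'}$. Your proposed repair — showing the ``effective one-dimensional weights still have controlled total variation'' — fails: the total variation of $p_{O,2}$ is $\sum_{k,k'}\lvert c_k\rvert\lvert c_{k'}\rvert\lvert O_{k,k'}\rvert$, and $\|O\|\le 1$ does not control the entrywise absolute values of $O$ (this sum can grow with the dimension), so no $O(\epsilon)$ bound follows that way. The workable route is operatorial: write the hybrid term as $\bra{\phi_0}F(x-\tau H)\,O\,D_y\ket{\phi_0}$ with $D_y$ diagonal with entries $F(y-\tau\lambda_{k'})-H(y+\delta-\tau\lambda_{k'})$, and apply Cauchy--Schwarz with $\|F(x-\tau H)\|\le 1$, $\|O\|\le 1$ to get the bound
\begin{align}
\left\lvert\bra{\phi_0}F(x-\tau H)\,O\,D_y\ket{\phi_0}\right\rvert \le \Bigl(\sum_{k'}p_{k'}\bigl\lvert F(y-\tau\lambda_{k'})-H(y+\delta-\tau\lambda_{k'})\bigr\rvert^2\Bigr)^{1/2},
\end{align}
which is $\le\epsilon$ only if no eigenvalue $\tau\lambda_{k'}$ lies in the transition window $(y,y+\delta]$, where the bracket can be of order $1$; in general you pick up an extra additive $\sqrt{\text{spectral mass in the window}}$, not the claimed $\epsilon$. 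So your telescoping proves the sandwich only at points whose $\delta$-windows are eigenvalue-free — which is precisely the situation in Claim~\ref{clm:2d_good} where the lemma is applied, but not the statement for all $x,y\in[-\pi/3,\pi/3]$. In fairness, the paper's own proof has the same lacuna: it drops absolute values on $p_2$, invokes ``$p_{O,2}$ is a density function,'' and bounds the strip contributions by increments of $C_{O,2}$, all of which presuppose sign-definiteness that the paper itself notes fails for the $O$-weighted density. Your instinct that this is the delicate point is correct; the proposal as written does not close it, and the total-variation strategy cannot.
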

\begin{proof}
By (2) in Lemma~\ref{lem:approx_Heaviside}, we have
\begin{align}
    |F(x) - H(x)|\leq \epsilon ~~~\forall x\in [-\pi + \delta, -\delta]\cup [\delta, \pi - \delta],
\end{align}
which implies that for all $x,y\in [-\pi + \delta, -\delta]\cup [\delta, \pi - \delta]$,
\begin{align}
    |F_2(x,y) - H_2(x,y)|\leq &~ |F(x)F(y)-H(x)H(y)|\\
    = &~ |F(x)F(y)- F(x) H(y) + F(x) H(y) - H(x)H(y)|\notag\\
    \leq &~ F(x) |F(y)-H(y)| + H(y)|F(x)-H(x)|\notag\\
    \leq &~ (F(x) + H(y))\epsilon\notag\\
    \leq &~ 2\epsilon,
\end{align}
where the last step follows from $F(x)\in [0,1]$ by (1) in Lemma~\ref{lem:approx_Heaviside}. Furthermore, we also have for $x\in [-\delta, \delta]$, $y\in [-\pi + \delta, -\delta]$,
\begin{align}
    |F_2(x,y)-H_2(x,y)| \leq &~ |F(x)F(y)-H(x)H(y)|\\
    = &~ |F(x)F(y)|\tag{$H(y)=0$}\\
    \leq &~ F(y)\notag\\
    \leq &~ \epsilon.
\end{align}
Similarly, for $x\in [-\pi+\delta, -\delta]$, $y\in [-\delta, \delta]$,
\begin{align}
    |F_2(x,y)-H_2(x,y)| \leq \epsilon.
\end{align}

Define $F_{L,2} := F_2(x - \delta,y-\delta)$ such that
\begin{align}\label{eq:F_L}
    |F_{L,2}(x) - H_2(x)|\leq 2\epsilon ~~~\forall (x,y)\in &~  [-\pi + 2\delta, 0]\times [-\pi+2\delta, \pi]\\
    \cup &~ [-\pi+2\delta, \pi]\times [-\pi+2\delta, 0]\notag\\
    \cup &~ [2\delta, \pi]\times [2\delta, \pi].\notag
\end{align}
For $\wt{C_{L,2}}(x,y) := (F_{L,2} * p_{O,2})(x,y)$, we have $\wt{C_{L,2}}(x,y) = \wt{C_{O,2}}(x - \delta,y-\delta)$.

Let $p_2:=p_{O,2}$. Then, for any $x,y \in [-\pi/3, \pi/3]$, we have
\begin{align}\label{eq:approx_C_2}
    &\left|C_{O,2}(x,y) - \wt{C_{L,2}}(x,y)\right| =~ \left|\int_{-\pi}^{\pi}\int_{-\pi}^{\pi} p_2(x-u,y-v) (H_2(u,v) - F_{L,2}(u,v)) \d u \d v\right|\\
    \leq &~ \int_{-\pi}^{\pi}\int_{-\pi}^{\pi} p_2(x-u,y-v) |H_2(u,v) - F_{L,2}(u,v)| \d u \d v\notag\\
    = &~ \left(\int_{-\pi}^0\int_{-\pi}^{\pi} + \int_{0}^{\pi}\int_{-\pi}^0+ \int_{2\delta}^\pi\int_{2\delta}^\pi\right) p_2(x-u,y-v)) |H_2(u,v) - F_{L,2}(u,v)| \d u \d v\notag\\
    & + \left(\int_0^{2\delta}\int_0^{\pi} + \int_0^{\pi}\int_0^{2\delta} - \int_0^{2\delta}\int_{0}^{2\delta}\right) p_2(x-u,y-v)) |H_2(u,v) - F_{L,2}(u,v)| \d u \d v\notag\\
    \leq &~ 2\epsilon\cdot \left(\int_{-\pi}^0\int_{-\pi}^{\pi} + \int_{0}^{\pi}\int_{-\pi}^0+ \int_{2\delta}^\pi\int_{2\delta}^\pi\right) p_2(x-u,y-v)\d u\d v\notag\\
    & + \left(\int_0^{2\delta}\int_0^{\pi} + \int_0^{\pi}\int_0^{2\delta} - \int_0^{2\delta}\int_{0}^{2\delta}\right) p_2(x-u,y-v)) |H_2(u,v) - F_{L,2}(u,v)| \d u \d v\notag\\
    \leq &~ 2\epsilon + \left(\int_0^{2\delta}\int_0^{\pi} + \int_0^{\pi}\int_0^{2\delta} - \int_0^{2\delta}\int_{0}^{2\delta}\right) p_2(x-u,y-v)) |H_2(u,v) - F_{L,2}(u,v)| \d u \d v\notag\\
    \leq &~ 2\epsilon + \left(\int_0^{2\delta}\int_0^{\pi} + \int_0^{\pi}\int_0^{2\delta} - \int_0^{2\delta}\int_{0}^{2\delta}\right) p_2(x-u,y-v) \d u\d v\notag\\
    = &~ 2\epsilon + \left(\int_{x-2\delta}^{x}\int_{y-\pi}^y + \int_{x-\pi}^x\int_{y-2\delta}^y -\int_{x-2\delta}^x\int_{y-2\delta}^y\right) p_2(u,v) \d u\d v\label{eq:integral_last}\\
    = &~ 2\epsilon + C_{O,2}(x,y) - C_{O,2}(x-2\delta,y-2\delta),\notag
\end{align}
where the second step follows from Cauchy-Schwarz inequality, the third step follows from partitioning the integration region, the forth step follows from Eq.~\eqref{eq:F_L} and the fact that $p(x,y)$ is supported in $[-\pi/3,\pi/3]\times [-\pi/3,\pi/3]$ and $\delta<\pi/6$ (see Figure~\ref{fig:integral} (a)), the fifth step follows from $p_{O,2}(x)$ is a density function, the last step follows from $C_{O,2}(x)$ is the CDF of $p_{O,2}(x)$ in $[-\pi, \pi]\times [-\pi,\pi]$ and $x,y\in [-\pi/3,\pi/3]$ (see Figure~\ref{fig:integral} (b)).

\begin{figure}[ht]
    \centering
    \subfigure[] {\includegraphics{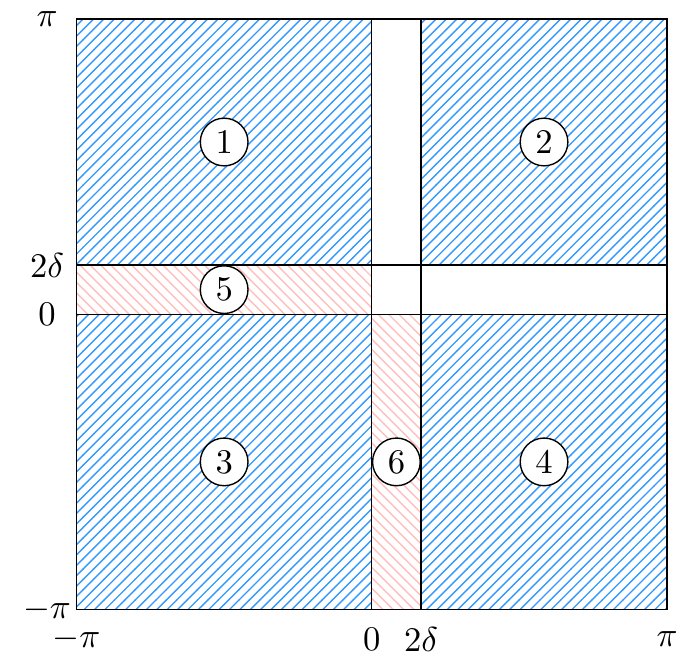}}
    \subfigure[] {\includegraphics[scale=0.95]{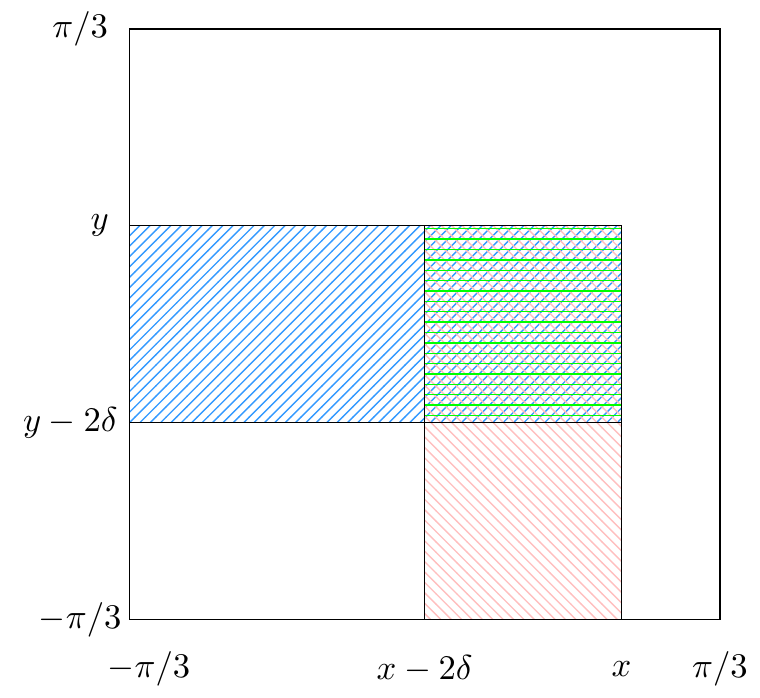}}\label{fig:integral_2}
    \caption{(a) is the integral region for Eq.~\eqref{eq:approx_C_2}, where the integral in regions 1-6 can be upper bounded by Eq.~\eqref{eq:F_L}. (b) is the integral region for Eq.~\eqref{eq:integral_last}.}
    \label{fig:integral}
\end{figure}

Hence, we have
\begin{align}
    \wt{C_{L,2}}(x,y) \geq &~ C_{O,2}(x,y) - (2\epsilon + C_{O,2}(x,y) - C_{O,2}(x - 2\delta,y-2\delta))\notag\\
    = &~ C_{O,2}(x - 2\delta,y-2\delta) - 2\epsilon,
\end{align}
which proves the first inequality: 
\begin{align}
    \wt{C_{O,2}}(x - \delta,y-\delta) \geq C_{O,2}(x-2\delta,y-2\delta) - 2\epsilon.
\end{align}

Similarly, we can define $F_{R,2} := F_2(x + \delta,y+\delta)$ and $\wt{C_{R,2}}(x,y) := (F_{R,2} * p_2)(x,y)$. We can show that
\begin{align}
    \left|C_{O,2}(x,y) - \wt{C_{R,2}}(x,y)\right| \leq 2\epsilon + C_{O,2}(x+2\delta,y+2\delta) - C_{O,2}(x,y),
\end{align}
which gives
\begin{align}
    \wt{C_{O,2}}(x + \delta,y+\delta) \leq C_{O,2}(x + 2\delta,y+2\delta) +2\epsilon.
\end{align}

The lemma is then proved.
\end{proof}

\subsection{Estimating the 2-d ACDF}

We use the following parameterized quantum circuit to estimate the 2-d $O$-weighted ACDF $\wt{C_{O,2}}(x,y)$.
\begin{figure}[H]
    \centering
		\begin{displaymath}
		\Qcircuit @C=1.0em @R=1.2em {
			& & & &\\			
			\lstick{\ket{0}}
			&\gate{\mathrm{H}}	 &\ctrl{1}	& \ctrl{1} & \ctrl{1} & \gate{\mathrm{W}}
			& \gate{\mathrm{H}}			&\meter\\
			\lstick{\ket{\phi_0}} 	 & \qw & \gate{e^{-it_1 H}} & \gate{O} & \gate{e^{-it_2H}}		 
			&\qw &\qw &\qw
 		}		
		\end{displaymath}		
    \caption{Quantum circuit parameterized by $t_1,t_2$. $\mathrm{H}$ is the Hadamard gate and $\mathrm{W}$ is either $I$ or a phase gate $S$. }
    \label{fig:hadamard_test_o_2d}
\end{figure}

\begin{lemma}[Estimate 2-d $O$-weighted ACDF]\label{lem:est_2d_acdf}
For any $x,y\in [-\pi/3, \pi/3]$, for any $\epsilon, \nu\in (0,1)$, we can estimate $\wt{C_{O,2}}(x,y)$ with additive error $\eta\epsilon$ with probability $1-\nu$ by running the quantum circuit (Figure~\ref{fig:hadamard_test_o_2d}) $O(\epsilon^{-2}\eta^{-2}\log(1/\nu))$ times with maximal evolution time $\wt{O}(\gamma^{-1})$ and total expected evolution time $\wt{O}(\gamma^{-1}\epsilon^{-1}\eta^{-1})$.
\end{lemma}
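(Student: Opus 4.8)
The plan is to mirror the one-dimensional construction of Lemma~\ref{lem:est_acdf_observable}, lifted to the two time-variables $(t_1,t_2)$ of the circuit in Figure~\ref{fig:hadamard_test_o_2d}. First I would establish the double Fourier expansion
\begin{align}
\wt{C_{O,2}}(x,y) = \sum_{|j|\le d}\sum_{|j'|\le d}\hat{F}_j\hat{F}_{j'}\,e^{ijx}e^{ij'y}\cdot\bra{\phi_0}e^{-ij\tau H}Oe^{-ij'\tau H}\ket{\phi_0},
\end{align}
which follows by writing $F_2=F(x)F(y)$, substituting the Fourier series of each factor into $(F_2 * p_{O,2})$, and collapsing the two Dirac combs exactly as in the proof of Lemma~\ref{lem:est_acdf_observable}; the surviving inner sum is precisely the symmetric expectation expanded at the start of this section. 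The key primitive is a two-sided Hadamard test: in analogy with Claim~\ref{clm:estimator_expectation_observable}, I would prove that running the circuit of Figure~\ref{fig:hadamard_test_o_2d} at evolution times $t_1=j\tau$, $t_2=j'\tau$ with $\mathrm{W}\in\{I,S\}$ yields outcomes $X,Y\in\{\pm1\}$ with $\E[X+iY]=\bra{\phi_0}e^{-ij\tau H}Oe^{-ij'\tau H}\ket{\phi_0}$.

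With this in hand I would define the single-shot estimator
\begin{align}
G_2(x,y;\mathbf{J},\mathbf{J}',\mathbf{Z}) := \mathcal{F}^2\,\mathbf{Z}\,e^{i(\theta_{\mathbf{J}}+\mathbf{J}x)}\,e^{i(\theta_{\mathbf{J}'}+\mathbf{J}'y)},
\end{align}
where $\mathbf{J},\mathbf{J}'$ are drawn \emph{independently} with $\Pr[\mathbf{J}=j]=|\hat{F}_j|/\mathcal{F}$, $\mathcal{F}=\sum_{|j|\le d}|\hat{F}_j|$, and $\mathbf{Z}=X+iY$ comes from the circuit run at $(t_1,t_2)=(\mathbf{J}\tau,\mathbf{J}'\tau)$. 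Unbiasedness, $\E[G_2]=\wt{C_{O,2}}(x,y)$, follows by taking the expectation factor-by-factor and using $\hat{F}_j=|\hat{F}_j|e^{i\theta_j}$ together with the Hadamard-test identity above. For the variance I would use $|\mathbf{Z}|^2=X^2+Y^2=2$ to get $\Var[G_2]\le\E[|G_2|^2]=2\mathcal{F}^4$, and then invoke $|\hat{F}_j|=O(1/|j|)$ from Lemma~\ref{lem:approx_Heaviside} to conclude $\mathcal{F}=O(\log d)$, hence $\Var[G_2]=O(\log^4 d)=\wt{O}(1)$.

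The accuracy and run-count claims then follow from the median-of-means estimator exactly as in Lemma~\ref{lem:est_overlap}: split $N_gK$ samples into $N_g=O(\log(1/\nu))$ groups of size $K=O(\Var[G_2]/(\eta\epsilon)^2)=\wt{O}(\epsilon^{-2}\eta^{-2})$, average within groups and take the median; Chebyshev's inequality controls each group mean and a Chernoff bound on the median gives additive error $\eta\epsilon$ with failure probability $\nu$, for a total run count $\wt{O}(\epsilon^{-2}\eta^{-2}\log(1/\nu))$. Since each run uses at most two evolutions of length $\le d\tau$, and once $\delta=O(\tau\gamma)$ fixes $d=O(\delta^{-1}\log(\delta^{-1}\epsilon^{-1}\eta^{-1}))$ we have $d\tau=\wt{O}(\gamma^{-1})$, the maximal evolution time is $2d\tau=\wt{O}(\gamma^{-1})$.

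The main obstacle is the \emph{total expected} evolution time. A direct product bound multiplies the run count by the per-sample expected evolution $\tau\,\E[|\mathbf{J}|+|\mathbf{J}'|]=O(\tau d/\log d)=\wt{O}(\gamma^{-1})$, which only yields $\wt{O}(\gamma^{-1}\epsilon^{-2}\eta^{-2})$. To reach the stated $\wt{O}(\gamma^{-1}\epsilon^{-1}\eta^{-1})$ I would import the multilevel Monte Carlo technique underlying Theorem~\ref{thm:lt21_main} of \cite{lt21}: partition the frequency indices into dyadic levels $|j|\in(2^{\ell-1},2^{\ell}]$ (and likewise for $j'$), estimate each level's contribution separately, and allocate samples so that the long, expensive evolutions at the high-frequency levels are drawn only rarely while the bulk of the samples sit at the cheap low-frequency levels. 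The delicate point — and where I expect the real work to lie — is to choose the per-level sample counts so that the aggregated estimator simultaneously preserves the $\eta\epsilon$ additive-error guarantee and the $\wt{O}(\epsilon^{-2}\eta^{-2}\log(1/\nu))$ run count while driving the summed evolution time down by the extra factor $\epsilon^{-1}\eta^{-1}$; carrying out this allocation for the two-variable product level structure, and verifying that the geometric growth of per-level cost does not swamp the savings, is the crux of the argument.
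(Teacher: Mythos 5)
Your construction is, step for step, the paper's own proof: the double Fourier expansion, the two-sided Hadamard test giving $\E[X+iY]=\bra{\phi_0}e^{-ij\tau H}Oe^{-ij'\tau H}\ket{\phi_0}$, the estimator $\mathcal{F}^2\,Z\,e^{i(\theta_J+Jx)}e^{i(\theta_{J'}+J'y)}$ with $J,J'$ drawn independently proportional to $|\hat{F}_j|$, the variance bound $2\mathcal{F}^4=\wt{O}(1)$, the median-of-means accounting, and the maximal evolution time $2d\tau=\wt{O}(\gamma^{-1})$ are all identical to what the paper does. The one place you diverge is the total expected evolution time, and here you have actually out-diagnosed the paper: your ``direct product bound'' of $\wt{O}(\gamma^{-1}\epsilon^{-2}\eta^{-2})$ is exactly what the paper's own proof concludes (``Hence, the total expected evolution time is $\wt{O}(\gamma^{-1}\epsilon^{-2}\eta^{-2})$''). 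The $\wt{O}(\gamma^{-1}\epsilon^{-1}\eta^{-1})$ in the lemma statement is a typo, as confirmed by Theorem~\ref{thm:app_sim}, which quotes this lemma and states $\wt{O}(\gamma^{-1}\epsilon^{-2}\eta^{-2})$, and by the Discussion section, which explicitly attributes the $\epsilon^{-2}\eta^{-2}$ factor to the number of samples needed to reach error $O(\epsilon\eta)$.

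The genuine gap in your writeup is therefore the final paragraph, where you treat the stated $\epsilon^{-1}\eta^{-1}$ as a target and sketch a dyadic multilevel allocation to reach it. That rescue cannot work. The ``multi-level Monte Carlo'' of \cite{lt21} is precisely the importance sampling over $j$ with $\Pr[J=j]\propto|\hat{F}_j|$ that you have already built in; it optimizes the tradeoff between per-sample evolution time and variance up to log factors, and no reallocation of samples across frequency levels changes the fundamental accounting: an estimator with per-shot range $O(\mathcal{F}^2)=\wt{O}(1)$ and target additive error $\eta\epsilon$ requires $\Omega((\eta\epsilon)^{-2})$ incoherent shots, each costing expected evolution $\wt{\Theta}(\gamma^{-1})$, for a total of $\wt{\Omega}(\gamma^{-1}\epsilon^{-2}\eta^{-2})$. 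Beating the $(\eta\epsilon)^{-2}$ shot count requires coherent amplitude estimation, which (as the paper notes in Section~\ref{sec:discuss}) would inflate the maximal evolution time and contradict the $\wt{O}(\gamma^{-1})$ maximal-time claim you correctly established. So you should delete the multilevel paragraph, state the total as $\wt{O}(\gamma^{-1}\epsilon^{-2}\eta^{-2})$, and flag the lemma statement's exponent as erroneous; everything else in your proof stands as written.
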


\begin{proof}
$\wt{C_{O,2}}(x,y)$ can be expanded in the following way:
\begin{align}
    \wt{C_{O,2}}(x,y) = &~ (F_2 * p_2)(x,y)\\
    = &~ \int_{-\pi}^\pi \int_{-\pi}^\pi F_2(x-u,y-v) p_2(u,v) \d u \d v\notag\\
    = &~ \sum_{|j|\leq d,|j'|\leq d} \int_{-\pi}^\pi\int_{-\pi}^\pi \hat{F}_j \hat{F}_{j'}e^{ij(x-u)} e^{ij'(y-v)} p_2(u,v)\d u\d v\notag\\
    = &~ \sum_{|j|\leq d,|j'|\leq d} \hat{F}_j\hat{F}_{j'}e^{i(jx+j'y)}\int_{-\pi}^\pi\int_{-\pi}^\pi p_2(u,v) e^{-iju}e^{-ij'v}\d u \d v\notag\\
    = &~ \sum_{|j|\leq d,|j'|\leq d} \hat{F}_j\hat{F}_{j'}e^{i(jx+j'y)} \sum_{k,k'} c_k^*c_k O_{k,k'} e^{-ij\tau\lambda_k}e^{-ij'\tau\lambda_{k'}}\notag\\
    = &~ \sum_{|j|\leq d,|j'|\leq d} \hat{F}_j\hat{F}_{j'}e^{i(jx+j'y)} \cdot \bra{\phi_0} e^{-ij\tau H} O e^{-ij'\tau H}\ket{\phi_0},
\end{align}

To estimate $\bra{\phi_0} e^{-ij\tau H} O e^{-ij'\tau H}\ket{\phi_0}$, we use the multi-level Monte Carlo method. Define a random variables $J,J'$ with support $\{-d, \cdots, d\}$ such that
\begin{align}\label{eq:def_J2}
    \Pr[J=j,J'=j']=\frac{|\hat{F}_j| |\hat{F}_{j'}|}{{\cal F}^2},
\end{align}
where ${\cal F}:=\sum_{|j|\leq d}|\hat{F}_j|$. Then, let $Z:=X_{J,J'} + i Y_{J,J'}\in \{\pm 1\pm i\}$. Define an estimator $\ov{G_2}(x; J, J', Z)$ as follows:
\begin{align}\label{eq:def_g2}
    \ov{G_2}(x,y; J, Z):={\cal F}^2\cdot Z e^{i(\theta_J + Jx)}e^{i(\theta_{J'} + J'y)},
\end{align}
where $\theta_j$ is defined by $\hat{F}_j = |\hat{F}_j|e^{i\theta_j}$, and similar definition for $\theta_{j'}$. Then, we show that $\ov{G_2}(x,y; J, Z)$ is un-biased:
\begin{align}
    \E[\ov{G_2}(x,y; J, J', Z)] = &~ \sum_{|j|\leq d, |j'|\leq d} \E\left[(X_{j,j'} + iY_{j,j'})e^{i(\theta_j + jx)}e^{i(\theta_{j'} + j'y)}|\hat{F}_j||\hat{F}_{j'}|\right]\\
    = &~ \sum_{|j|\leq d, |j'|\leq d} \hat{F}_j\hat{F}_{j'} e^{ijx}e^{ij'y} \cdot \E\left[X_{j,j'} + iY_{j,j'}\right]\notag\\
    = &~ \sum_{|j|\leq d, |j'|\leq d} \hat{F}_j\hat{F}_{j'} e^{ijx}e^{ij'y} \cdot \bra{\phi_0} e^{-ij\tau H} O e^{-ij'\tau H}\ket{\phi_0}\notag\\
    = &~ \wt{C_2}(x,y),
\end{align}
where the third step follows from Claim~\ref{clm:estimator_expectation}. Moreover, the variance of $\ov{G_2}$ can be upper-bounded by:
\begin{align}
    \Var[\ov{G_2}(x,y; J, J', Z)]= &~ \E[|\ov{G_2}(x,y; J,J', Z)|^2] - |\E[\ov{G_2}(x,y; J,J', Z)]|^2\\
    \leq &~ \E[|\ov{G_2}(x,y; J,J', Z)|^2]\notag\\
    = &~ {\cal F}^4 \cdot \E[|X_{J,J'} + i Y_{J,J'}|^2]\notag\\
    = &~ 2{\cal F}^4,
\end{align}
where the third step follows from $|e^{i(\theta_J+Jx)}|=|e^{i(\theta_{J'}+J'y)}|=1$, and the last step follows from $X_{j,j'}, Y_{j,j'}\in \{\pm 1\}$.

By Lemma~\ref{lem:approx_Heaviside}, we know that ${\cal F} = \wt{O}(1)$. Hence, we have for all $x,y\in [-\pi/3, \pi/3]$,
\begin{align}
    \E[\ov{G_2}(x,y)]=\wt{C_{O,2}}(x,y),~~~\text{and}~~~\Var[\ov{G_2}(x,y)]=\wt{O}(1).
\end{align}

Then, using median-of-means estimator, we can obtain an $\epsilon$-additive error estimate of $\wt{C_{O,2}}(x,y)$ with probability $1-\nu$ using $O(\epsilon^{-2}\eta^{-2}\log(1/\nu))$ samples. 

The maximal evolution time is $2d=\wt{O}(\gamma^{-1})$. And the expected evolution time for one trial is
\begin{align}
    \tau \sum_{|j|,|j'|\leq d}(j+j')\frac{|\hat{F}_j||\hat{F}_{j'}|}{{\cal F}^2}=2\tau \sum_{|j|\leq d}j\frac{|\hat{F}_j|}{{\cal F}}=O(\tau d / \log(d)).
\end{align}
Hence, the total expected evolution time is $\wt{O}(\gamma^{-1}\epsilon^{-2}\eta^{-2})$.

The lemma is then proved.
\end{proof}

\begin{figure}[ht!]
    \centering
    \subfigure[] {\includegraphics[scale=2]{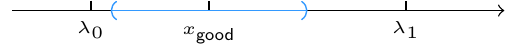}}
    \subfigure[] {\includegraphics[scale=1.5]{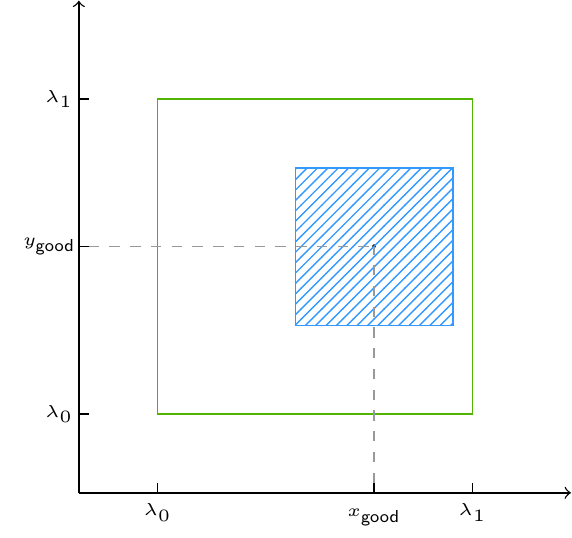}}
    \caption{(a) shows a point that is \emph{good} for $\lambda_0$, where the blue interval is the approximation region such that $\wt{C_O}(x_{\mathsf{good}})$ is close to $C(x)$ for some $x$ in this interval. (b) shows a good point in the 2-d case, where in the green square, the 2-d $O$-weighted CDF $C_{O,2}$ takes the same value $C_{O,2}(\lambda_0, \lambda_0)$. And the blue square is the approximation region of $(x_{\mathsf{good}}, y_{\mathsf{good}})$ such that $\wt{C_{O,2}}(x_{\mathsf{good}}, y_{\mathsf{good}})$ is close to some $C_{O,2}(x,y)$ in this region.}
    \label{fig:my_label}
\end{figure}

Similar to the 1-d case, we can construct a ``good'' point for $(\lambda_0, \lambda_0)$ via the following claim. 
\begin{claim}[Construct 2-d good point]\label{clm:2d_good}
Let $\gamma$ be the spectral gap of the Hamiltonian $H$. Let $x_{\mathsf{good}}:=x^\star + \tau\gamma/2$ where $x^\star$ is the output of $\textsc{EstimateGSE}(\gamma/8, \tau, \eta, \nu/10)$ (Algorithm~\ref{alg:gs_energy}). Then, $(x_{\mathsf{good}}, x_{\mathsf{good}})$ is good for $(\lambda_0, \lambda_0)$. In particular, for any $\epsilon\in (0, 1)$, if the approximation error of $F(x)$ is set to be $\epsilon\eta$, then
\begin{align}
    \left|\wt{C_{O,2}}(x_{\mathsf{good}}, x_{\mathsf{good}})-C_{O,2}(\lambda_0, \lambda_0)\right|\leq 2\epsilon \eta.
\end{align}
\end{claim}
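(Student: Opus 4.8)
The plan is to reduce this two-dimensional statement to the one-dimensional goodness already established in Claim~\ref{clm:good_x} and then feed the conclusion into the two-dimensional sandwich bound of Lemma~\ref{lem:approx_acdf_2d}. First I would note that \textsc{EstimateGSE} is invoked here with target accuracy $\gamma/8$, which lies in $(0,\gamma/4)$, so Claim~\ref{clm:good_x} applies verbatim: its output $x^\star$ obeys $x^\star - \tau\gamma/8 < \tau\lambda_0 \leq x^\star + \tau\gamma/8$, and hence $x_{\mathsf{good}} = x^\star + \tau\gamma/2$ is good for $\lambda_0$ in the one-dimensional sense. Concretely this yields a neighborhood $[x_{\mathsf{good}} - w, x_{\mathsf{good}} + w] \subset (\tau\lambda_0, \tau\lambda_1)$ whose half-width $w$ is a constant fraction of $\tau\gamma$. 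Choosing the shift $\delta = \tau\gamma/5$ exactly as in the commutative case (which also respects the hypothesis $0 < \delta < \pi/6$ of Lemma~\ref{lem:approx_acdf_2d}) then gives the containment $[x_{\mathsf{good}} - \delta, x_{\mathsf{good}} + \delta] \subset (\tau\lambda_0, \tau\lambda_1)$ with room to spare, since $\delta < w$.

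Next I would lift this to two dimensions. Because both coordinates of $(x_{\mathsf{good}}, x_{\mathsf{good}})$ are good for $\lambda_0$, the entire square $[x_{\mathsf{good}} - \delta, x_{\mathsf{good}} + \delta]^2$ lies inside the open square $(\tau\lambda_0, \tau\lambda_1)^2$, i.e. the green region of Figure~\ref{fig:my_label}(b); this is precisely what it means for $(x_{\mathsf{good}}, x_{\mathsf{good}})$ to be good for $(\lambda_0, \lambda_0)$. On this square $C_{O,2}$ is constant: from the closed form $C_{O,2}(x,y) = \sum_{\tau\lambda_k \leq x,\ \tau\lambda_{k'} \leq y} c_k^* c_{k'} O_{k,k'}$, any $(x,y)$ with $\tau\lambda_0 \leq x,y < \tau\lambda_1$ retains only the $k = k' = 0$ term, so $C_{O,2}(x,y) = |c_0|^2 O_{0,0} = p_0 O_0$, the value I denote $C_{O,2}(\lambda_0, \lambda_0)$. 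In particular the two diagonally shifted corners $C_{O,2}(x_{\mathsf{good}} - \delta, x_{\mathsf{good}} - \delta)$ and $C_{O,2}(x_{\mathsf{good}} + \delta, x_{\mathsf{good}} + \delta)$ both evaluate to $C_{O,2}(\lambda_0, \lambda_0)$.

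Finally I would apply Lemma~\ref{lem:approx_acdf_2d} with its approximation-error parameter set to $\epsilon\eta$, so that $\wt{C_{O,2}}(x_{\mathsf{good}}, x_{\mathsf{good}})$ is sandwiched between $C_{O,2}(x_{\mathsf{good}} - \delta, x_{\mathsf{good}} - \delta) - 2\epsilon\eta$ and $C_{O,2}(x_{\mathsf{good}} + \delta, x_{\mathsf{good}} + \delta) + 2\epsilon\eta$. By the previous paragraph both bounding values collapse to $C_{O,2}(\lambda_0, \lambda_0)$, and the two inequalities combine into $|\wt{C_{O,2}}(x_{\mathsf{good}}, x_{\mathsf{good}}) - C_{O,2}(\lambda_0, \lambda_0)| \leq 2\epsilon\eta$, which is the claim.

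The one genuinely two-dimensional subtlety, and the step I would treat most carefully, is that Lemma~\ref{lem:approx_acdf_2d} shifts both coordinates simultaneously along the diagonal by $\delta$, so I must confirm that both shifted corners, and not merely the center point, remain inside the constancy square; this is exactly what forces $\delta$ to be taken as a constant fraction of $\tau\gamma$ strictly below the half-width $w$ of the good interval. Everything else is bookkeeping inherited from the one-dimensional argument, and the vanishing of the off-diagonal contributions $c_k^* c_{k'} O_{k,k'}$ for $(k,k') \neq (0,0)$ is immediate once containment in the square is verified.
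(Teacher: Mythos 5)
Your proposal is correct and follows essentially the same route as the paper's proof: invoke Claim~\ref{clm:good_x} to get one-dimensional goodness of $x_{\mathsf{good}}$, lift to the square $[x_{\mathsf{good}}-\delta, x_{\mathsf{good}}+\delta]^2 \subset (\tau\lambda_0,\tau\lambda_1)^2$, apply the two-dimensional sandwich bound with error parameter $\epsilon\eta$, and conclude via the constancy of $C_{O,2}$ on $[\tau\lambda_0,\tau\lambda_1)\times[\tau\lambda_0,\tau\lambda_1)$. If anything, your version is slightly more careful than the paper's: you cite the correct sandwich result, Lemma~\ref{lem:approx_acdf_2d} (the paper's proof cites Lemma~\ref{lem:est_2d_acdf}, the estimator lemma, which is evidently a typo), and you explicitly verify that $\gamma/8 \in (0,\gamma/4)$ and that the shifted corners at distance $\delta = \tau\gamma/5$ stay within the half-width $3\tau\gamma/8$ of the good interval.
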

\begin{proof}
By Claim~\ref{clm:good_x}, we know that $x_{\mathsf{good}}$ is good for $\lambda_0$, i.e., $[x_{\mathsf{good}}-\delta, x_{\mathsf{good}} + \delta]$ is contained in $[\lambda_0, \lambda_1)$. It also holds in the 2-d case for $(x_{\mathsf{good}}, x_{\mathsf{good}})$. Then, by Lemma~\ref{lem:est_2d_acdf}, we have
\begin{align}
    C_{O,2}(x_{\mathsf{good}}-\delta, x_{\mathsf{good}}-\delta) -2\epsilon \eta \leq \wt{C_{O,2}}(x_{\mathsf{good}},x_{\mathsf{good}}) \leq C_{O,2}(x_{\mathsf{good}} + \delta, x_{\mathsf{good}}+\delta) + 2\epsilon \eta.
\end{align}
The claim then follows from $C_{O,2}(x,y)=C_{O,2}(\lambda_0, \lambda_0)$ for any $(x,y)\in [\lambda_0, \lambda_1)\times [\lambda_0, \lambda_1)$.
\end{proof}

\subsection{Putting it all together}
The main algorithm for the ground state property estimation will first estimate the ground state energy $\lambda_0$ and the overlap $p_0$, which are described in Section~\ref{sec:est_overlap}. Then, by Lemma~\ref{lem:est_2d_acdf} and Claim~\ref{clm:2d_good}, the weighted expectation $p_0O_0$ can also be estimated. Hence, we will obtain an estimate for $O_0=\bra{\psi_0}O\ket{\psi_0}$.

\begin{algorithm}[ht]
\caption{Ground State Property Estimation (General Case)}
\label{alg:gs_prop} 
\begin{algorithmic}[1]
    \algrenewcommand\algorithmicprocedure{\textbf{procedure}}
	\Procedure{EstimateGSProp}{$\epsilon,\tau, \eta, \gamma, \nu$}

        \State $\delta \gets O(\tau \gamma)$, $d\gets O(\delta^{-1}\log(\delta^{-1}\epsilon^{-1}\eta^{-1}))$
        \For{$j\gets -d,\dots,d$}
            \State Compute $\hat{F}_j:=\hat{F}_{d,\delta,j}$ and $\theta_j$ 
        \EndFor
        \State \Comment{Estimate the ground state energy}
        \State $x^\star\gets \textsc{EstimateGSE}(\gamma/8, \tau, \eta, \nu/10)$
        \State $x_{\mathsf{good}}\gets x^\star + \tau \gamma / 2$
        
	    \State \Comment{Generate samples from the Hadamard test circuits}
        \State $B\gets O(\log(1/\nu))$, $K\gets \wt{O}(\epsilon^{-2})$
	    \For{$k\gets 1,\dots,B K$}
	        \State Sample $(Z_k, J_k)$ from the quantum circuit (Figure~\ref{fig:hadamard_test})
	        \State Sample $(Z_k'', J_{k,1}'', J_{k,2}'')$ from the quantum circuit (Figure~\ref{fig:hadamard_test_o_2d})
	    \EndFor
	    \State \Comment{Estimate $p_0$}
	    \For{$i\gets 1,\dots, B$}
	        \State $\ov{G}_i\gets \frac{1}{K}\sum_{j=1}^K\ov{G}(x_{\mathsf{good}}; Z_{(i-1)K+j}, J_{(i-1)K+j})$
	    \EndFor
	    \State $\ov{p_0}\gets \mathrm{median}(\ov{G}_1,\dots,\ov{G}_{B})$\label{ln:prop_p0}
	    \State \Comment{Estimate $p_0O_0$}
	    \For{$i\gets 1,\dots, B$}
	        \State $\ov{G}_i''\gets \frac{1}{K}\sum_{j=1}^K\ov{G_2}(x_{\mathsf{good}},x_{\mathsf{good}}; Z_{(i-1)K+j}'', J_{(i-1)K+j, 1}'', J_{(i-1)K+j, 2}'')$\Comment{Eq.~\eqref{eq:def_g2}}
	    \EndFor
	    \State $\ov{p_0O_0}\gets \mathrm{median}(\ov{G}_1'',\dots,\ov{G}_{B}'')$\label{ln:prop_p0O0}
        \State \Return $\ov{p_0O_0}/\ov{p_0}$
	\EndProcedure
\end{algorithmic}
\end{algorithm}

\begin{proof}[Proof of Theorem~\ref{thm:app_sim}]
We first analyze the estimation error of Algorithm~\ref{alg:gs_prop}. By Lemma~\ref{lem:est_overlap}, $\ov{p_0}$ (Line~\ref{ln:prop_p0}) has additive error at most $O(\eta\epsilon)$. By Lemma~\ref{lem:est_2d_acdf} and Claim~\ref{clm:2d_good}, $\ov{p_0O_0}$ (Line~\ref{ln:prop_p0O0}) has additive error at most $O(\eta\epsilon)$. Then, by a similar error propagation analysis in Theorem~\ref{thm:app_sim_com}, we get that
\begin{align}
    \left|\frac{\ov{p_0O_0}}{\ov{p_0}}-O_0\right|\leq O(\epsilon).
\end{align}

For the success probability, Algorithm~\ref{alg:gs_prop} has three components: estimate ground state energy, estimate $p_0$, and estimate $p_0O_0$. By our choice of parameters, each of them will fail with probability at most $\nu/3$. Hence, Algorithm~\ref{alg:gs_prop} will succeed with probability at least $1-\nu$.

The maximal evolution time and the total expected evolution time follows from Theorem~\ref{thm:lt21_main}, Lemma~\ref{lem:est_overlap}, and Lemma~\ref{lem:est_2d_acdf}.
\end{proof}

\section{Handling non-unitary observables}
\label{sec:general_alg}
One may notice that Algorithm \ref{alg:gs_prop} works only for unitary observables because it needs to use the circuit in Figure \ref{fig:hadamard_test_o_2d} to estimate $\bra{\phi_0}e^{-it_2 H} Oe^{-it_1 H}\ket{\phi_0}$ for certain $t_1, t_2 \in \mathbb{R}$, in which controlled-$O$ must be a unitary operation. In this section, we show that under reasonable assumptions this algorithm can be modified to estimate the ground state property $\bra{\psi_0} O \ket{\psi_0}$ where $O$ is a general observable. 

Before we present this result, one may wonder why it is necessary. After all, we can always decompose $O$ into a linear combination of Pauli strings $O=\sum_{\vec s} w_{\vec s} P_{\vec s}$, and use Algorithm \ref{alg:gs_prop} to estimate each term $\mu_{\vec s} := \bra{\psi_0} P_{\vec s} \ket{\psi_0}$ individually, and return $\sum_{\vec s} w_{\vec s} \mu_{\vec s}$ as the result. While this strategy works in principle, it might be not efficient enough to be practical, depending on the weights $w_{\vec s}$'s of Pauli strings in the linear expansion of $O$.

Alternatively, one can fix the issue of Algorithm \ref{alg:gs_prop} by designing a procedure for estimating $\bra{\phi_0}e^{-it_2 H} Oe^{-i t_1 H}\ket{\phi_0}$ for arbitrary non-unitary $O$. Such quantities are utilized in the same way as before. We have followed this approach and found that it is possible when there is a block-encoding of $O$. Namely, suppose $O$ is an $n$-qubit observable with $\|O\| \le 1$ and $U$ is an $(n+m)$-qubit unitary operator such that 
\begin{align}
(\bra{0^m}\otimes I) U (\ket{0^m} \otimes I) = \alpha^{-1} O 
\end{align}
for some $\alpha \ge \|O\|$. More details about the block-encoding model can be found in \cite{cgj18,lc19,gslw19,ral20}. Then we can still perform Hadamard test for $U$ to estimate $\bra{\phi_0}e^{-it_2 H} Oe^{-it_1 H}\ket{\phi_0}$ for arbitrary $t_1, t_2 \in \mathbb{R}$. The main theorem of this section is stated below:

\begin{theorem}[Ground state property estimation with block-encoded observable]\label{thm:app_sim_block}
Suppose $p_0\geq \eta$ for some known $\eta$ and the spectral gap of the Hamiltonian $H$ is at least $\gamma$. Suppose we have access to the $\alpha$-block-encoding of the observable $O$. For any $\epsilon,\nu\in (0, 1)$, there exists an algorithm for estimating the ground state property $\bra{\psi_0}O\ket{\psi_0}$ within additive error at most $\epsilon$ with probability at least $1-\nu$, such that:
\begin{enumerate}
    \item the expected total evolution time is $\wt{O}(\gamma^{-1}\epsilon^{-2}\eta^{-2}\alpha^2)$,
    \item the maximal evolution time is $\wt{O}(\gamma^{-1})$.
\end{enumerate}
\label{thm:gspe_complexity_general_case}
\end{theorem}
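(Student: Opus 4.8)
The plan is to mirror the proof of Theorem~\ref{thm:app_sim} essentially verbatim, changing only the single subroutine that estimates the 2-d $O$-weighted ACDF so that it uses the block-encoding $U$ in place of a controlled-$O$ gate. Everything downstream --- the construction of $x_{\mathsf{good}}$ via Claim~\ref{clm:2d_good}, the overlap estimate $\ov{p_0}$ from Lemma~\ref{lem:est_overlap}, and the final ratio $\ov{p_0 O_0}/\ov{p_0}$ together with its error-propagation bound and the union bound over the three failure events --- is unchanged and can be quoted directly. The only place where the non-unitarity of $O$ entered the old argument was the controlled-$O$ gate in Figure~\ref{fig:hadamard_test_o_2d}, so this is the only step that must be rebuilt.

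First I would modify that circuit by adjoining the $m$ block-encoding ancillas in the state $\ket{0^m}$ and replacing the single controlled-$O$ by a controlled-$U$, which is now a legitimate unitary operation. Writing $V := (I_m \otimes e^{-it_2 H})\, U\, (I_m \otimes e^{-it_1 H})$ for the net controlled operation, a short Hadamard-test computation gives the key identity
\begin{align}
\bra{0^m, \phi_0} V \ket{0^m,\phi_0}
&= \bra{\phi_0} e^{-it_2 H} \left[(\bra{0^m}\otimes I) U (\ket{0^m}\otimes I)\right] e^{-it_1 H}\ket{\phi_0} \notag\\
&= \alpha^{-1}\bra{\phi_0} e^{-it_2 H} O e^{-it_1 H}\ket{\phi_0},
\end{align}
using the defining property of the block-encoding. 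The crucial observation is that \emph{no post-selection} on the ancilla outcome is required: running the ordinary Hadamard test on the full unitary $V$ with the entire input $\ket{0^m,\phi_0}$ already isolates the $\bra{0^m}\cdots\ket{0^m}$ matrix element. Consequently the measured bits yield a random variable $\alpha(X_{j,j'}+iY_{j,j'})$ that is an unbiased estimator of $\bra{\phi_0} e^{-ij\tau H} O e^{-ij'\tau H}\ket{\phi_0}$, exactly the quantity estimated in Lemma~\ref{lem:est_2d_acdf}, now carrying an extra scalar factor $\alpha$.

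Second I would rerun the analysis of Lemma~\ref{lem:est_2d_acdf} with the rescaled estimator $\ov{G_2} := \alpha\,{\cal F}^2\, Z\, e^{i(\theta_J + Jx)}e^{i(\theta_{J'}+J'y)}$. Unbiasedness follows exactly as before, while the variance bound becomes $\Var[\ov{G_2}] \le 2\alpha^2 {\cal F}^4 = \wt{O}(\alpha^2)$, since $|Z|^2=2$ and ${\cal F}=\wt{O}(1)$ by Lemma~\ref{lem:approx_Heaviside}. A median-of-means estimator then attains additive error $\eta\epsilon$ with failure probability $\nu$ using $O(\alpha^2 \epsilon^{-2}\eta^{-2}\log(1/\nu))$ samples --- a factor $\alpha^2$ more than the unitary case --- whereas each run still applies only $e^{-it_1 H}$ and $e^{-it_2 H}$ with $|t_1|,|t_2|\le \tau d = \wt{O}(\gamma^{-1})$ together with a single call to $U$ (which contributes no Hamiltonian evolution time). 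Hence the maximal evolution time is unchanged at $\wt{O}(\gamma^{-1})$ and the expected total evolution time is $\wt{O}(\gamma^{-1}\epsilon^{-2}\eta^{-2}\alpha^2)$, precisely the claimed bounds.

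The main obstacle is not the asymptotics but justifying the no-post-selection identity cleanly: one must verify that the block-encoding ancillas, although never measured for acceptance, enter and leave in $\ket{0^m}$ at the level of the relevant matrix element, so that the Hadamard-test expectation picks up exactly $\alpha^{-1}O$ rather than a leakage term from the orthogonal complement $\ket{\perp}$. This amounts to proving a block-encoded variant of Claim~\ref{clm:estimator_expectation}, which I would relegate to the appendix; once it is in hand, the remaining steps are a routine transcription of the general-unitary proof with $\alpha^2$ inserted into the sample count and the error-propagation argument of Theorem~\ref{thm:app_sim_com} reused unchanged.
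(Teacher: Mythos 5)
Your proposal is correct and reproduces the paper's overall architecture exactly: like the paper, you keep Claim~\ref{clm:2d_good}, Lemma~\ref{lem:est_overlap}, the median-of-means/ratio construction, and the error-propagation and union-bound arguments of Theorems~\ref{thm:app_sim_com} and~\ref{thm:app_sim} untouched, and you modify only the subroutine that estimates $\bra{\phi_0}e^{-it_2H}Oe^{-it_1H}\ket{\phi_0}$. Within that subroutine, however, your estimator is genuinely different from the paper's. The paper measures the $m$ block-encoding ancillas \emph{mid-circuit} (after $C\mhyphen U$, before $C\mhyphen e^{-iHt_2}$), declares failure on any outcome other than $0^m$, and defines $X\in\{0,\pm\alpha\}$ with the value $0$ assigned on failure; unbiasedness is then verified in Appendix~\ref{sec:hadamard_test_block_encoding} by computing the joint outcome probabilities $\Pr[(0,0^m)]$, $\Pr[(1,0^m)]$ for each choice of $W$. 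You instead never measure the ancillas: you run the plain Hadamard test on the composite unitary $V=(I_m\otimes e^{-it_2H})U(I_m\otimes e^{-it_1H})$ with input $\ket{0^m}\ket{\phi_0}$ and set $X=\pm\alpha$ from the control qubit alone. This works, and the ``no-post-selection identity'' you flag as the main obstacle is in fact immediate rather than delicate: for any unitary $V$ and any input $\ket{\psi}$ the Hadamard test satisfies $\Pr[0]-\Pr[1]=\operatorname{Re}\bra{\psi}V\ket{\psi}$, a \emph{diagonal} matrix element, so the components of $V\ket{0^m,\phi_0}$ outside $\ket{0^m}$ contribute only to the two probabilities separately and cancel in the difference --- no leakage lemma is needed beyond the two-line computation you already wrote. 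Indeed one can check that both estimators have identical expectations (the cross terms are the same $\operatorname{Re}\bra{\psi}V\ket{\psi}$ whether or not one projects onto $0^m$), and both have variance $O(\alpha^2)$, so your sample count $O(\alpha^2\epsilon^{-2}\eta^{-2}\log(1/\nu))$ and the claimed evolution-time bounds follow exactly as in Lemma~\ref{lem:est_2d_acdf}. What each approach buys: yours is cleaner, avoiding the conditional-probability bookkeeping and the mid-circuit measurement entirely; the paper's post-selected version has second moment $\alpha^2 p_{succ}\le\alpha^2$, a constant-factor variance saving that the paper's appendix then improves further by replacing the first Hadamard gate with an $\alpha$-dependent single-qubit gate $G(a,b,\theta)$ --- a refinement orthogonal to the theorem's stated bounds, which both routes achieve.
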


\begin{proof}[Proof sketch of Theorem \ref{thm:app_sim_block}]
The algorithm for handling non-unitary block-encoded observables is quite similar to Algorithm \ref{alg:gs_prop} for handling unitary observables, except that it relies on a different procedure to estimate $\bra{\phi_0}e^{-it_2 H} Oe^{-it_1 H}\ket{\phi_0}$ for arbitrary $t_1, t_2 \in \mathbb{R}$. Here we briefly describe this procedure and defer the detailed analysis to Appendix \ref{sec:hadamard_test_block_encoding}. 

Let $C \mhyphen V := \ket{0}\bra{0} \otimes I + \ket{1}\bra{1} \otimes V$ be the controlled-V operation for arbitrary unitary operator $V$. Let $\ket{\phi_0}$ be an arbitrary $n$-qubit state. 
Consider the following procedure (as illustrated in Figure \ref{fig:hadamard_test_block_encoding}:

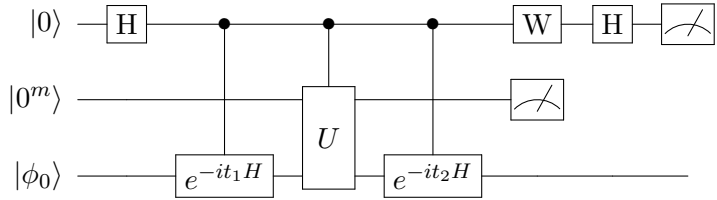
\begin{figure}[ht]
    \centering
		\begin{displaymath}
		\Qcircuit @C=1.0em @R=1.2em {
			& & & &\\			
			\lstick{\ket{0}}
			&\gate{\mathrm{H}}	 &\ctrl{2}	& \ctrl{1} & \ctrl{2} & \gate{\mathrm{W}}
			& \gate{\mathrm{H}}			&\meter\\
			\lstick{\ket{0^m}} 	 & \qw & \qw & \multigate{1}{U} & \qw	  &\meter\\
			\lstick{\ket{\phi_0}} & \qw & \gate{e^{-it_1 H}} & \ghost{U} & \gate{e^{-it_2 H}} & \qw & \qw &\qw
 		}		
		\end{displaymath}		
    \caption{Quantum circuit parameterized by $t_1,t_2$. $\mathrm{H}$ is the Hadamard gate and $\mathrm{W}$ is either $I$ or a phase gate $S$. $U$ is the block-encoding of the non-unitary observable $O$. }
    \label{fig:hadamard_test_block_encoding}
\end{figure}

\begin{enumerate}
    \item Prepare the state $\ket{0} \ket{0^m} \ket{\phi_0}$.
    \item Apply a Hadamard gate on the first register.
    \item Apply a $C \mhyphen e^{-i H t_1}$ on the first and third registers.
    \item Apply $C \mhyphen U$ on the current state, obtaining
    \begin{align}
        \dfrac{1}{\sqrt{2}} \left (  
        \ket{0} \ket{0^m} \ket{\phi_0} + \ket{1} U \ket{0^m} e^{-i H t_1}\ket{\phi_0}
        \right).
    \end{align}    
    \item Measure the second register in the standard basis. If the outcome is not $0^m$, then this procedure fails; otherwise, continue. The probability of this step succeeding is
    \begin{align}
        p_{succ} = \dfrac{1+ \alpha^{-2} \bra{\phi_0} e^{i H t_1} O^2 e^{-i H t_1} \ket{\phi_0}} {2},
    \end{align}
    and when this event happens, the state becomes
    \begin{align}
        \dfrac{1}{\sqrt{2 p_{succ}}} \left [
        \ket{0} \ket{\phi_0} + \alpha^{-1} \ket{1} O e^{-i H t_1} \ket{\phi_0} 
        \right ].
    \end{align}
    \item Apply a $C \mhyphen e^{-i H t_2}$ on the first and third registers. The state becomes
    \begin{align}
        \dfrac{1}{\sqrt{2 p_{succ}}} \left [
        \ket{0} \ket{\phi_0} + \alpha^{-1} \ket{1} e^{-i H t_2} O e^{-i H t_1} \ket{\phi_0} 
        \right ].
    \end{align}    
    \item Apply $W=I$ or phase gate $S$ on the first register.
    \item Apply a Hadamard gate on the first register.
    \item Measure the first register in the standard basis. Then if $W=I$, the (conditional) probability of getting outcome $0$ is 
    \begin{align}
    \mathbb{P}[0 | succ] =  \dfrac{p_{succ} + \alpha^{-1} \operatorname{Re}[ \bra{\phi_0} e^{-i H t_2} O e^{-i H t_1} \ket{\phi_0}] }{2 p_{succ}};
    \end{align}
    if $W=S$, this probability is    
    \begin{align}
    \mathbb{P}[0 | succ] =  \dfrac{p_{succ} - \alpha^{-1} \operatorname{Im}[ \bra{\phi_0} e^{-i H t_2} O e^{-i H t_1} \ket{\phi_0}] }{2 p_{succ}}.
    \end{align}     
\end{enumerate}

Now we define two random variables $X$ and $Y$ as follows. First, we run the above procedure with $W=I$ in step 7. If step 5 fails, $X=0$; otherwise, if the measurement outcome is $0$ or $1$ in step 9, then $X=\alpha$ or $-\alpha$, respectively. One can show that $X$ is an unbiased estimator of $\operatorname{Re}[ \bra{\phi_0} e^{-i H t_2} O e^{-i H t_1} \ket{\phi_0}]$, i.e.
\begin{align}
\mathbb{E}[X]=\operatorname{Re}[\bra{\phi_0} e^{-i H t_2} O e^{-i H t_1} \ket{\phi_0}].
\end{align}
$Y$ is defined similarly. We run the above procedure with $W=S$ in step 7. If step 5 fails, $Y=0$; otherwise, if the measurement outcome is $1$ or $0$ in step 9, then $Y=\alpha$ or $-\alpha$, respectively. Then $Y$ is an unbiased estimator of 
$\operatorname{Im}[\bra{\phi_0} e^{-i H t_2} O e^{-i H t_1} \ket{\phi_0}]$, i.e.
\begin{align}
\mathbb{E}[Y]=\operatorname{Im}[\bra{\phi_0} e^{-i H t_2} O e^{-i H t_1} \ket{\phi_0}].
\end{align}
It follows that $Z:=X+iY$ is an unbiased estimator of $\bra{\phi_0} e^{-i H t_2} O e^{-i H t_1} \ket{\phi_0}$, i.e.
\begin{align}
\mathbb{E}[Z]=\bra{\phi_0} e^{-i H t_2} O e^{-i H t_1} \ket{\phi_0}.
\end{align}
Note that $|Z|^2 = |X|^2 + |Y|^2 \le  2\alpha^2$ with certainty. 

Equipped with the above method for estimating $\bra{\phi_0} e^{-i H t_2} O e^{-i H t_1} \ket{\phi_0}$ for arbitrary $t_1, t_2 \in \mathbb{R}$, we can now use the same strategy as in Lemma \ref{lem:est_2d_acdf} to estimate $\wt{C_{O,2}}(x,y)$. The other components of Algorithm \ref{alg:gs_prop} remain intact. The analysis of this modified algorithm is almost the same as before, except that now we have
\begin{align}
\Var[\ov{G_2}(x,y)]=\wt{O}(\alpha^2).
\end{align} 
As a consequence, compared to Theorem \ref{thm:app_sim}, the total evolution time of this modified algorithm is larger by a factor of $O(\alpha^2)$, while its maximal evolution time is of the same order.
\end{proof}

\section{Applications}
\label{sec:apps}
In this section, we discuss some applications of our ground state property estimation algorithm. To define an application of the ground state property estimation algorithm, we must specify a Hamiltonian of interest $H$ and an observable of interest $O$.
An example application used in quantum chemistry and materials is the Green's function (see, e.g. \cite{tong2021fast}), where $O=a_i(z-(H-E_0)^{-1})a_j^\dagger$.
In the following two sections we describe another example from quantum chemistry and materials as well as an example of a linear algebraic subroutine.

\subsection{Charge density}
The primary application of the technique is the estimation of ground state properties of physical systems.
Here we describe how to compute the charge density of a molecule, which can be used to compute properties like electric dipole moments of a molecule \cite{rice2021quantum}.
From a second-quantized representation of the electronic system (assuming fixed positions of the nuclear positions), the charge density is determined from the one-particle reduced density matrix as,
\begin{align}
    \rho(\vec{r}) = -e \sum_{p,q} D_{p,q} \phi_p^{\*}(\vec{r})\phi_q(\vec{r}),
\end{align}
where $e$ is the electric constant, $D_{p,q}$ is the one-electron reduced density matrix (1RDM) of the ground state, and $\phi_q(\vec{r})$ are the basis wave functions chosen for the second-quantized representation of the electronic system \cite{helgaker2014molecular}.
The 1RDM of the ground state is a matrix of properties of the ground state with each entry defined as
\begin{align}
    D_{p,q} = \bra{\psi_0}a_p^{\dagger}a_q\ket{\psi_0},
\end{align}
where $a_p$ are annihilation operators.
The operators involved in the 1RDM can each be expressed as a linear combination of unitary operators using the Majorana representation 
$a_p=\frac{1}{2}(\gamma_{2p}+i\gamma_{2p+1})$,
where the $\gamma_k$ are hermitian and unitary\footnote{To implement this application on a quantum computer we must represent the unitaries as operations on qubits. For an $n$-electron system, using the Jordan-Wigner or Bravyi-Kitaev transformation \cite{seeley2012bravyi}, each Majorana operator, and products thereof, can be represented as a Pauli string.
},
\begin{align}
    D_{p,q} = \frac{1}{4}\left(\bra{\psi_0}\gamma_{2p}\gamma_{2q}\ket{\psi_0}-i\bra{\psi_0}\gamma_{2p+1}\gamma_{2q}\ket{\psi_0}+i\bra{\psi_0}\gamma_{2p}\gamma_{2q+1}\ket{\psi_0}+\bra{\psi_0}\gamma_{2p+1}\gamma_{2q+1}\ket{\psi_0}\right).
\end{align}
Accordingly, we may use the method of Section \ref{sec:unitary_alg} to estimate each entry of the 1RDM and then obtain the charge density function of the ground state.
As a point of comparison, we could alternatively use the variational quantum eigensolver algorithm to prepare an approximation to the ground state and then directly estimate each of the Pauli expectation values.
However, there is no guarantee on whether a target accuracy for the ground state approximation can be achieved.
Remarkably, the methods introduced in this paper can be used to ensure a target accuracy in the estimation regardless of the quality of ground state approximation, though possibly at the cost of an increase in runtime.


\subsection{Quantum linear system solver}
In the seminal \cite{hhl09} paper, a quantum algorithm is proposed to generate a quantum state approximately proportional to the solution of a linear system of equations. Namely, given a linear system $A \vec{x}=\vec{b}$, the algorithm produces a quantum state close to $\ket{x} := \frac{\sum_j x_j \ket{j}}{\sqrt{\sum_j |x_j|^2}}$, where $x_j$'s are the entries of $\vec{x}=A^{-1}\vec{b}$. In fact, in many cases, we only need to know $\bra{x}M \ket{x}$, where  $M$ is a linear operator. For example, in quantum mechanics, many features of $\ket{x}$ can be extracted in this way, including normalization, moments, etc. One approach to solve this problem is first solving the linear system using any quantum linear system solver \cite{hhl09, cks17, cgj18, gslw19} to obtain the state $\ket{x}$ and then performing the measurement of $M$. However, a shortcoming of this method is that most of the quantum linear system solvers require deep quantum circuits. And hence, the needed quantum resources may not be accessible in the near future. 

Recently, a few quantum algorithms \cite{blrm19, hbr19, syso19} were developed  to solve linear systems of equations by encoding such a system into an effective Hamiltonian
\begin{align}
    H_G:=A^\dagger (I-\ket{b}\bra{b})A,
\end{align}
whose ground state corresponds to the solution vector $\ket{x}$. We can combine 
this idea 
with our ground state property estimation algorithm to get a low-depth algorithm for estimating the properties of linear system solution. More specifically, suppose we can simulate the Hamiltonian $H_G$ for some specified time and we know the normalization factor $\tau$ such that the eigenvalues of $\tau H_G$ are in $[-\pi/3, \pi/3]$. For the operator $M$, we can assume that $M$ can be decomposed into a linear combination of Pauli operators $M=\sum_{\ell=1}^L c_\ell \sigma_\ell$, or we assume that $M$ is given in the block-encoding form. The estimation algorithm has two steps:
\begin{enumerate}
    \item Run a quantum linear system algorithm (e.g. \cite{syso19}, \cite{an2019quantum}, or \cite{lin2020optimal}) with constant precision to prepare an initial state $\ket{\phi_0}$ such that $|\bra{\phi_0}x\rangle|^2$ is $\Omega(1)$.
    \item Using $\ket{\phi_0}$ from step 1 as the initial state, run Algorithm~\ref{alg:gs_prop} to estimate $\bra{x}M\ket{x}$ within $\epsilon$-additive error for any $\epsilon\in (0,1)$.
\end{enumerate}
Step 1 takes $\tilde{O}(\kappa)$ time, where $\kappa$ is the condition number of $A$. 
To analyze the computation cost of the second step, we need a lower-bound on the spectral gap of $H_G$. Since $\bra{x}A^\dagger (I-\ket{b}\bra{b})A\ket{x}=0$, we have $\lambda_0(H_G)=0$. For the second smallest eigenvalue, since $H_G=A^\dagger A - A^\dagger \ket{b}\bra{b}A$, by Weyl's inequality, we have
\begin{align}
    \lambda_1(H_G) \geq &~ \lambda_0(A^\dagger A) - \lambda_1(A^\dagger \ket{b}\bra{b}A)\notag\\
    = &~ \lambda_0(A^\dagger A),
\end{align}
where the second step follows from $A^\dagger \ket{b}\bra{b}A$ is rank-1. Due to the normalization, the smallest (normalized) singular value of $A$ is $\Omega(\kappa^{-1})$.
Hence, we have $\gamma= \Omega(\kappa^{-2})$. 

By Theorem~\ref{thm:app_sim}, the maximal evolution time of the Hamiltonian will be $\wt{O}(\kappa^2)$. To further improve the circuit depth, we may apply the gap amplification technique \cite{sb13, syso19} to quadratically increase the spectral gap of $H_G$. Specifically, consider the following family of Hamiltonians:
\begin{align}\label{eq:lin_hamiltonian}
    \bar{H}'_G(s):=\sigma^+\otimes \bar{A}^\dagger(s) (I-\ket{\bar{b}}\bra{\bar{b}}) + \sigma^-\otimes (I-\ket{\bar{b}}\bra{\bar{b}}) \bar{A}(s),
\end{align}
where $\sigma^{\pm}=(X\pm iY)/2$, $\bar{A}(s):=(1-s)Z \otimes I + s X \otimes A$,  $\ket{\bar{b}}:=\ket{+}\ket{b}$ and $s \in [0, 1]$. Note that these Hamiltonians act on the original system and two ancilla qubits. Then we have
\begin{align}
    (\bar{H}'_G(s))^2=\begin{bmatrix}
    \bar{H}_G(s) & 0\\
    0 & (I-\ket{\bar{b}}\bra{\bar{b}}) \bar{A}(s) \bar{A}^\dagger(s) (I-\ket{\bar{b}}\bra{\bar{b}})
    \end{bmatrix},
\end{align}
where
\begin{align}
\bar{H}_G(s):=\bar{A}^\dagger(s)(I-\ket{\bar{b}}\bra{\bar{b}}) \bar{A}(s).
\end{align}
As shown in \cite{syso19}, the eigenvalues of $\bar{H}'_G(s)$ are
\begin{align}
\left\{0, 0, \pm\sqrt{\lambda_1(s)}, \pm\sqrt{\lambda_2(s)}, \dots\right\},    
\end{align}
where $\lambda_j(s)$'s are the nonzero eigenvalues of $\bar{H}_G(s)$. Furthermore, let $\ket{x(s)}$ be the unique ground state of $\bar{H}_G(s)$. Note that 
$\ket{x(0)}=\ket{-}\ket{b}$ and $\ket{x(1)}=\ket{+}\ket{x}$. Then the ground space of $\bar{H}'_G(s)$ is spanned by $\{\ket{0}\ket{x(s)}, \ket{1}\ket{\bar{b}}\}$. In addition, for $s=1$, one can use Weyl's ineqality to show that $\lambda_1(1) \ge \kappa^{-2}$, which implies that the smallest nonzero eigenvalue of $\bar{H}_G'(1)$ is $\Omega(\kappa^{-1})$, as desired.

We can use the algorithm in \cite{syso19} to prepare a state that has $\Omega(1)$ overlap with $\ket{0}\ket{x(1)}=\ket{0}\ket{+}\ket{x}$ in $\tilde{O}(\kappa)$ time. Specifically, this algorithm starts with the state $\ket{0}\ket{x(0)}=\ket{0}\ket{-}\ket{b}$, performs a sequence of unitary operations of form $e^{-i t_k \bar{H}_G'(s_k)}$ on it, and outputs a state $\epsilon$-close to $\ket{0}\ket{x(1)}$ in $\tilde{O}(\kappa \epsilon^{-1})$ time. Here we set $\epsilon=\Theta(1)$ and the time cost of this procedure is $\tilde{O}(\kappa)$. 

After obtaining a state $\ket{\phi_0}$ that has $\Omega(1)$ overlap with $\ket{0}\ket{+}\ket{x}$, we run Algorithm~\ref{alg:gs_prop} on $\ket{\phi_0}$, $\bar{H}'_G(1)$ and $\tilde{M}:=\ket{0}\bra{0}\otimes \ket{+}\bra{+}\otimes M$ to estimate $\bra{0,+,x}\tilde{M}\ket{0,+,x}=\bra{x}M\ket{x}$. Notice that since we know the ground state energy of $\bar{H}'_G(1)$ is zero, we do not need to first estimate the ground state energy using Algorithm~\ref{alg:gs_energy}. Instead, we directly evaluate the $O$-weighted CDF at zero. 
Therefore, by Theorem~\ref{thm:app_sim_block}, we get the following result:

\begin{corollary}[Quantum linear system solution property estimation]
\label{cor:qlss}
For a linear system $A\vec{x}=\vec{b}$, suppose $A$ has singular values in $[-1, -1/\kappa]\cup [1/\kappa, 1]$ for $\kappa>1$, and the eigenvalues of $\bar{H}'_G(1)$ (Eq.~\eqref{eq:lin_hamiltonian}) are in $[-\pi/3, \pi/3]$. Furthermore, suppose we can implement $e^{-i t \bar{H}'_G(s)}$ (Eq.~\eqref{eq:lin_hamiltonian}) in $\tilde{O}(t)$ time for all $s \in [0, 1]$.

Then, for any linear operator $M$ given by its $\alpha$-block encoding unitary $U_M$, and for any $\epsilon\in (0,1)$, the expectation value $\bra{x}M\ket{x}$ can be estimated with $\epsilon$-additive error with high probability such that:
\begin{itemize}
    \item the depth of each circuit is $\wt{O}(\kappa)$.
    \item the expected total runtime is 
    $\wt{O}(\kappa\epsilon^{-2}\alpha^2)$.    
\end{itemize}
\end{corollary}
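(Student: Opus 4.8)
The plan is to instantiate Theorem~\ref{thm:app_sim_block} with the gap-amplified Hamiltonian $\bar{H}'_G(1)$ of Eq.~\eqref{eq:lin_hamiltonian} in place of $H$, the lifted operator $\tilde{M}:=\ket{0}\bra{0}\otimes\ket{+}\bra{+}\otimes M$ in place of the block-encoded observable, and the output of the solver of \cite{syso19} as the initial state $\ket{\phi_0}$. Everything then reduces to four quantitative checks: (i) the zero-energy eigenspace of $\bar{H}'_G(1)$ contains $\ket{0}\ket{+}\ket{x}$ and $\tilde{M}$ isolates the desired quantity, i.e.\ $\bra{0,+,x}\tilde{M}\ket{0,+,x}=\bra{x}M\ket{x}$; (ii) the relevant spectral gap obeys $\gamma=\Omega(\kappa^{-1})$; (iii) the overlap $\eta=\Omega(1)$; and (iv) $\tilde{M}$ admits an $O(\alpha)$-block-encoding. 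Given these, the two stated complexities are obtained by substituting $\gamma=\Omega(\kappa^{-1})$ and $\eta=\Omega(1)$ into Theorem~\ref{thm:app_sim_block}.

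The main obstacle is (ii), which is precisely what the gap-amplification construction of \cite{sb13,syso19} buys us. I would argue it as follows. Since $(\bar{H}'_G(1))^2$ is block diagonal with one block equal to $\bar{H}_G(1)=\bar{A}^\dagger(1)(I-\ket{\bar{b}}\bra{\bar{b}})\bar{A}(1)$, the spectrum of $\bar{H}'_G(1)$ is $\{0,0,\pm\sqrt{\lambda_1(1)},\pm\sqrt{\lambda_2(1)},\dots\}$, so its gap above the zero-energy space is $\sqrt{\lambda_1(1)}$. It therefore suffices to show $\lambda_1(1)\ge\kappa^{-2}$. Writing $\bar{H}_G(1)=\bar{A}^\dagger(1)\bar{A}(1)-\bar{A}^\dagger(1)\ket{\bar{b}}\bra{\bar{b}}\bar{A}(1)$ and noting the subtracted term is rank one, Weyl's inequality gives $\lambda_1(1)\ge\lambda_0(\bar{A}^\dagger(1)\bar{A}(1))=\Omega(\kappa^{-2})$, because the normalized singular values of $A$ (hence of $\bar{A}(1)$) are $\Omega(\kappa^{-1})$. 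This yields $\gamma=\sqrt{\lambda_1(1)}=\Omega(\kappa^{-1})$, the quadratic improvement over the $\Omega(\kappa^{-2})$ gap of the un-amplified $H_G$ that is responsible for the $\wt{O}(\kappa)$ rather than $\wt{O}(\kappa^2)$ depth. For (iii), the procedure of \cite{syso19} run with constant target precision prepares a state $\Omega(1)$-close to $\ket{0}\ket{x(1)}=\ket{0}\ket{+}\ket{x}$ in $\tilde{O}(\kappa)$ time, so $\eta=\Omega(1)$.

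Two further points complete the argument. First, the zero-energy space is two-fold degenerate, spanned by $\ket{0}\ket{+}\ket{x}$ and $\ket{1}\ket{\bar{b}}$; this degeneracy is benign because the prepared $\ket{\phi_0}$ lives in the $\ket{0}$ sector of the first ancilla and hence has negligible overlap with $\ket{1}\ket{\bar{b}}$, and moreover $\tilde{M}$ annihilates that component (as $\ket{0}\bra{0}$ kills the $\ket{1}$ part). Consequently the overlap estimated in the first stage is $p_0=|\langle\phi_0|0,+,x\rangle|^2=\Omega(1)$, the $O$-weighted value at energy $0$ is $p_0\bra{x}M\ket{x}$, and their ratio recovers $\bra{x}M\ket{x}$; this also verifies (i). Since the ground energy is known to be $0$, the energy-estimation stage (Algorithm~\ref{alg:gs_energy}) is skipped and the $O$-weighted CDF is evaluated directly at the good point fixed by $\lambda_0=0$ and the gap. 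For (iv), $\tilde{M}$ is the product of the two constant-cost ancilla projectors with $M$, so composing the given $\alpha$-block-encoding $U_M$ with controlled projections onto $\ket{0}$ and $\ket{+}$ yields an $O(\alpha)$-block-encoding. Substituting into Theorem~\ref{thm:app_sim_block} then gives maximal evolution time $\wt{O}(\gamma^{-1})=\wt{O}(\kappa)$---hence circuit depth $\wt{O}(\kappa)$ under the assumption that $e^{-it\bar{H}'_G(s)}$ costs $\tilde{O}(t)$---and expected total evolution time $\wt{O}(\gamma^{-1}\epsilon^{-2}\eta^{-2}\alpha^2)=\wt{O}(\kappa\epsilon^{-2}\alpha^2)$; the additive $\tilde{O}(\kappa)$ state-preparation cost does not dominate, completing the proof.
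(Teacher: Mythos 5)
Your proposal is correct and follows essentially the same route as the paper: gap amplification to $\bar{H}'_G(1)$, Weyl's inequality giving $\lambda_1(1)\ge\kappa^{-2}$ and hence $\gamma=\Omega(\kappa^{-1})$ via the spectrum $\{0,0,\pm\sqrt{\lambda_j(1)}\}$ of $(\bar{H}'_G(1))^2$, an $\Omega(1)$-overlap initial state from the constant-precision solver of \cite{syso19}, the lifted observable $\tilde{M}=\ket{0}\bra{0}\otimes\ket{+}\bra{+}\otimes M$, skipping the energy-estimation stage since $\lambda_0=0$ is known, and substituting $\gamma,\eta$ into Theorem~\ref{thm:app_sim_block}. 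In fact you are slightly more careful than the paper on two points it leaves implicit --- the two-fold degeneracy of the zero-energy space (handled since $\tilde{M}$ annihilates the $\ket{1}\ket{\bar{b}}$ component and $\ket{\phi_0}$ has small overlap with it) and the $O(\alpha)$-block-encoding of $\tilde{M}$.
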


For comparison, the algorithm in \cite{syso19} needs $\wt{O}(\kappa\epsilon^{-1})$ circuit depth to obtain a state that is $\epsilon$-close to $\ket{x}$, which is larger than ours. Moreover, to estimate $\bra{x}M\ket{x}$, even with amplitude estimation, it still needs $\Omega(\epsilon^{-1})$ copies of the state to achieve $\epsilon$-additive error. Hence, its total runtime will be $\wt{O}(\kappa\epsilon^{-2})$, nearly matching our result (ignoring the dependence on the $\alpha$ factor).

\section{Discussion and Outlook}\label{sec:discuss}
We have shown a quantum-classical hybrid algorithm for estimating properties of the ground state of a Hamiltonian, such that the quantum circuit depth is relatively small and only poly-logarithmically depends on $\epsilon^{-1}$. Therefore, the algorithm has a significant advantage in high-accuracy estimation, and it is possible to be implemented in early fault-tolerant devices. In practice, our algorithm can solve many important tasks by combining with some initial state preparation methods (e.g., VQE or QAOA). In this paper, we provide two examples, one in quantum chemistry and another in solving linear systems. And we believe more applications will be explored in the future.

Another important direction
is to improve the total evolution time of our algorithm which quadratically depends on $\epsilon^{-1}$. The blowup comes from evaluating the $O$-weighted CDF in high precision and a trade-off between maximal evolution time and total evolution time. However, this does not meet the Heisenberg-limit of linear dependence on $\epsilon^{-1}$ for generic Hamiltonians \cite{aa17}. In our main result (Theorem~\ref{thm:app_sim}), the $\epsilon^{-2}\eta^{-2}$ factor comes from the number of samples needed to reduce the estimator's error to $O(\epsilon\eta)$. Amplitude estimation can be used to reduce this number of samples and the total evolution time.
However, this comes at the cost of significantly increasing the maximal evolution time, which could require large fault-tolerant overheads for reliable implementation.
A strategy to achieve improved performance that is more amenable to early fault tolerant quantum computers is to use recently introduced ``enhanced sampling'' techniques \cite{wang2021minimizing}.
If $\lambda$ characterizes the fidelity decay rate of the circuit as deeper circuits are used, then we would expect to need a maximal evolution time of $O(\lambda^{-1}\gamma^{-1})$ and an total evolution time of $O(\lambda\gamma^{-1}\epsilon^{-2}\eta^{-2})$.
Note that because this approach incorporates the impact of error into the algorithm, the maximal evolution time is of no concern.
Rather than being a cost that needs monitoring, the maximal evolution time is chosen by the algorithm to minimize the total evolution time.
With this, we expect that as the quality of devices is improved, the performance of the algorithm improves proportionally.
We note that a similar approach can also be applied to improve the total evolution time in \cite{lt21} from $\wt{O}(\epsilon^{-1}\eta^{-2})$ to
$\wt{O}(\lambda\epsilon^{-1}\eta^{-2})$.

This work fits into the paradigm of ``beyond the ground state energy'' and studies more general properties of the ground state. Can we go further beyond the ground state? Some prior works have explored the estimation of such kind of properties of Hamiltonian. For example, Brown, Flammia, and Schuch \cite{bfs11} studied the density of states. Jordan, Gosset, and Love \cite{jgl10} focused on the energy of excited states. Gharibian and Sikora \cite{gs15} identified the energy barriers. Watson and Bausch \cite{watson2021complexity} explored detecting phase transitions via order parameters.
In general, for an unknown Hamiltonian, these estimation problems will be hard. An interesting open problem is, given some prior knowledge of the Hamiltonian, can we design efficient or low-depth quantum algorithms for estimating Hamiltonian properties beyond ground state? 
\newpage

\noindent\textbf{Acknowledgements} This work was done while R.Z. was a research intern at Zapata Computing Inc. We thank Yu Tong, Phillip Jensen, Max Radin, 
J\'{e}r\^{o}me Gonthier, Alex Kunitsa, and Aram Harrow for useful input and feedback on this work.

\bibliographystyle{quantum}
\bibliography{ref}

\appendix
\section{Ground State Energy Estimation}\label{sec:lt21_details}
In this section, we review the techniques in 
\cite{lt21}, 
which proposed a hybrid quantum/classical algorithm for estimating the ground state energy of a Hamiltonian. Compared with the algorithms in previous works, the algorithm in \cite{lt21} uses fewer quantum resources and does not need to access the block-encoding of the Hamiltonian.

First of all, they assumed that  the given initial state $\ket{\phi_0}$\footnote{In \cite{lt21}, they allowed the initial state to be a mixed state. For simplicity, we still denote it as $\ket{\phi_0}$.} has a nontrivial overlap with the ground state of $H$.

\subsection{Quantum part of the algorithm}\label{sec:quantum_hadamard}

Fix $j\in \mathbb{Z}$. Suppose we want to estimate $\Re(\bra{\phi_0}e^{-ij\tau H}\ket{\phi_0})$. Then, we set $\mathrm{W}=I$ and define a random variable $X_j$ as follows:
\begin{align*}
    X_j:=\begin{cases}1 & \text{if the outcome is}~0\\
    -1 & \text{if the outcome is}~1
    \end{cases}.
\end{align*}
Since the state before the measurement is
\begin{align}
    \frac{1}{2} (\ket{0}\otimes (I + e^{-ij\tau H})\ket{\phi_0} + \ket{1}\otimes (I - e^{-ij\tau H})\ket{\phi_0}), 
\end{align}
we have
\begin{align}\label{eq:expect_real_part}
    \E[X_j]=&~ \Pr[X_j=0]-\Pr[X_j=1]\notag\\
    = &~ \frac{1}{4}\bra{\phi_0} (I + e^{ij\tau H})(I + e^{-ij\tau H}) \ket{\phi_0} - \frac{1}{4}\bra{\phi_0} (I - e^{ij\tau H})(I - e^{-ij\tau H}) \ket{\phi_0}\notag\\
    = &~ \frac{1}{2}\bra{\phi_0} (e^{ij\tau H} + e^{-ij\tau H})\ket{\phi_0}\notag\\
    = &~ \Re (\bra{\phi_0}e^{-ij\tau H}\ket{\phi_0}).
\end{align}
For the imaginary part $\Im(\bra{\phi_0}e^{-ij\tau H}\ket{\phi_0})$, we can set $W$ to be the phase gate $\begin{bmatrix} 1 & 0\\0 & -i \end{bmatrix}$ and define the random variable $Y_j$ similarly. Then, we have
\begin{align}\label{eq:expect_img_part}
    \E[Y_j] = \Im (\bra{\phi_0}e^{-ij\tau H}\ket{\phi_0}).
\end{align}

Therefore, Eqs.~\eqref{eq:expect_real_part} and \eqref{eq:expect_img_part} implies the following claim:
\begin{claim}[Estimator of the Hamiltonian expectation]\label{clm:estimator_expectation}
For any $j\in \Z$, the random variable $X_j + i Y_j$ is an un-biased estimator for $\bra{\phi_0}e^{-ij\tau H}\ket{\phi_0}$.
\end{claim}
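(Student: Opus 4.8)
The plan is to obtain the claim by combining the two expectation identities for the Hadamard test, Eqs.~\eqref{eq:expect_real_part} and \eqref{eq:expect_img_part}, which I would derive by tracing the circuit of Figure~\ref{fig:hadamard_test} register-by-register. First I would fix $j$ and run the circuit with $\mathrm{W}=I$: after the opening Hadamard the control is $\tfrac{1}{\sqrt{2}}(\ket{0}+\ket{1})$, the controlled evolution yields $\tfrac{1}{\sqrt{2}}(\ket{0}\otimes\ket{\phi_0}+\ket{1}\otimes e^{-ij\tau H}\ket{\phi_0})$, and the closing Hadamard rotates this into
\begin{align}
    \tfrac{1}{2}\big(\ket{0}\otimes(I+e^{-ij\tau H})\ket{\phi_0}+\ket{1}\otimes(I-e^{-ij\tau H})\ket{\phi_0}\big).
\end{align}
Reading off the squared branch norms gives $\Pr[0]$ and $\Pr[1]$, and in the difference $\E[X_j]=\Pr[0]-\Pr[1]$ the identity contributions $\bra{\phi_0}\ket{\phi_0}$ cancel, leaving $\tfrac{1}{2}\bra{\phi_0}(e^{ij\tau H}+e^{-ij\tau H})\ket{\phi_0}=\Re(\bra{\phi_0}e^{-ij\tau H}\ket{\phi_0})$.

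Next I would rerun the same computation with $\mathrm{W}$ equal to the phase gate $\mathrm{diag}(1,-i)$, whose only effect is to insert a factor $-i$ on the $\ket{1}$ branch before the final Hadamard; the pre-measurement state then becomes $\tfrac{1}{2}(\ket{0}\otimes(I-ie^{-ij\tau H})\ket{\phi_0}+\ket{1}\otimes(I+ie^{-ij\tau H})\ket{\phi_0})$. Writing $z:=\bra{\phi_0}e^{-ij\tau H}\ket{\phi_0}$ and using $\bra{\phi_0}e^{ij\tau H}\ket{\phi_0}=\overline{z}$, the branch norms give $\E[Y_j]=\tfrac{i}{2}(\overline{z}-z)=\Im(z)$, which is Eq.~\eqref{eq:expect_img_part}. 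With both identities established, linearity of expectation finishes the argument: $\E[X_j+iY_j]=\Re(z)+i\,\Im(z)=z=\bra{\phi_0}e^{-ij\tau H}\ket{\phi_0}$.

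The only step that demands care is the bookkeeping for the phase gate: one must track the inserted $-i$ (equivalently, the sign convention embedded in the definition of $Y_j$ on the two outcomes) so that the branch probabilities assemble into $\overline{z}-z=-2i\,\Im(z)$ rather than $\overline{z}+z$. This is precisely what isolates the imaginary part. Everything else is the routine Hadamard-test manipulation in which the identity contributions cancel in the difference of outcome probabilities, so I anticipate no genuine obstacle beyond keeping these phase conventions consistent.
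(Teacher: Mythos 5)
Your proposal is correct and takes essentially the same route as the paper's own proof in Appendix~\ref{sec:quantum_hadamard}: trace the circuit of Figure~\ref{fig:hadamard_test} to the pre-measurement state, read off the branch norms so the identity contributions cancel in the difference of outcome probabilities, obtaining $\Re\bra{\phi_0}e^{-ij\tau H}\ket{\phi_0}$ for $\mathrm{W}=I$ and $\Im\bra{\phi_0}e^{-ij\tau H}\ket{\phi_0}$ for the phase gate, and finish by linearity of expectation. Your flagged care point about the phase-gate sign convention is exactly right: with $\mathrm{W}=\mathrm{diag}(1,-i)$ your reading of $\E[Y_j]$ as $\Pr[0]-\Pr[1]$ matches the appendix's ``defined similarly'' convention, while the main text's Section~\ref{sec:commutative_alg} definition ($Y_j=-1$ on outcome $0$, $+1$ on outcome $1$) instead pairs with $S=\mathrm{diag}(1,i)$ — either consistent pairing yields $\E[Y_j]=\Im\bra{\phi_0}e^{-ij\tau H}\ket{\phi_0}$.
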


\subsection{Classical part of the algorithm}
Let $\tau$ be a normalization factor such that $\|\tau H\|\leq \pi/3$.
Suppose the initial state $\ket{\phi_0}$ can be decomposed in the eigenspace of $H$ as $\ket{\phi_0} = \sum_{k} \sqrt{p_k} \ket{\psi_k}$. Let $p(x)$ be the following density function (spectral measure):
\begin{align}
    p(x) := \sum_k p_k \delta(x - \tau \lambda_k)~~~\forall x\in [-\pi, \pi].
\end{align}
That is, $p(x)$ is the distribution of the state energy with respect to $\tau H$ after we measure $\ket{\phi_0}$ in the eigenbasis of $H$.

Define the $2\pi$-periodic Heaviside function by
\begin{align}\label{eq:def_heaviside}
    H(x) = \begin{cases}
        1 & x\in [2k\pi, (2k+1)\pi)\\
        0 & x \in [(2k-1)\pi, 2k\pi)
    \end{cases}~~~\forall k\in \mathbb{Z}.
\end{align}
Then, we define the $2\pi$-periodic CDF of $p$ as the convolution of $H$ and $p$:
\begin{align}
    C(x) := (H * p)(x).
\end{align}
For any $x\in [-\pi/3, \pi/3]$, for any $w\in \mathbb{Z}$, we have
\begin{align}
    C(x + 2w\pi) = &~ \int_{-\pi}^\pi H(x+2w\pi - t) p(t) \d t\\
    = &~ \sum_{k} p_k\cdot \int_{-\pi}^\pi H(x+2w\pi - t) \delta(t - \tau \lambda_k)\d t\notag\\
    = &~ \sum_{k} p_k\cdot H(x + 2w \pi - \tau \lambda_k)\notag\\
    = &~ \sum_{k} p_k \cdot \mathbf{1}_{x \geq \tau \lambda_k}\notag\\
    = &~ \sum_{k: \lambda_k \leq x} p_k,
\end{align}
where the first step follows from the definition of convolution, the second step follows from Dirac delta function's property, and the third step follows from $H$ has period $2\pi$. We note that $C(x)$ is right continuous and non-decreasing in $[-\pi/3, \pi/3]$. 

However, we cannot directly evaluate $C(x)$, but we can approximate it! Define the approximate CDF (ACDF) as
\begin{align}\label{eq:def_acdf}
    \wt{C}(x) := (F * p) (x),
\end{align}
where $F(x) = \sum_{|j|\leq d} \hat{F}_j e^{ijx}$ is a low Fourier-degree approximation of the Heaviside function $H(x)$ such that
\begin{align}
    |F(x) - H(x)|\leq \epsilon~~~\forall x\in [-\pi+\delta, -\delta]\cup [\delta, \pi - \delta].
\end{align}
The construction of $F$ is given by Lemma~\ref{lem:approx_Heaviside}. 
Furthermore, the approximation error of $\wt{C}(x)$ is bounded by
\begin{align}
    C(x-\delta) - \epsilon \leq \wt{C}(x) \leq C(x + \delta) + \epsilon,
\end{align}
for any $x\in [-\pi/3, \pi/3]$, $\delta \in (0, \pi/6)$ and $\epsilon>0$. 

\subsubsection{Estimating the ACDF}
The goal of this section is to prove Lemma~\ref{lem:est_acdf}, which constructs an estimator for $\wt{C}(x)$ (defined by Eq.~\eqref{eq:def_acdf}).
\begin{lemma}[Estimating the ACDF]\label{lem:est_acdf}
For any $\sigma>0$, for any $x\in [-\pi, \pi]$, there exists an un-biased estimator $\ov{G}(x)$ for the ACDF $\wt{C}(x)$ with variance at most $\sigma^2$. 

Furthermore, $\ov{G}(x)$ runs the quantum circuit (Figure~\ref{fig:hadamard_test}) $O(\frac{\log^2 d}{\sigma^2})$ times with expected total evolution time $O(\frac{\tau d\log d}{\sigma^2})$.
\end{lemma}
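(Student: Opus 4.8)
The plan is to construct a single-sample importance-sampling estimator and then average enough independent copies to drive its variance below $\sigma^2$. Recall that the ACDF admits the Fourier expansion $\wt{C}(x) = \sum_{|j|\leq d} \hat{F}_j e^{ijx}\bra{\phi_0}e^{-ij\tau H}\ket{\phi_0}$, a weighted sum of quantities each of which Claim~\ref{clm:estimator_expectation} lets us estimate without bias from the Hadamard test of Figure~\ref{fig:hadamard_test}. First I would write each coefficient in polar form $\hat{F}_j = |\hat{F}_j|e^{i\theta_j}$, set ${\cal F} := \sum_{|j|\leq d}|\hat{F}_j|$, and sample an index $J\in\{-d,\dots,d\}$ with $\Pr[J=j] = |\hat{F}_j|/{\cal F}$. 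Running the circuit at parameter $j=J$ yields $Z = X_J + iY_J$, and the single-shot estimator is $G(x;J,Z) := {\cal F}\, Z\, e^{i(\theta_J + Jx)}$.

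Unbiasedness follows by conditioning on $J$: the factor ${\cal F}$ cancels the sampling-probability denominator, the polar factor $|\hat{F}_j|e^{i\theta_j}$ reassembles $\hat{F}_j$, and Claim~\ref{clm:estimator_expectation} replaces $\E[X_j + iY_j]$ by $\bra{\phi_0}e^{-ij\tau H}\ket{\phi_0}$, so $\E[G(x;J,Z)] = \wt{C}(x)$. For the variance, since $X_J, Y_J \in \{\pm 1\}$ we have $|Z|^2 = 2$ deterministically and $|e^{i(\theta_J + Jx)}| = 1$, hence $\Var[G] \leq \E[|G|^2] = 2{\cal F}^2$. The key quantitative input is Lemma~\ref{lem:approx_Heaviside}, which gives $|\hat{F}_j| = O(1/|j|)$; summing yields ${\cal F} = \sum_{|j|\leq d} O(1/|j|) = O(\log d)$, so the single-shot variance is $O(\log^2 d)$.

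To reach variance at most $\sigma^2$, I would average $N_s$ i.i.d. copies into $\ov{G}(x)$, which has variance $2{\cal F}^2/N_s$; thus $N_s = O({\cal F}^2/\sigma^2) = O(\log^2 d/\sigma^2)$ circuit runs suffice. For the cost, a single run uses evolution time $\tau|J|$, and the same decay bound gives $\tau\,\E[|J|] = \tau \sum_{|j|\leq d} |j|\,|\hat{F}_j|/{\cal F} = O(\tau d/\log d)$, since $\sum_{|j|\leq d}|j|\,|\hat{F}_j| = \sum_{|j|\leq d} O(1) = O(d)$. Multiplying by $N_s$ yields expected total evolution time $O(\tau d\log d/\sigma^2)$, matching the claimed bounds.

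I do not expect a genuine obstacle here: the content is the importance-sampling choice $\Pr[J=j] \propto |\hat{F}_j|$, which is exactly what keeps both ${\cal F}$ and $\E[|J|]$ controlled rather than blowing up, together with the $O(1/|j|)$ Fourier decay from Lemma~\ref{lem:approx_Heaviside}. The only subtlety worth care is that the two reported costs are governed by two different weighted sums of the coefficients — $\sum_{|j|\leq d}|\hat{F}_j|$ for the variance and $\sum_{|j|\leq d}|j|\,|\hat{F}_j|$ for the evolution time — both of which are bounded by the same decay estimate, so the bookkeeping must track them separately.
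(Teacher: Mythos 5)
Your proposal is correct and follows essentially the same route as the paper's proof: the same importance-sampled single-shot estimator $G(x;J,Z)={\cal F}\,Z\,e^{i(\theta_J+Jx)}$ with $\Pr[J=j]\propto|\hat F_j|$, the same unbiasedness argument via Claim~\ref{clm:estimator_expectation}, the same variance bound $2{\cal F}^2=O(\log^2 d)$ from the $O(1/|j|)$ decay in Lemma~\ref{lem:approx_Heaviside}, and the same accounting of $N_s=O(\log^2 d/\sigma^2)$ runs and $\tau\E[|J|]=O(\tau d/\log d)$ per run, yielding $O(\tau d\log d/\sigma^2)$ total evolution time. No gaps.
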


\begin{proof}
$\wt{C}(x)$ can be expanded in the following way:
\begin{align}
    \wt{C}(x) = &~ (F * p)(x)\\
    = &~ \int_{-\pi}^\pi F(x-y) p(y) \d y\notag\\
    = &~ \sum_{|j|\leq d} \int_{-\pi}^\pi \hat{F}_j e^{ij(x-y)} p(y)\d y\notag\\
    = &~ \sum_{|j|\leq d} \hat{F}_j e^{ijx} \int_{-\pi}^\pi p(y) e^{-ijy}\d y\notag\\
    = &~ \sum_{|j|\leq d} \hat{F}_j e^{ijx} \sum_k p_k e^{-ij\tau \lambda_k}\notag\\
    = &~ \sum_{|j|\leq d} \hat{F}_j e^{ijx} \cdot \bra{\phi_0} e^{-ij\tau H} \ket{\phi_0},
\end{align}
where the third step follows from the Fourier expansion of $F(x-y)$, the fifth step follows from the property of Dirac's delta function, and the last step follows from the definition of $p_k$ and the eigenvalues of matrix exponential.

To estimate $\bra{\phi_0} e^{-ij\tau H} \ket{\phi_0}$, we use the multi-level Monte Carlo method. Define a random variable $J$ with support $\{-d, \cdots, d\}$ such that
\begin{align}\label{eq:def_J}
    \Pr[J=j]=\left|\hat{F}_j\right|/{\cal F},
\end{align}
where ${\cal F}:=\sum_{|j|\leq d}|\hat{F}_j|$. Then, let $Z:=X_J + i Y_J\in \{\pm 1\pm i\}$. Define an estimator $G(x; J, Z)$ as follows:
\begin{align*}
    G(x; J, Z):={\cal F}\cdot Z e^{i(\theta_J + Jx)},
\end{align*}
where $\theta_j$ is defined by $\hat{F}_j = |\hat{F}_j|e^{i\theta_j}$. Then, we show that $G(x; J, Z)$ is un-biased:
\begin{align*}
    \E[G(x; J, Z)] = &~ \sum_{|j|\leq d} \E\left[(X_j + iY_j)e^{i(\theta_j + jx)}|\hat{F}_j|\right]\\
    = &~ \sum_{|j|\leq d} \hat{F}_j e^{ijx} \cdot \E\left[X_j + iY_j\right]\\
    = &~ \sum_{|j|\leq d} \hat{F}_j e^{ijx} \cdot \bra{\phi_0} e^{-ij\tau H} \ket{\phi_0}\\
    = &~ \wt{C}(x),
\end{align*}
where the third step follows from Claim~\ref{clm:estimator_expectation}. Moreover, the variance of $G$ can be upper-bounded by:
\begin{align*}
    \mathrm{Var}[G(x; J, Z)]= &~ \E[|G(x; J, Z)|^2] - |\E[G(x; J, Z)]|^2\\
    \leq &~ \E[|G(x; J, Z)|^2]\\
    = &~ {\cal F}^2 \cdot \E[|X_J + i Y_J|^2]\\
    = &~ 2{\cal F}^2,
\end{align*}
where the third step follows from $|e^{i(\theta_J+Jx)}|=1$, and the last step follows from $X_j, Y_j\in \{\pm 1\}$.

Hence, we can take $N_s:=\frac{2{\cal F}^2}{\sigma^2}$ independent samples of $(J, Z)$, denoted by $\{(J_k, Z_k)\}_{k\in [N_s]}$ and compute
\begin{align*}
    \ov{G}(x) := \frac{1}{N_s}\sum_{k=1}^{N_s} G(x; J_k, Z_k).
\end{align*}
Then, we have
\begin{align*}
    \E[\ov{G}(x)] = \wt{C}(x),~~\text{and}~~\mathrm{Var}[\ov{G}(x)]\leq \sigma^2.
\end{align*}

The expected total evolution time is
\begin{align*}
    {\cal T}_{\mathsf{tot}} := N_s \tau \E[|J|]= \frac{2{\cal F}^2}{\sigma^2} \tau \sum_{|j|\leq d}|j|\cdot \frac{|\hat{F}_j| }{{\cal F}}= \frac{2{\cal F}\tau}{\sigma^2}\sum_{|j|\leq d}|j||\hat{F}_j|.
\end{align*}

By Lemma~\ref{lem:approx_Heaviside}, we know that $|\hat{F}_j|=O(1/|j|)$. Hence,
we have ${\cal F} = \sum_{|j|\leq d}O(1/|j|) = O(\log d)$. Thus, the number of samples is
\begin{align*}
    N_s = O\left(\frac{\log^2 d}{\sigma^2}\right).
\end{align*}
And the expected total evolution time is
\begin{align*}
    {\cal T}_{\mathsf{tot}} = O\left(\frac{\tau d\log d}{\sigma^2}\right).
\end{align*}
The lemma is then proved.
\end{proof}

\subsubsection{Inverting the CDF}
We first define the CDF inversion problem:
\begin{definition}[The CDF inversion problem]\label{def:inv_cdf}
For $0 < \delta < \pi/6$, $0 < \eta < 1$, find $x^\star\in (-\pi/3, \pi/3)$ such that
\begin{align*}
    C(x^\star + \delta) > \eta /2, \quad C(x^\star-\delta) < \eta.
\end{align*}
\end{definition}
\begin{remark}
The condition in Definition~\ref{def:inv_cdf} is weaker than $\eta/2<C(x)<\eta$ due to the discontinuity of $C(x)$. For any CDF $C(x)$, such an $x^\star$ must exist: let $a := \sup~\{x\in (-\pi/3, \pi/3): C(x) \leq \eta/2\}$ and $b:=\inf~\{x\in (-\pi/3, \pi/3): C(x) \geq \eta\}$. Since $C(x)$ is non-decreasing, we have $a\leq b$. And any $x\in (a-\delta, b+\delta)$ satisfies the condition in Definition~\ref{def:inv_cdf}.
\end{remark}

Then, we give an algorithm that solves the CDF inversion problem.
\begin{lemma}[Inverting the CDF, Theorem 2 in \cite{lt21}]\label{lem:invert_cdf}
There exists an algorithm that solves the CDF inversion problem (Definition~\ref{def:inv_cdf}) with probability at least $1-\nu$ such that:
\begin{enumerate}
    \item the number of independent samples of $(J,Z)$ is
    \begin{align*}
        O\left(\eta^{-2}\cdot (\log(\nu^{-1})+\log\log(\delta^{-1}))\cdot (\log(\delta^{-1})+\log \log (\delta^{-1}\eta^{-1}))^2\right)
    \end{align*}
    \item the expected total evolution time is
    \begin{align*}
        O\left(\tau \eta^{-2}\cdot \delta^{-1}\log(\delta^{-1}\eta^{-1})\cdot  (\log(\delta^{-1})+\log\log(\delta^{-1}\eta^{-1}))\cdot (\log(\nu^{-1})+\log\log(\delta^{-1})) \right)
    \end{align*}
    \item the maximal evolution time is
    \begin{align*}
        O\left(\tau \delta^{-1}\log(\delta^{-1}\eta^{-1})\right)
    \end{align*}
    \item the classical running time is
    \begin{align*}
       O\left(\eta^{-2}\log(\delta^{-1})\cdot (\log(\nu^{-1})+\log\log(\delta^{-1}))\cdot (\log(\delta^{-1})+\log \log (\delta^{-1}\eta^{-1}))^2\right).
    \end{align*}
\end{enumerate}
\end{lemma}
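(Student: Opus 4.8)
The plan is to solve the CDF inversion problem by a \emph{robust binary search} that locates the threshold crossing of $C$ using only noisy evaluations of the approximate CDF $\wt{C}$. First I would fix the Fourier degree $d = O(\delta^{-1}\log(\delta^{-1}\eta^{-1}))$ by invoking Lemma~\ref{lem:approx_Heaviside} with approximation error $\Theta(\eta)$, so that the induced ACDF obeys $C(x-\delta) - \eta/8 \le \wt{C}(x) \le C(x+\delta) + \eta/8$ on $[-\pi/3,\pi/3]$. This single choice already pins down the maximal evolution time: every call to the Hadamard-test circuit in Figure~\ref{fig:hadamard_test} runs $e^{-ij\tau H}$ with $|j|\le d$, so the maximal evolution time is $\tau d = O(\tau\delta^{-1}\log(\delta^{-1}\eta^{-1}))$, which is item (3).

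For the noisy evaluations I would use the unbiased estimator $\ov{G}(x)$ of Lemma~\ref{lem:est_acdf}, grouped in the median-of-means fashion of Algorithm~\ref{alg:gs_energy}. Since the raw estimator has variance $O(\mathcal{F}^2) = O(\log^2 d)$, a group mean of $N_s = O(\eta^{-2}\log^2 d)$ draws has variance $O(\eta^2)$, and by Chebyshev it lands within $\eta/8$ of $\wt C(x)$ with probability at least $3/4$ for a suitable constant in $N_s$; taking $N_b = \Omega(\log\nu^{-1} + \log\log\delta^{-1})$ independent groups and a majority vote against the threshold $(3/4)\eta$ drives the per-query failure probability down to $e^{-\Omega(N_b)}$ by a Chernoff bound. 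The search maintains an interval $[x_L,x_R]\subseteq(-\pi/3,\pi/3)$ that provably contains a valid $x^\star$; at each step it evaluates the majority vote at the midpoint $x_M$ and discards roughly half of the interval, retaining a $(2/3)\delta$ margin so that the $\pm\delta$ shift and the $\pm\eta/8$ error incurred in passing between $\wt C$ and $C$ cannot push the valid point out of the surviving half. As the width contracts like $W \mapsto W/2 + (2/3)\delta$, after $O(\log\delta^{-1})$ iterations it falls below $2\delta$ and the returned midpoint is valid. Crucially, the \emph{same} pool of $N_b N_s$ samples is reused at every iteration --- only the phases $e^{i(\theta_J + Jx_M)}$ are recomputed --- which is what keeps the sample count at $N_b N_s$ rather than $N_b N_s \cdot O(\log\delta^{-1})$.

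The main obstacle is exactly this reuse: the midpoints $x_M$ are chosen adaptively from the same randomness that drives the estimates, so a naive union bound over all midpoints that could ever be queried would cost an extra $\log\delta^{-1}$ factor in $N_b$ (turning $\log\log\delta^{-1}$ into $\log\delta^{-1}$). I would resolve this by an inductive \emph{trajectory-tracking} argument. Consider the deterministic trajectory the search would follow if every comparison were exact; away from the ambiguous midpoints (those within $\delta$ of a jump of $C$, or where $C$ genuinely lies between $\eta/2$ and $\eta$) this trajectory is a fixed sequence of at most $O(\log\delta^{-1})$ points. Union-bounding the per-query failure $e^{-\Omega(N_b)}$ over only these fixed points --- which requires just $N_b = \Omega(\log(\log(\delta^{-1})/\nu))$ --- one shows by induction that the actual execution coincides with the idealized one; at an ambiguous midpoint either decision is acceptable precisely because the $(2/3)\delta$ slack keeps the invariant intact under both branches. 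This is where the discontinuity of $C$ and the reuse of samples interact most delicately, and it is the only step that is not a routine concentration estimate.

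Finally I would tally the costs. The number of samples is $N_b N_s = O\big(\eta^{-2}(\log\nu^{-1}+\log\log\delta^{-1})(\log\delta^{-1}+\log\log(\delta^{-1}\eta^{-1}))^2\big)$, using $\log^2 d = (\log\delta^{-1}+\log\log(\delta^{-1}\eta^{-1}))^2$, which is item (1). Each sample costs expected evolution time $\tau\,\E|J| = O(\tau d/\log d)$, so multiplying by $N_b N_s$ yields the expected total evolution time of item (2), while the maximal evolution time $\tau d$ gives item (3) as already noted. For item (4), the classical work re-evaluates all $N_b N_s$ phased terms once per iteration over $O(\log\delta^{-1})$ iterations, which accounts for the extra $\log\delta^{-1}$ factor relative to the sample count. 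A final union bound over the failure sources (per-query estimation error and search termination) keeps the overall failure probability at most $\nu$.
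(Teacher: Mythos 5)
Your proposal matches the paper's proof essentially step for step: the same choice $d = O(\delta^{-1}\log(\delta^{-1}\eta^{-1}))$ with ACDF approximation error $\Theta(\eta)$, the same median-of-means/majority-vote certification subroutine with $N_s = O(\eta^{-2}\log^2 d)$ and $N_b = \Omega(\log\nu^{-1}+\log\log\delta^{-1})$ analyzed via Chebyshev and Chernoff (the paper's \textsc{Certify}, Lemma~\ref{lem:certify_two_cases}), the same $(2/3)\delta$-margin binary search maintaining $C(x_L)<\eta$, $C(x_R)>\eta/2$ and terminating at width $2\delta$ after $O(\log\delta^{-1})$ iterations, the same reuse of one sample pool across iterations, and identical cost accounting for items (1)--(4). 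The only difference is that where the paper dismisses the adaptivity issue of reusing samples at data-dependent midpoints with a one-line remark that it ``does not affect the union bound,'' you flag it as the delicate step and sketch a trajectory-tracking justification --- a welcome elaboration of the same argument, not a different route.
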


\begin{proof}
For any $x\in [-\pi/3, \pi/3]$, at least one of the following conditions will hold:
\begin{align}\label{eq:cond_dec_cdf_inv}
    C(x + \delta) > \eta /2, ~~\text{or}~~C(x-\delta) < \eta.
\end{align}
Suppose we have a sub-routine $\textsc{Certify}(x, \delta, \eta, \{J_k, Z_k\})$ such that if $C(x + \delta) > \eta /2$, it returns 0; otherwise, it returns 1.

Then, we can solve the CDF inversion problem via the binary search (Algorithm~\ref{alg:cdf_inv}).

\begin{algorithm}[t]
\caption{Inverting the CDF}
\label{alg:cdf_inv} 
\begin{algorithmic}[1]
    \algrenewcommand\algorithmicprocedure{\textbf{procedure}}
	\Procedure{InvertCDF}{$\eta, \delta, \{J_k, Z_k\}$}
        \State $x_{L}\gets -\pi/3$, $X_{R}\gets \pi/3$
        \While{$x_R - x_L > 2\delta$}\label{ln:binary_while}
            \State $x_M\gets (x_L + x_R)/2$
            \State $u\gets \textsc{Certify}(x_M, (2/3)\delta, \eta, \{J_k, Z_k\})$
            \If{$u=0$}
                \State $x_R\gets x_M + (2/3)\delta$
            \Else 
                \State $x_L\gets x_M - (2/3)\delta$
            \EndIf
        \EndWhile
        \State \Return $(x_L + x_R)/2$
	\EndProcedure
\end{algorithmic}
\end{algorithm}

In Line~\ref{ln:binary_while}, $x_L$ and $x_R$ always satisfy the following conditions:
\begin{align*}
    C(x_L)<\eta, \quad C(x_R)>\eta/2,
\end{align*}
which is guaranteed by $\textsc{Certify}(x_M, (2/3)\delta, \eta, \{J_k, Z_k\})$. Then, when the while-loop ends, we have $x_R-x_L\leq 2\delta$. Let $x^\star := (x_L + x_R)/2$ be the output of Algorithm~\ref{alg:cdf_inv}. Then, we get that
\begin{align*}
    C(x^\star +\delta) \geq &~ C(x_R) > \eta/2,\\
    C(x^\star -\delta) \leq &~ C(x_L) < \eta.
\end{align*}

And it is easy to see that Algorithm~\ref{alg:cdf_inv} will call $\textsc{Certify}$ $L:=O(\log(1/\delta))$ times. Then, by Lemma~\ref{lem:certify_two_cases} and union bound, Algorithm~\ref{alg:cdf_inv} will be correct with probability at least $1-\nu$. We note that different runs of \textsc{Certify} can share a same set of samples $\{J_k,Z_k\}$, which does not affect the union bound. Hence, the number of samples and the total evolution time follows directly from  Lemma~\ref{lem:certify_two_cases} and $d=O(\delta^{-1}\log(\delta^{-1}\eta^{-1}))$. 
\end{proof}

\begin{lemma}[\textsc{Certify} sub-routine]\label{lem:certify_two_cases}
For any $\nu>0$, there exists an algorithm that distinguishes the two cases in Eq.~\eqref{eq:cond_dec_cdf_inv} for any $x\in [-\pi/3, \pi/3]$ with probability at least $1-O(\nu/L)$ using 
\begin{align*}
    O\left(\eta^{-2}\log^2(d)(\log(1/\nu)+\log\log(1/\delta))\right)
\end{align*}
independent samples of $(J,Z)$, and total evolution time
\begin{align*}
    O\left(\eta^{-2}\tau d \log(d)(\log(1/\nu)+\log\log(1/\delta)) \right)
\end{align*}
in expectation.
\end{lemma}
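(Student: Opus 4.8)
\emph{Proof proposal.} The plan is to reduce this decision problem to estimating the ACDF $\wt{C}(x)$ at the single point $x$, using the unbiased estimator $\ov{G}(x)$ supplied by Lemma~\ref{lem:est_acdf}, and then to threshold that estimate at $(3/4)\eta$. First I would construct $F = F_{d,\delta}$ with Heaviside approximation error $\epsilon = \eta/8$, so that the ACDF sandwiches the true CDF,
\begin{align}
    C(x-\delta) - \eta/8 \;\leq\; \wt{C}(x) \;\leq\; C(x+\delta) + \eta/8 \qquad \forall x \in [-\pi/3, \pi/3].
\end{align}
By Lemma~\ref{lem:est_acdf} a single draw $G(x;J,Z)$ is unbiased for $\wt{C}(x)$, has variance $O(\log^2 d)$, and costs expected evolution time $O(\tau d/\log d)$; this is the only quantum ingredient needed.

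The second step is confidence amplification. A plain empirical mean would, by Chebyshev, require $\poly(L/\nu)$ samples, which is too expensive, so instead I would use the median-of-means estimator: partition $BK$ draws into $B = O(\log(1/\nu)+\log\log(1/\delta))$ groups of size $K = O(\eta^{-2}\log^2 d)$, average within each group, and output the median $\ov{G}(x)$ of the $B$ group means. Each group mean has variance $O(\eta^2)$, hence lies within $\eta/8$ of $\wt{C}(x)$ with probability at least $3/4$ by Chebyshev; a Chernoff bound on the number of ``bad'' groups then gives $|\ov{G}(x)-\wt{C}(x)|\leq \eta/8$ except with probability $O(\nu/L)$, using $L = O(\log(1/\delta))$. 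The total sample count is $BK = O(\eta^{-2}\log^2(d)(\log(1/\nu)+\log\log(1/\delta)))$, and multiplying by the per-sample expected evolution time $O(\tau d/\log d)$ yields the claimed expected total evolution time $O(\eta^{-2}\tau d\log(d)(\log(1/\nu)+\log\log(1/\delta)))$.

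The final step is to check that thresholding at $(3/4)\eta$ correctly distinguishes the two cases of Eq.~\eqref{eq:cond_dec_cdf_inv}. Conditioned on $|\ov{G}(x)-\wt{C}(x)|\leq \eta/8$, I would return $0$ if $\ov{G}(x)\geq (3/4)\eta$ and $1$ otherwise, and verify each branch against the sandwich. If $\ov{G}(x)\geq (3/4)\eta$ then $\wt{C}(x)\geq (5/8)\eta$, so $C(x+\delta)\geq \wt{C}(x)-\eta/8 \geq \eta/2$, making the report $C(x+\delta)>\eta/2$ safe; symmetrically, if $\ov{G}(x)<(3/4)\eta$ then $\wt{C}(x)<(7/8)\eta$, so $C(x-\delta)\leq \wt{C}(x)+\eta/8<\eta$. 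Because $C$ is non-decreasing, at least one condition in Eq.~\eqref{eq:cond_dec_cdf_inv} always holds, so a consistent answer exists; where both hold either output is acceptable, and where exactly one holds the sandwich forces $\ov{G}(x)$ onto the correct side of the threshold.

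I expect the only genuine obstacle to be the confidence-boosting bookkeeping: driving the failure probability to $O(\nu/L)$ with merely \emph{logarithmic}—rather than polynomial—dependence on $L/\nu$, while holding the sample count at $O(\eta^{-2}\log^2 d)$, is exactly what the median-of-means construction buys, and the threshold constants ($3/4$, $\eta/8$, $5/8$, $7/8$) must be chosen consistently so that the inequalities above are strict wherever Definition~\ref{def:inv_cdf} requires strictness (e.g.\ by taking the estimation error strictly below $\eta/8$). The remaining cost accounting is routine given Lemma~\ref{lem:est_acdf}.
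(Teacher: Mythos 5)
Your proposal matches the paper's proof essentially step for step: the same choice $\epsilon=\eta/8$ for the ACDF sandwich, the same $(3/4)\eta$ threshold with the $(5/8)\eta$ and $(7/8)\eta$ implications, and the same Chebyshev-within-groups plus Chernoff-across-groups amplification with $N_s=O(\eta^{-2}\log^2 d)$ and $N_b=\Omega(\log(1/\nu)+\log\log(1/\delta))$ --- thresholding the median of group means is exactly the paper's majority vote over thresholded group means. Your closing worry about strictness is handled in the paper by making the hypotheses strict ($\wt{C}(x)>(5/8)\eta$ gives $C(x+\delta)\geq \wt{C}(x)-\eta/8>\eta/2$, and symmetrically), so strict conclusions follow even with approximation error exactly $\eta/8$.
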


\begin{proof}
To decide which one of the conditions holds for $x$, we can estimate the ACDF $\wt{C}(x)$. If we take $\epsilon=\eta/8$ in Lemma~\ref{lem:approx_Heaviside}, then the constructed ACDF satisfies
\begin{align*}
    C(x-\delta) - \eta/8 \leq \wt{C}(x)\leq C(x+\delta) + \eta/8.
\end{align*}
Thus,
\begin{align*}
    &\wt{C}(x) > (5/8)\eta ~~\Rightarrow~~C(x+\delta)>\eta/2,\\
    &\wt{C}(x) < (7/8)\eta ~~\Rightarrow~~C(x-\delta)<\eta.
\end{align*}
Then, we can distinguish $\wt{C}(x) > (5/8)\eta$ or $\wt{C}(x) < (7/8)\eta$ by the estimator in Lemma~\ref{lem:est_acdf}.

\begin{algorithm}[t]
\caption{Distinguish the two cases in Eq.~\eqref{eq:cond_dec_cdf_inv}}
\label{alg:cdf_dis} 
\begin{algorithmic}[1]
    \algrenewcommand\algorithmicprocedure{\textbf{procedure}}
	\Procedure{Certify}{$x,\eta, \delta, \{J_k, Z_k\}$}
        \State $c\gets 0$, $N_b\gets \Omega(\log(1/\nu)+\log\log(1/\delta))$
        \For{$1\leq r\leq N_b$}
            \State Compute $\ov{G}(x)$ using $\{J_k, Z_k\}_{k\in [(r-1)N_s+1, rN_s]}$\Comment{Lemma~\ref{lem:est_acdf}}
            \If{$\ov{G}(x)\geq (3/4)\eta$}
                \State $c\gets c+1$
            \EndIf
        \EndFor
        \State \Return $\mathbf{1}_{c\leq N_b/2}$
	\EndProcedure
\end{algorithmic}
\end{algorithm}

In Algorithm~\ref{alg:cdf_dis}, we compute the estimator $\ov{G}(x)$ $N_b$ times independently, where each time we use $N_s$ samples of $(J, Z)$. We note that an error occurs when $\wt{C}(x)>(7/8)\eta$ but $\ov{G}(x) < (3/4)\eta$, or $\wt{C}(x)<(5/8)\eta$ but $\ov{G}(x) > (3/4)\eta$ (when $(5/8)\eta \leq \wt{C}(x)\leq (7/8)\eta$, any output is correct). By Chebyshev's inequality, we have
\begin{align*}
    \Pr[\ov{G}(x)~\text{has an error}]\leq &~\Pr\left[\ov{G}(x) < \frac{3}{4}\eta ~\Big|~ \wt{C}(x)>\frac{7}{8}\eta\right] + \Pr\left[\ov{G}(x) > \frac{3}{4}\eta ~\Big|~ \wt{C}(x)<\frac{5}{8}\eta\right]\\
    \leq &~ 2\cdot \frac{\sigma^2}{\eta^2/64}\\
    \leq &~ \frac{1}{4},
\end{align*}
if we take $\sigma^2=O(\eta^2)$ in Lemma~\ref{lem:est_acdf}.

Then, by the Chernoff bound, we have 
\begin{align*}
    \Pr[\textsc{Certify}~\text{makes an error}]\leq \exp(-\Omega(N_b)) \leq \nu/L,
\end{align*}
if we take $N_b := \Omega(\log (L/\nu))=\Omega(\log(1/\nu)+\log\log(1/\delta))$.
Thus, the total number of samples is
\begin{align*}
    N_b N_s = O\left(\eta^{-2}\log^2(d)(\log(1/\nu)+\log\log(1/\delta))\right),
\end{align*}
and the expected total evolution time is
\begin{align*}
O\left(\eta^{-2}\tau d \log(d)(\log(1/\nu)+\log\log(1/\delta)) \right),
\end{align*}
which complete the proof of the lemma.
\end{proof}
\subsubsection{Estimating the ground state energy}

\begin{corollary}[Ground state energy estimation, Corollary 3 in \cite{lt21}]
If $p_0\geq \eta$ for some known $\eta$, then with probability at least $1-\nu$, the ground state energy $\lambda_0$ can be estimated within additive error $\epsilon$, such that:
\begin{enumerate}
    \item the number of times running the quantum circuit (Figure~\ref{fig:hadamard_test}) is $\wt{O}(\eta^{-2})$.
    \item the expected total evolution time is $\wt{O}(\epsilon^{-1}\eta^{-2})$. 
    \item the maximal evolution time is $\wt{O}(\epsilon^{-1})$.
    \item the classical running time is $\wt{O}(\eta^{-2})$.
\end{enumerate}
\end{corollary}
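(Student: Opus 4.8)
The plan is to reduce ground state energy estimation to the CDF inversion problem of Definition~\ref{def:inv_cdf} and then invoke Lemma~\ref{lem:invert_cdf} essentially as a black box. The key structural observation is that, because $\lambda_0$ is the smallest eigenvalue, the spectral measure $p(x)$ places no weight below $\tau\lambda_0$ and places weight exactly $p_0$ on its first jump at $x=\tau\lambda_0$. Consequently the $2\pi$-periodic CDF satisfies $C(x)=0$ for $x<\tau\lambda_0$ and $C(x)=p_0$ for $\tau\lambda_0\le x<\tau\lambda_1$. Thus locating the first jump of $C$ is the same as locating $\tau\lambda_0$, and the promise $p_0\ge \eta$ is precisely what makes this jump detectable at threshold $\eta$.

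Concretely, I would first set $\delta:=\tau\epsilon$ and run the inversion algorithm of Lemma~\ref{lem:invert_cdf} with parameters $\delta,\eta,\nu$ to obtain, with probability $1-\nu$, a point $x^\star\in(-\pi/3,\pi/3)$ satisfying $C(x^\star+\delta)>\eta/2$ and $C(x^\star-\delta)<\eta$. I would then translate these two inequalities into a sandwich on $x^\star$ using the promise: since $C(x^\star-\delta)<\eta\le p_0$ while $C$ equals $p_0\ge\eta$ throughout $[\tau\lambda_0,\tau\lambda_1)$, we must have $x^\star-\delta<\tau\lambda_0$; and since $C(x^\star+\delta)>\eta/2>0$ while $C$ vanishes below $\tau\lambda_0$, we must have $x^\star+\delta\ge\tau\lambda_0$. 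Together these give $|x^\star-\tau\lambda_0|<\delta$, so $x^\star/\tau$ estimates $\lambda_0$ with additive error $\delta/\tau=\epsilon$, which is the correctness claim.

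Finally, the four cost bounds follow by substituting $\delta=\Theta(\tau\epsilon)$ into the corresponding costs stated in Lemma~\ref{lem:invert_cdf}, treating $\tau$ as a fixed normalization constant and absorbing all $\polylog(\delta^{-1},\eta^{-1},\nu^{-1})$ factors into the $\wt{O}$. The sample complexity and classical running time become $\wt{O}(\eta^{-2})$; the maximal evolution time $O(\tau\delta^{-1}\log(\delta^{-1}\eta^{-1}))=\wt{O}(\epsilon^{-1})$; and the expected total evolution time $O(\tau\eta^{-2}\delta^{-1}\log(\delta^{-1}\eta^{-1})\cdot\polylog)=\wt{O}(\epsilon^{-1}\eta^{-2})$.

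I do not expect a genuine technical obstacle, since the entire content of the corollary is this reduction. The one place that requires care is the middle paragraph, where one must use the exact jump structure and right-continuity of $C$ to rule out that the inversion output is fooled by a cluster of tiny sub-$\eta$ overlaps appearing before $\tau\lambda_0$; here there are none precisely because $\lambda_0$ is the smallest eigenvalue and $p_0\ge\eta$ guarantees the first jump clears the detection threshold. This is exactly the role of the overlap assumption, and it is the step that would break down without it.
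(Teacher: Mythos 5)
Your proposal is correct and follows essentially the same route as the paper: the paper's proof likewise runs the CDF inversion of Lemma~\ref{lem:invert_cdf} with $\delta=\tau\epsilon$ and deduces $|x^\star/\tau-\lambda_0|\leq\epsilon$ from the fact that $C$ cannot take values strictly between $0$ and $p_0\geq\eta$, with all four cost bounds imported directly from that lemma. Your middle paragraph just makes explicit the jump-structure argument that the paper compresses into one line.
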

\begin{proof}
Suppose we can solve the CDF inversion problem (Definition~\ref{def:inv_cdf}) for $\delta = \tau \epsilon$ and $\eta$, i.e., we find an $x^\star$ such that
\begin{align*}
    C(x^\star + \tau \epsilon) > \eta/2 > 0, ~~~C(x^\star - \tau \epsilon) < \eta \leq p_0.
\end{align*}
Since $C(x)$ cannot take value between $0$ and $p_0$, we have
\begin{align*}
    x^\star + \tau \epsilon \geq \tau \lambda_0, ~~~ x^\star - \tau \epsilon < \tau \lambda_0,
\end{align*}
which is 
\begin{align*}
    |x^\star/\tau  - \lambda_0|\leq \epsilon.
\end{align*}

The costs of this algorithm follows from Lemma~\ref{lem:invert_cdf}.
\end{proof}

\subsection{Low Fourier degree approximation of the Heaviside function}
We construct the low degree approximation of the Heaviside function in this section.\footnote{The construction in \cite{lt21} is not enough to prove Lemma~\ref{lem:approx_acdf} because the range of $F_{d,\delta}$ is $[-\epsilon/2, 1+\epsilon]$ while Lemma~\ref{lem:approx_acdf} requires the range to be $[0,1]$. We fix this issue in Lemma~\ref{lem:approx_Heaviside}.}
\begin{lemma}[Constructing low degree approximation of $H$]\label{lem:approx_Heaviside}
Let $H(x)$ be the $2\pi$-period Heaviside function (Eq.~\eqref{eq:def_heaviside}). For any $\delta \in (0, \pi/2)$ such that $\tan(\delta/2)\leq 1-1/\sqrt{2}$, there exists a $d=O(\delta^{-1}\log(\delta^{-1}\epsilon^{-1}))$ and a $2\pi$-period function $F_{d,\delta}(x)$ of the form:
\begin{align}
    F_{d,\delta}(x) = \frac{1}{\sqrt{2\pi}}\sum_{j=-d}^d \wh{F}_{d,\delta,j} \cdot e^{ijx}
\end{align}
such that
\begin{enumerate}
    \item $F_{d,\delta}(x)\in [0, 1]$ for all $x\in \R$.
    \item $|F_{d,\delta}(x) - H(x)|\leq \epsilon$ for $x\in [-\pi + \delta, -\delta]\cup [\delta, \pi - \delta]$.
    \item $|\wh{F}_{d,\delta,j}|=\Theta(1/|j|)$ for $j\ne 0$.
\end{enumerate}
\end{lemma}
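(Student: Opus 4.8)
The plan is to start from the Lin--Tong approximate Heaviside function and repair its range with a single affine rescaling, so that the only genuinely new work is the normalization flagged in the footnote. Concretely, the construction of \cite{lt21} (ultimately built from an optimal-degree polynomial approximation of the sign function, smoothed into a $2\pi$-periodic trigonometric polynomial) already produces a function $\wt{F}_{d,\delta}(x) = \frac{1}{\sqrt{2\pi}}\sum_{|j|\le d}\wh{G}_j e^{ijx}$ of Fourier degree $d = O(\delta^{-1}\log(\delta^{-1}\epsilon^{-1}))$ that satisfies $|\wt{F}_{d,\delta}(x) - H(x)| \le \epsilon$ on $[-\pi+\delta,-\delta]\cup[\delta,\pi-\delta]$ and whose coefficients obey $|\wh{G}_j| = \Theta(1/|j|)$ for $j \ne 0$. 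The decay $|\wh{G}_j| = O(1/|j|)$ (the bound actually used downstream to get $\mathcal{F}=\sum_{|j|\le d}|\wh G_j|=O(\log d)$) follows because $\wt{F}_{d,\delta}$ is a smoothing of a function carrying a unit jump, while the matching lower bound is inherited from the exact square-wave series $H(x) = \tfrac12 + \tfrac2\pi\sum_{m\ge 0}\sin((2m+1)x)/(2m+1)$. The only property that fails is (1): this construction overshoots near the discontinuities, so its range is only guaranteed to lie in $[-\epsilon/2, 1+\epsilon]$ rather than $[0,1]$.

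First I would fix the range. I define
\begin{align}
    F_{d,\delta}(x) := \frac{\wt{F}_{d,\delta}(x) + \epsilon/2}{1 + 3\epsilon/2}.
\end{align}
Since $\wt{F}_{d,\delta}(x) \in [-\epsilon/2,\,1+\epsilon]$ for all $x$, the numerator lies in $[0,\,1+3\epsilon/2]$, so $F_{d,\delta}(x) \in [0,1]$ for every $x \in \R$, giving property (1). This map is affine, so $F_{d,\delta}$ is again a $2\pi$-periodic trigonometric polynomial of degree $d$: the rescaling multiplies every Fourier coefficient by $a := (1+3\epsilon/2)^{-1}$ and the additive shift modifies only the $j=0$ coefficient. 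In particular $\wh{F}_{d,\delta,j} = a\,\wh{G}_j$ for all $j \ne 0$, and since $\epsilon \in (0,1)$ we have $a = \Theta(1)$; hence $|\wh{F}_{d,\delta,j}| = \Theta(|\wh G_j|) = \Theta(1/|j|)$, which is property (3). The degree is unchanged because $\wh G_j = 0$ for $|j| > d$ and $a \ne 0$.

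It remains to recheck the approximation bound (2) after rescaling. On the good region I would write $F_{d,\delta}-H = [(\wt{F}_{d,\delta}-H) + \epsilon/2 - (3\epsilon/2)H]/(1+3\epsilon/2)$: where $H=0$ one has $|\wt F_{d,\delta}| \le \epsilon$, so $|F_{d,\delta}| \le (3\epsilon/2)/(1+3\epsilon/2) \le \tfrac32\epsilon$; where $H=1$ one has $|\wt F_{d,\delta}-1|\le\epsilon$, so $|F_{d,\delta}-1| = |\wt F_{d,\delta}-1-\epsilon|/(1+3\epsilon/2) \le 2\epsilon$. Thus $|F_{d,\delta}(x)-H(x)| \le 2\epsilon$ on the good region, and invoking the base construction with target error $\epsilon/2$ yields the claimed error $\epsilon$ while only changing $d$ by an $O(\delta^{-1})$ additive term inside the logarithm, which is absorbed into the stated $O(\delta^{-1}\log(\delta^{-1}\epsilon^{-1}))$. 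The one substantive ingredient is the cited construction of $\wt F_{d,\delta}$ with simultaneously controlled overshoot and $\Theta(1/|j|)$ coefficient decay at logarithmic degree; the main thing to be careful about in the present lemma is simply that the normalization disturbs neither the degree nor the coefficient tail, which holds precisely because the shift touches only $\wh F_{d,\delta,0}$ and the scale factor $a$ is bounded away from $0$ and $\infty$.
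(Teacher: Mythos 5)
Your proposal is correct and takes essentially the same route as the paper: the paper's own proof also repairs the range of the Lin--Tong construction by exactly this affine shift-and-rescale (it uses $F_{d,\delta}=\frac{F'_{d,\delta}+\epsilon/4}{1+(5/4)\epsilon}$ where you use $\frac{\wt{F}_{d,\delta}+\epsilon/2}{1+3\epsilon/2}$), verifying properties (1)--(3) just as you do. The only cosmetic difference is that the paper re-derives the base construction explicitly (convolving $H$ with the Chebyshev mollifier $M_{d,\delta}$ of Lemma~\ref{lem:mollifier} to get the global range bound $F'_{d,\delta}\in[-\epsilon/4,1+\epsilon]$ and the coefficient bound via $\wh{F'}_{d,\delta,j}=\sqrt{2\pi}\,\wh{M}_{d,\delta,j}\wh{H}_j$) rather than citing it as a black box, which is immaterial to correctness.
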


\begin{proof}
We first construct $F_{d,\delta}'(x)$ by mollifying the Heaviside function with $M_{d,\delta}(x)$ in Lemma~\ref{lem:mollifier}:
\begin{align}
    F'_{d,\delta}(x) := (M_{d,\delta} * H)(x) = \int_{-\pi}^{\pi} M_{d,\delta}(y) H(x-y) \d y.
\end{align}

We can verify that $F'_{d,\delta}$ has Fourier degree at most $d$. It follows from the Chebyshev polynomial $T_d(x)$ is of degree $d$. Hence, the Fourier coefficients of $M_{d,\delta}(x)$:
\begin{align}
    \wh{M}_{d,\delta, j} = \frac{1}{\sqrt{2\pi}}\int_{-\pi}^\pi M_{d,\delta}(x)e^{-ijx}\d x \ne 0
\end{align}
only if $j\in \{-d,\dots, d\}$. Since $F_{d,\delta}$ is a convolution of $M_{d,\delta}$ and $H$, we have
\begin{align}
    \wh{F'}_{d,\delta,j}=\sqrt{2\pi} \wh{M}_{d,\delta, j} \wh{H}_{j}~~~\forall |j|\leq d.
\end{align}

Then, we define
\begin{align}
F_{d,\delta}(x) := \frac{1}{\sqrt{2\pi}}\sum_{j=-d}^d \wh{F}_{d,\delta,j} \cdot e^{ijx},    
\end{align}
where
\begin{align}
    \wh{F}_{d,\delta,j} = \begin{cases}
        \frac{1}{1+(5/4)\epsilon}\left(\wh{F'}_{d,\delta, j} + \sqrt{2\pi}\epsilon/4\right) & \text{if}~j=0,\\
        \frac{1}{1+(5/4)\epsilon}\wh{F'}_{d,\delta, j} & \text{otherwise}.
    \end{cases}
\end{align}
It is easy see that 
\begin{align}
    F_{d,\delta}(x) = \frac{F'_{d,\delta}(x) + \epsilon/4}{1+(5/4)\epsilon}~~~\forall x\in \R.
\end{align}

Then, we will show that taking $d=O(\delta^{-1}\log(\delta^{-1}\epsilon^{-1}))$ is enough to satisfy (1)-(3).

\paragraph{Part (1):}
We first compute the range of $F'_{d,\delta}(x)$:
\begin{align}
    F'_{d,\delta}(x)\leq \int_{-\pi}^{\pi} |M_{d,\delta}(y)|\d y\leq 1+\frac{4\pi}{{\cal N}_{d,\delta}},
\end{align}
where the second step follows from (2) in Lemma~\ref{lem:mollifier}.
On the other hand,
\begin{align*}
    F'_{d,\delta}(x)\geq -\frac{1}{{\cal N}_{d,\delta}} \int_{-\pi}^\pi H(y)\d y = \frac{-\pi}{{\cal N}_{d,\delta}}.
\end{align*}
Hence, if we take $d=O(\delta^{-1}\log(\delta^{-1}\epsilon^{-1}))$ such that
\begin{align}\label{eq:d_N}
     {\cal N}_{d,\delta}\geq C_1 e^{d\delta/\sqrt{2}}\sqrt{\frac{\delta}{d}}\cdot \mathrm{erf}(C_2\sqrt{d}\delta)\geq \frac{4\pi}{\epsilon} 
\end{align}
holds, we will have
\begin{align}\label{eq:F_upper_bound}
    -\epsilon/4 \leq F'_{d,\delta} \leq 1+\epsilon.
\end{align}
Therefore, for all $x\in \R$,
\begin{align}
    F_{d,\delta}(x) = \frac{F'_{d,\delta}(x) + \epsilon/4}{1+(5/4)\epsilon}\in [0,1].
\end{align}

\paragraph{Part (2):} The approximation error of $F'_{d,\delta}$ is 
\begin{align}
    |F'_{d,\delta}(x) - H(x)| \leq &~ \left|\int_{-\pi}^{\pi} M_{d,\delta}(y) (H(x-y)-H(x))\d y\right|\notag\\
    \leq &~ \int_{-\pi}^{\pi} |M_{d,\delta}(y)| |H(x-y)-H(x)|\d y,
\end{align}
where the first step follows from (2) in Lemma~\ref{lem:mollifier}, and the second step follows from the triangle inequality.

Fix $x\in [-\pi+\delta, -\delta]\cup [\delta, \pi-\delta]$. If $y\in (-\delta, \delta)$, then $H(x-y) = H(x)$ and 
\begin{align}
    \int_{-\delta}^{\delta} |M_{d,\delta}(y)| |H(x-y)-H(x)|\d y = 0.
\end{align}
If $|y|\geq \delta$, by (1) in Lemma~\ref{lem:mollifier}, we have $|M_{d,\delta}|\leq \frac{1}{{\cal N}_{d,\delta}}$. Since $|H(x-y)-H(x)|\leq 1$, we have
\begin{align}
    \left(\int_{-\pi}^{-\delta} + \int_{\delta}^{\pi}\right) |M_{d,\delta}(y)| |H(x-y)-H(x)|\d y \leq \frac{2\pi}{{\cal N}_{d,\delta}}\leq \epsilon/2,
\end{align}
where the last step follows from Eq.~\eqref{eq:d_N}. Therefore,
\begin{align}
    |F'_{d,\delta}(x) - H(x)|\leq \epsilon/2~~~\forall |x|\in [\delta, \pi-\delta].
\end{align}

Thus, 
\begin{align}
    |F_{d,\delta}(x) - H(x)| = &~ \left|\frac{F'_{d,\delta}(x) + \epsilon/4}{1+(5/4)\epsilon} - H(x)\right|\\
    \leq &~ |F'_{d,\delta}(x) - H(x)| + \frac{(5/4)\epsilon}{1+(5/4)\epsilon} |F'_{d,\delta}(x)| + \frac{\epsilon/4}{1+(5/4)\epsilon}\notag\\
    \leq &~ \epsilon/2 +  \frac{(5/4)\epsilon}{1+(5/4)\epsilon}(1+\epsilon) + \frac{\epsilon/4}{1+(5/4)\epsilon}\notag\\
    \leq &~ 2\epsilon,
\end{align}
where the second step follows from the triangle inequality, the third step follows from Eq.~\eqref{eq:F_upper_bound}. By scaling for $\epsilon$, we can make the approximation error at most $\epsilon$.

\paragraph{Part (3):}
Since $|\wh{F'}_{d,\delta,j}|=\sqrt{2\pi} |\wh{M}_{d,\delta, j}| |\wh{H}_{j}|$, we first bound $|\wh{M}_{d,\delta, j}|$:
\begin{align}
    \left|\wh{M}_{d,\delta, j}\right| \leq \frac{1}{\sqrt{2\pi}}\int_{-\pi}^\pi |M_{d,\delta}(x)|\d x\leq \frac{1}{\sqrt{2\pi}}\left(1+\frac{4\pi}{{\cal N}_{d,\delta}}\right)\leq \frac{1+\epsilon}{\sqrt{2\pi}},
\end{align}
where the second step follows from (2) in Lemma~\ref{lem:mollifier} and the last step follows from Eq.~\eqref{eq:d_N}.

For $|\wh{H}_j|$, if $j\ne 0$, we have
\begin{align}
    \wh{H}_j =  \frac{1}{\sqrt{2\pi}}\int_{-\pi}^{\pi} H(x) e^{-ijx}\d x
    =  \frac{1}{\sqrt{2\pi}}\int_{0}^{\pi} e^{-ijx}\d x
    =  \begin{cases}
    \frac{\sqrt{2}}{i\sqrt{\pi}j} & \text{if }j~\text{is odd},\\
    0 & \text{if }j~\text{is even}.
    \end{cases}
\end{align}
Hence, for $j\ne 0$,
\begin{align}
    |\wh{F'}_{d,\delta, j}|\leq \sqrt{2\pi} \cdot \frac{1+\epsilon}{\sqrt{2\pi}}\cdot \sqrt{\frac{2}{\pi}}\frac{1}{|j|}=\frac{1+\epsilon}{\sqrt{\pi/2}|j|}.
\end{align}
Then, by definition, we get that
\begin{align}
    |\wh{F}_{d,\delta, j}|\leq \frac{1+\epsilon}{\sqrt{\pi/2}(1+(5/4)\epsilon)|j|}= \Theta(1/|j|).
\end{align}

The proof of the lemma is completed.
\end{proof}

The following lemma shows the approximation ratio of the ACDF $\wt{C}(x)$ constructed from the low degree approximated Heaviside function $F(x)$ by Lemma~\ref{lem:approx_Heaviside}.

\begin{lemma}[Approximation ratio of the ACDF]\label{lem:approx_acdf}
For any $\epsilon>0$, $0<\delta < \pi/6$, let $F(x) := F_{d,\delta}(x)$ constructed by Lemma~\ref{lem:approx_Heaviside}. Then, for any $x\in [-\pi/3, \pi/3]$, the ACDF $\wt{C}(x) = (F*p)(x)$ satisfies:
\begin{align*}
    C(x-\delta) -\epsilon \leq \wt{C}(x) \leq C(x + \delta) + \epsilon.
\end{align*}
\end{lemma}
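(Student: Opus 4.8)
The plan is to run the one-dimensional specialization of the fully worked-out argument for the two-dimensional analogue, Lemma~\ref{lem:approx_acdf_2d}; the whole statement reduces to converting the pointwise guarantee (2) of Lemma~\ref{lem:approx_Heaviside}, namely $|F(x)-H(x)|\le\epsilon$ on $[-\pi+\delta,-\delta]\cup[\delta,\pi-\delta]$, into a two-sided bound on the convolution $\wt{C}=F*p$. First I would introduce the shifted functions $F_L(x):=F(x-\delta)$ and $F_R(x):=F(x+\delta)$ and record the translation identities
\begin{align*}
    (F_L*p)(x)=\wt{C}(x-\delta), \qquad (F_R*p)(x)=\wt{C}(x+\delta),
\end{align*}
so that it suffices to control $|C(x)-(F_L*p)(x)|$ and $|C(x)-(F_R*p)(x)|$ and then undo the shift.

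For the lower bound I would start from
\begin{align*}
    |C(x)-(F_L*p)(x)| \le \int_{-\pi}^{\pi} |H(x-y)-F_L(x-y)|\,p(y)\,\d y,
\end{align*}
substitute $u=x-y$, and split the $u$-range into the transition gap $[0,2\delta)$ and its complement. On the complement the shift turns guarantee (2) into $|F_L(u)-H(u)|\le\epsilon$ (using that the $2\pi$-periodic Heaviside does not jump across a $\delta$-shift away from its discontinuities), while on the gap I only invoke the trivial bound $|F_L-H|\le 1$, which is legitimate since $F\in[0,1]$ by guarantee (1). Because $\supp(p)\subseteq[-\pi/3,\pi/3]$, $x\in[-\pi/3,\pi/3]$, and $\delta<\pi/6$, the effective $u$-range $[-2\pi/3,2\pi/3]$ meets the weak-bound region only in $[0,2\delta)$, whose contribution is exactly $\int_{x-2\delta}^{x}p=C(x)-C(x-2\delta)$. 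This yields $|C(x)-\wt{C}(x-\delta)|\le\epsilon+C(x)-C(x-2\delta)$; rearranging and replacing $x\mapsto x+\delta$ gives $\wt{C}(x)\ge C(x-\delta)-\epsilon$.

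The upper bound is symmetric: with $F_R$ the uncontrolled window becomes $(-2\delta,0]$, the same computation gives $|C(x)-\wt{C}(x+\delta)|\le\epsilon+C(x+2\delta)-C(x)$, and replacing $x\mapsto x-\delta$ produces $\wt{C}(x)\le C(x+\delta)+\epsilon$. The main obstacle is a bookkeeping subtlety rather than a deep difficulty: after shifting by $\delta$ one must verify that the inequality holds with $H(u)$ in place of $H(u-\delta)$, i.e.\ that the periodic Heaviside is constant on each shifted interval apart from a measure-zero endpoint, and that the hypothesis $\delta<\pi/6$ keeps the single width-$2\delta$ transition window inside the support so its mass is captured cleanly as the CDF increment $C(x)-C(x-2\delta)$ (respectively $C(x+2\delta)-C(x)$). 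With these two inequalities in hand the lemma follows immediately.
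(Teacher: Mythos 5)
Your proposal is correct and follows essentially the same route as the paper's own proof of Lemma~\ref{lem:approx_acdf}: the same shifted functions $F_L(x)=F(x-\delta)$ and $F_R(x)=F(x+\delta)$, the same split of the convolution integral into the region where guarantee (2) of Lemma~\ref{lem:approx_Heaviside} gives the $\epsilon$-bound and the width-$2\delta$ transition window handled by the trivial bound $|F_L-H|\le 1$ (legitimate by guarantee (1)), with the window's mass identified as the CDF increment $C(x)-C(x-2\delta)$, respectively $C(x+2\delta)-C(x)$, followed by the shift $x\mapsto x\pm\delta$. The support argument you spell out ($\supp(p)\subseteq[-\pi/3,\pi/3]$ together with $\delta<\pi/6$) and the endpoint bookkeeping for the periodic Heaviside are used implicitly in the paper as well, so there is no substantive difference.
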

\begin{proof}
By (2) in Lemma~\ref{lem:approx_Heaviside}, we have
\begin{align}
    |F(x) - H(x)|\leq \epsilon ~~~\forall x\in [-\pi + \delta, -\delta]\cup [\delta, \pi - \delta].
\end{align}
Define $F_L := F(x - \delta)$ such that
\begin{align}\label{eq:F_L_1d}
    |F_L(x) - H(x)|\leq \epsilon ~~~\forall x\in [-\pi + 2\delta, 0]\cup [2\delta, \pi].
\end{align}
For $\wt{C}_L(x) := (F_L * p)(x)$, we have $\wt{C}_L(x) = \wt{C}(x - \delta)$, and for $x \in [-\pi/3, \pi/3]$,
\begin{align}
    |C(x) - \wt{C}_L(x)| =&~ \left|\int_{-\pi}^{\pi} p(x-y) (H(y) - F_L(y)) \d y\right|\\
    \leq &~ \int_{-\pi}^{\pi} p(x-y) |H(y) - F_L(y)| \d y\notag\\
    = &~ \left(\int_{-\pi}^0 + \int_{2\delta}^\pi\right) p(x-y)|H(y)-F_L(y)|\d y+ \int_0^{2\delta} p(x-y)|H(y)-F_L(y)|\d y\notag\\
    \leq &~ \epsilon\cdot \left(\int_{-\pi}^0 + \int_{2\delta}^\pi\right) p(x-y)\d y + \int_0^{2\delta} p(x-y)|H(y)-F_L(y)|\d y\notag\\
    \leq &~ \epsilon + \int_0^{2\delta} p(x-y)|H(y)-F_L(y)|\d y\notag\\
    \leq &~ \epsilon + \int_0^{2\delta} p(x-y) \d y\notag\\
    = &~ \epsilon + \int_{x-2\delta}^{x} p(y) \d y\notag\\
    = &~ \epsilon + C(x) - C(x-2\delta),
\end{align}
where the second step follows from Cauchy-Schwarz inequality, the forth step follows from Eq.~\eqref{eq:F_L_1d}, the fifth step follows from $p(x)$ is a density function, the sixth step follows from $H(y) = 1$ and $F_L(y)\in [0,1]$ for $y\in [0, 2\delta]$, the last step follows from $C(x)$ is the CDF of $p(x)$ in $[-\pi, \pi]$. 

Hence, we have
\begin{align}
    \wt{C}_L(x) \geq C(x) - (\epsilon + C(x) - C(x - 2\delta)) = C(x - 2\delta) - \epsilon,
\end{align}
which proves the first inequality: 
\begin{align}
    \wt{C}(x - \delta) \geq C(x-2\delta) - \epsilon.
\end{align}

Similarly, we can define $F_R := F(x + \delta)$ and $\wt{C}_R(x) := (F_R * p)(x)$. We can show that
\begin{align}
    |C(x) - \wt{C}_R(x)| \leq \epsilon + C(x+2\delta) - C(x),
\end{align}
which gives
\begin{align}
    \wt{C}(x + \delta) \leq C(x + 2\delta) +\epsilon.
\end{align}

The lemma is then proved.
\end{proof}
\subsubsection{Technical lemma}
\begin{lemma}[Mollifier, Lemma 5 in \cite{lt21}]\label{lem:mollifier}
Define $M_{d,\delta}(x)$ to be
\begin{align}
    M_{d,\delta}:=\frac{1}{{\cal N}_{d,\delta}}T_d\left( 1 + 2\frac{\cos(x) - \cos (\delta)}{1+\cos (\delta)}\right)
\end{align}
where $T_d(x)$ is the $d$-th Chebyshev polynomial of the first kind, and
\begin{align}
    {\cal N}_{d,\delta}:=\int_{-\pi}^{\pi} T_d\left( 1 + 2\frac{\cos(x) - \cos (\delta)}{1+\cos (\delta)}\right) \d x.
\end{align}
Then
\begin{enumerate}
    \item $|M_{d,\delta}(x)|\leq \frac{1}{{\cal N}_{d,\delta}}$ for $x\in [-\pi, -\delta]\cup [\delta, \pi]$, and $M_{d,\delta}(x)\geq \frac{1}{{\cal N}_{d,\delta}}$ for $x\in [-\delta, \delta]$.
    \item $\int_{-\pi}^{\pi} M_{d,\delta}(x)\d x = 1$, $1\leq \int_{-\pi}^{\pi} |M_{d,\delta}(x)| \d x \leq 1+\frac{4\pi}{{\cal N}_{d,\delta}}$.
    \item When $\tan (\delta/2)\leq 1-1/\sqrt{2}$, we have
    \begin{align}
        {\cal N}_{d,\delta} \geq C_1 e^{d\delta/\sqrt{2}}\sqrt{\frac{\delta}{d}}\cdot \mathrm{erf}(C_2\sqrt{d}\delta),
    \end{align}
    for some universal constant $C_1,C_2$.
\end{enumerate}
\end{lemma}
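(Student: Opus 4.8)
The plan is to reduce all three claims to the behaviour of the Chebyshev argument $g(x) := 1 + 2\frac{\cos x - \cos\delta}{1+\cos\delta}$, together with the standard facts that $|T_d(y)|\le 1$ for $y\in[-1,1]$, while $T_d$ is increasing on $[1,\infty)$ with $T_d(1)=1$ and $T_d(\cosh\theta)=\cosh(d\theta)$ for $y=\cosh\theta\ge 1$. First I would record the range of $g$. Using $1-\cos\delta=2\sin^2(\delta/2)$ and $1+\cos\delta=2\cos^2(\delta/2)$ one gets $g(0)=1+2\tan^2(\delta/2)>1$, and $g$ is even and strictly decreasing in $|x|$ on $[0,\pi]$ with $g(\pm\delta)=1$ and $g(\pm\pi)=-1$. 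Hence $g(x)\ge 1$ for $|x|\le\delta$ and $g(x)\in[-1,1]$ for $\delta\le|x|\le\pi$.

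Parts (1) and (2) are then short. On $[-\delta,\delta]$ we have $T_d(g(x))\ge T_d(1)=1$ by monotonicity, and on $\delta\le|x|\le\pi$ we have $|T_d(g(x))|\le 1$; dividing by $\mathcal{N}_{d,\delta}$ (which is positive, as the Part (3) estimate shows) gives both assertions of (1). For (2), $\int_{-\pi}^{\pi} M_{d,\delta}=1$ holds by the very definition of $\mathcal{N}_{d,\delta}$, and writing $\int |M_{d,\delta}| = \int M_{d,\delta} + 2\int_{\{M_{d,\delta}<0\}}|M_{d,\delta}|$ I would note that the integrand can be negative only where $g\in[-1,1]$, i.e.\ on $\{\delta\le|x|\le\pi\}$, where $|M_{d,\delta}|\le 1/\mathcal{N}_{d,\delta}$; since that set has measure at most $2\pi$, the correction is at most $4\pi/\mathcal{N}_{d,\delta}$, and the lower bound $\int|M_{d,\delta}|\ge 1$ is just the triangle inequality.

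Part (3) is the real content and the step I expect to be the main obstacle. I would first discard the tail: $\mathcal{N}_{d,\delta}\ge\int_{-\delta}^{\delta}T_d(g(x))\,\d x - 2\pi$, since on the complement $|T_d(g(x))|\le 1$. On $[-\delta,\delta]$ write $g(x)=\cosh\phi(x)$ with $\phi(x):=\mathrm{arccosh}\,g(x)\ge 0$, so that $T_d(g(x))=\cosh(d\phi(x))\ge\tfrac12 e^{d\phi(x)}$ and hence $\int_{-\delta}^\delta T_d(g(x))\,\d x\ge\int_0^\delta e^{d\phi(x)}\,\d x$. The crux is a clean quadratic lower bound on $\phi$. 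From $\cosh\phi-1=2\sinh^2(\phi/2)=2\frac{\cos x-\cos\delta}{1+\cos\delta}$ one obtains $\sinh(\phi(x)/2)=\sqrt{(\cos x-\cos\delta)/(1+\cos\delta)}$, and a short calculus estimate—staying in the regime where $\mathrm{arcsinh}$ is comparable to its argument, which is exactly what the hypothesis $\tan(\delta/2)\le 1-1/\sqrt2$ guarantees and thereby pins down the constant $1/\sqrt2$—yields $\phi(x)\ge \tfrac{1}{\sqrt2}\sqrt{\delta^2-x^2}\ge \tfrac{\delta}{\sqrt2}-\tfrac{x^2}{\sqrt2\,\delta}$ on $[0,\delta]$. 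Substituting this parabola and evaluating the resulting Gaussian integral gives
\begin{align*}
\int_0^\delta e^{d\phi(x)}\,\d x \;\ge\; e^{d\delta/\sqrt2}\int_0^\delta e^{-d x^2/(\sqrt2\,\delta)}\,\d x \;=\; \tfrac12\,e^{d\delta/\sqrt2}\sqrt{\tfrac{\pi\sqrt2\,\delta}{d}}\;\mathrm{erf}\!\Big(2^{-1/4}\sqrt{d\delta}\Big).
\end{align*}
This is already of the claimed shape $C_1\,e^{d\delta/\sqrt2}\sqrt{\delta/d}\,\mathrm{erf}(C_2\sqrt d\,\delta)$: the prefactor supplies $\sqrt{\delta/d}$, and since the hypothesis forces $\delta$ below a constant $\delta_{\max}$, one may weaken the error-function argument via $\sqrt{d\delta}=\sqrt d\,\delta\cdot\delta^{-1/2}\ge C_2\sqrt d\,\delta$, while the subtracted $2\pi$ from the tail is dominated by the exponential and absorbed into $C_1$. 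The delicate point throughout is establishing the parabola bound for $\phi$ uniformly over the whole interval $[0,\delta]$ rather than merely near $x=0$, and it is precisely the smallness condition on $\tan(\delta/2)$ that makes this uniform constant $1/\sqrt2$ legitimate.
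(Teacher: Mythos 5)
A preliminary remark on the comparison target: the paper does not prove this lemma itself — it states it and defers to Appendix~A of \cite{lt21} (``we omit it here''). Measured against that source argument, your route is visibly the intended one: parts (1) and (2) follow from the range of $g(x)=1+2\frac{\cos x-\cos\delta}{1+\cos\delta}$ together with $|T_d|\le 1$ on $[-1,1]$ and monotone growth of $T_d$ on $[1,\infty)$, and part (3) from $T_d(\cosh\phi)=\cosh(d\phi)\ge\frac{1}{2}e^{d\phi}$ plus a Gaussian integral — the $e^{d\delta/\sqrt{2}}\sqrt{\delta/d}\,\mathrm{erf}(\cdot)$ shape of the stated bound is the fingerprint of exactly this computation. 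The details you give check out. From $\sin a\sin b=\frac{1}{2}(\cos(a-b)-\cos(a+b))$ one gets $s:=\sinh(\phi(x)/2)=\sqrt{\sin(\tfrac{\delta+x}{2})\sin(\tfrac{\delta-x}{2})}/\cos(\delta/2)\le\tan(\delta/2)\le 1-1/\sqrt{2}$; the hypothesis forces $\delta\le 2\arctan(1-1/\sqrt{2})\approx 0.57$, so $\sin t\ge 0.94\,t$ on the relevant arguments and $\mathrm{arcsinh}(s)\ge 0.98\,s$ for $s\le 0.30$, giving $\phi(x)\ge 0.93\sqrt{\delta^2-x^2}\ge\frac{1}{\sqrt{2}}\sqrt{\delta^2-x^2}$ uniformly on $[0,\delta]$ — so the ``short calculus estimate'' you defer to does exist, with room to spare. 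The parabola bound $\sqrt{\delta^2-x^2}\ge\delta-x^2/\delta$, the evaluation $\int_0^{\delta}e^{-dx^2/(\sqrt{2}\delta)}\,\d x=\frac{1}{2}\sqrt{\pi\sqrt{2}\,\delta/d}\;\mathrm{erf}\big(2^{-1/4}\sqrt{d\delta}\big)$, and the weakening $\mathrm{erf}(2^{-1/4}\sqrt{d\delta})\ge\mathrm{erf}(2^{-1/4}\sqrt{d}\,\delta)$ for $\delta\le 1$ are all correct, as is the part-(2) decomposition into the signed integral plus twice the negative part.

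The one step that genuinely fails as written is the last one: ``the subtracted $2\pi$ from the tail is dominated by the exponential and absorbed into $C_1$.'' Your chain gives $\mathcal{N}_{d,\delta}\ge\frac{1}{2}\sqrt{\pi\sqrt{2}}\,e^{d\delta/\sqrt{2}}\sqrt{\delta/d}\,\mathrm{erf}(2^{-1/4}\sqrt{d\delta})-2\pi$, and the subtraction is harmless only when the main term is bounded away from $2\pi$, i.e.\ roughly when $d\delta\gtrsim\log(d/\delta)$; when $d\delta$ is small your right-hand side is negative while the claimed bound is positive, so no choice of $C_1$ rescues the absorption. Worse, in that corner the inequality as stated is itself false, so the gap is not merely presentational: for $d=1$ one computes exactly $\mathcal{N}_{1,\delta}=\int_{-\pi}^{\pi}g(x)\,\d x=2\pi\tan^2(\delta/2)=\Theta(\delta^2)$, whereas $C_1e^{\delta/\sqrt{2}}\sqrt{\delta}\,\mathrm{erf}(C_2\delta)=\Theta(\delta^{3/2})$ as $\delta\to 0$, which is eventually larger for any universal constants $C_1,C_2>0$. (This also leaves the positivity of $\mathcal{N}_{d,\delta}$, which your parts (1) and (2) borrow from part (3), unestablished outside the large-$d\delta$ regime.) The conclusion is that part (3) needs an explicit regime hypothesis such as $d\delta=\Omega(\log(d/\delta))$ — harmless here, since wherever the paper invokes the lemma one has $d=\Theta\big(\delta^{-1}\log(\delta^{-1}\epsilon^{-1}\eta^{-1})\big)$ with a large enough constant, so the exponential indeed swallows both $2\pi$ and $\sqrt{d/\delta}$ — but your proof must state this rather than silently absorb the tail.
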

The proof can be found in Appendix A in \cite{lt21}, and we omit it here.
\section{Technical details of the Hadamard test of block-encoded observable}
\label{sec:hadamard_test_block_encoding}
In this section, we give detailed analysis of the Hadamard test for block-encodings which plays a crucial role in the proof of Theorem \ref{thm:gspe_complexity_general_case}.

We first note that the quantum state before the final measurements is as follows: 
\begin{align}
    \ket{\phi_1}=\begin{cases}
    \frac{1}{\sqrt{2}}\left(\ket{+}\ket{0^m}\ket{\phi_0} + \ket{-}(I\otimes e^{-iHt_2})U(I\otimes e^{-iH t_1})\ket{0^m}\ket{\phi_0}\right)& \text{if}~W=I,\\
    \frac{1}{\sqrt{2}}\left(\ket{+}\ket{0^m}\ket{\phi_0} + i\ket{-}(I\otimes e^{-iHt_2})U(I\otimes e^{-iH t_1})\ket{0^m}\ket{\phi_0}\right)& \text{if}~W=S.
    \end{cases}
\end{align}

\paragraph{Case 1: $W=I$}

We measure the first two registers. If the outcome is $(0, 0^m)$, the (un-normalized) remaining state is:
\begin{align}
    &(\bra{0}\bra{0^m}\otimes I)\frac{1}{\sqrt{2}}\left(\ket{+}\ket{0^m}\ket{\phi_0} + \ket{-}(I\otimes e^{-iHt_2})U(I\otimes e^{-iH t_1})\ket{0^m}\ket{\phi_0}\right)\notag\\
    = &~ \frac{1}{2}\ket{\phi_0}+\frac{1}{2\alpha}e^{-iHt_2}Oe^{-iHt_1}\ket{\phi_0}
\end{align}
Hence, this event happens with the following probability:
\begin{align}
    &\Pr[\text{the outcome is }(0,0^m)|W=I]\\
    =&~ \bra{\phi_0}\left(\frac{1}{2}I+\frac{1}{2\alpha}e^{iHt_1}O^\dagger e^{iHt_2}\right)\left(\frac{1}{2}I+\frac{1}{2\alpha}e^{-iHt_2}Oe^{-iHt_1}\right)\ket{\phi_0}\notag\\
    =&~ \frac{1}{4}\left(1+\frac{1}{\alpha}\bra{\phi_0}e^{iHt_1}O^\dagger e^{iHt_2}\ket{\phi_0}+\frac{1}{\alpha}\bra{\phi_0}e^{-iHt_2}O e^{-iHt_1}\ket{\phi_0} + \frac{1}{\alpha^2}\bra{\phi_0}e^{iHt_1}O^\dagger Oe^{-iHt_1}\ket{\phi_0}\right).
\end{align}
Similarly, if the outcome is $(1, 0^m)$, the remaining (un-normalized) state is
\begin{align}
    \frac{1}{2}\ket{\phi_0}-\frac{1}{2\alpha}e^{-iHt_2}Oe^{-iHt_1}\ket{\phi_0},
\end{align}
and the probability is
\begin{align}
    &\Pr[\text{the outcome is }(1,0^m)|W=I]\notag\\
    = &~ \frac{1}{4}\left(1-\frac{1}{\alpha}\bra{\phi_0}e^{iHt_1}O^\dagger e^{iHt_2}\ket{\phi_0}-\frac{1}{\alpha}\bra{\phi_0}e^{-iHt_2}O e^{-iHt_1}\ket{\phi_0} + \frac{1}{\alpha^2}\bra{\phi_0}e^{iHt_1}O^\dagger Oe^{-iHt_1}\ket{\phi_0}\right).
\end{align}
Hence, the expectation of $X$ is
\begin{align}
    \E[X]=&~ \alpha \cdot (\Pr[\text{the outcome is }(0,0^m)|W=I] - \Pr[\text{the outcome is }(1,0^m)|W=I])\\
    = &~ \frac{1}{2}(\bra{\phi_0}e^{-iHt_2}O e^{-iHt_1}\ket{\phi_0} + \bra{\phi_0}e^{iHt_1}O^\dagger e^{iHt_2}\ket{\phi_0})\notag\\
    = &~ \frac{1}{2}(\bra{\phi_0}e^{-iHt_2}O e^{-iHt_1}\ket{\phi_0} + \ov{\bra{\phi_0}e^{-iHt_2}O e^{-iHt_1}\ket{\phi_0}})\notag\\
    = &~ \Re \bra{\phi_0}e^{-iHt_2}O e^{-iHt_1}\ket{\phi_0}.
\end{align}

\paragraph{Case 2: $W=S$} Similar to the case 1, we have
\begin{align}
    & \Pr[\text{the outcome is }(0,0^m)|W=S]\\
    =&~ \bra{\phi_0}\left(\frac{1}{2}I-\frac{i}{2\alpha}e^{iHt_1}O^\dagger e^{iHt_2}\right)\left(\frac{1}{2}I+\frac{i}{2\alpha}e^{-iHt_2}Oe^{-iHt_1}\right)\ket{\phi_0}\notag\\
    =&~ \frac{1}{4}\left(1-\frac{i}{\alpha}\bra{\phi_0}e^{iHt_1}O^\dagger e^{iHt_2}\ket{\phi_0}+\frac{i}{\alpha}\bra{\phi_0}e^{-iHt_2}O e^{-iHt_1}\ket{\phi_0} + \frac{1}{\alpha^2}\bra{\phi_0}e^{iHt_1}O^\dagger Oe^{-iHt_1}\ket{\phi_0}\right).
\end{align}
And
\begin{align}
    &\Pr[\text{the outcome is }(1,0^m)|W=S]\notag\\
    = &~ \frac{1}{4}\left(1+\frac{i}{\alpha}\bra{\phi_0}e^{iHt_1}O^\dagger e^{iHt_2}\ket{\phi_0}-\frac{i}{\alpha}\bra{\phi_0}e^{-iHt_2}O e^{-iHt_1}\ket{\phi_0} + \frac{1}{\alpha^2}\bra{\phi_0}e^{iHt_1}O^\dagger Oe^{-iHt_1}\ket{\phi_0}\right).
\end{align}
Hence,
\begin{align}
    \E[Y]=&~ \alpha \cdot (\Pr[\text{the outcome is }(1,0^m)|W=S] - \Pr[\text{the outcome is }(0,0^m)|W=S])\\
    = &~ \frac{i}{2}(-\bra{\phi_0}e^{-iHt_2}O e^{-iHt_1}\ket{\phi_0} + \ov{\bra{\phi_0}e^{-iHt_2}O e^{-iHt_1}\ket{\phi_0}})\notag\\
    = &~ \Im \bra{\phi_0}e^{-iHt_2}O e^{-iHt_1}\ket{\phi_0}.
\end{align}

Therefore,
\begin{align}
    \E[X+iY] = \bra{\phi_0}e^{-iHt_2}O e^{-iHt_1}\ket{\phi_0}.
\end{align}

\subsection{Generalized Hadamard test}
In this subsection, we study the generalized Hadamard test for block-encodings and we will show that the estimator's variance can be reduced by replacing the first Hadamard gate with an $\alpha$-dependent single-qubit gate.

Suppose $W=I$ and we replace the first Hadamard gate with the following single-qubit gate:
\begin{align}
    G(a, b, \theta):=\begin{bmatrix}
    a & b\\
    -e^{i\theta}\ov{b} & e^{i\theta}\ov{a}
    \end{bmatrix},
\end{align}
where $\theta\in \R$, $a,b\in \C$ with $|a|^2 + |b|^2=1$.

Then, we have
\begin{align*}
    &\ket{0}\ket{0^m}\ket{\phi_0}\\
    \xrightarrow{G(a, b, \theta)}&~ a\ket{0}\ket{0^m}\ket{\phi_0}  -e^{i\theta}\ov{b} \ket{1}\ket{0^m}\ket{\phi_0}\\
    \xrightarrow{C\mhyphen e^{-iHt_1}}&~ a\ket{0}\ket{0^m}\ket{\phi_0} -e^{i\theta}\ov{b} \ket{1}(I\otimes e^{-iHt_1})\ket{0^m}\ket{\phi_0}\\
    \xrightarrow{C\mhyphen U}&~ a\ket{0}\ket{0^m}\ket{\phi_0} -e^{i\theta}\ov{b} \ket{1}U(I\otimes e^{-iHt_1})\ket{0^m}\ket{\phi_0}\\
    \xrightarrow{C\mhyphen e^{-iHt_2}}&~ a\ket{0}\ket{0^m}\ket{\phi_0} -e^{i\theta}\ov{b} \ket{1}(I\otimes e^{-iHt_2})U(I\otimes e^{-iHt_1})\ket{0^m}\ket{\phi_0}\\
    \xrightarrow{G(p,q,\rho)}&~ a(p\ket{0} - e^{i\rho}\ov{q}\ket{1})\ket{0^m}\ket{\phi_0} -e^{i\theta}\ov{b} (q\ket{0} + e^{i\rho}\ov{p}\ket{1})(I\otimes e^{-iHt_2})U(I\otimes e^{-iHt_1})\ket{0^m}\ket{\phi_0}\\
    =: &~ \ket{\phi_1}.
\end{align*}
Hence, the un-normalized remaining state after the measurement with outcome $(0, 0^m)$ is:
\begin{align}
    (\bra{0}\bra{0^m}\otimes I)\ket{\phi_1}
    = ap \ket{\phi_0} - \frac{e^{i\theta}\ov{b}q}{\alpha} e^{-iHt_2}Oe^{-iHt_1}\ket{\phi_0}.
\end{align}
It implies that
\begin{align}
    &\Pr[\text{the outcome is }(0, 0^m)|W=I]\\
    = &~ \bra{\phi_0} \left(\ov{a}\ov{p} I - \frac{e^{-i\theta}b\ov{q}}{\alpha} e^{iHt_1}O^\dagger e^{iHt_2}\right) \left(ap I - \frac{e^{i\theta}\ov{b}q}{\alpha} e^{-iHt_2} O e^{-iHt_1}\right)\ket{\phi_0}\notag\\
    = &~ |a|^2|p|^2 +  \frac{|b|^2|q|^2}{\alpha^2}\bra{\phi_0}e^{iHt_1} O^\dagger O e^{-iHt_1}\ket{\phi_0}\notag\\
    -& \frac{e^{i\theta}\ov{a}\ov{b}\ov{p}q}{\alpha} \bra{\phi_0}e^{-iHt_2} O e^{-iHt_1}\ket{\phi_0} - \frac{e^{-i\theta}abp\ov{q}}{\alpha} \ov{\bra{\phi_0}e^{-iHt_2} O e^{-iHt_1}\ket{\phi_0}}.
\end{align}

On the other hand, the un-normalized state for the outcome $(1, 0^m)$ is
\begin{align}
    (\bra{1}\bra{0^m}\otimes I)\ket{\phi_1}
    = -e^{i\rho}a\ov{q} \ket{\phi_0} - \frac{e^{i(\theta+\rho)}\ov{b}\ov{p}}{\alpha} e^{-iHt_2}Oe^{-iHt_1}\ket{\phi_0},
\end{align}
and the probability is
\begin{align}
    &\Pr[\text{the outcome is }(1, 0^m)|W=I]\\
    = &~ \bra{\phi_0} \left(-e^{-i\rho}\ov{a}q I - \frac{e^{-i(\theta+\rho)}bp}{\alpha} e^{iHt_1}O^\dagger e^{iHt_2}\right) \left(-e^{i\rho}a\ov{q} I - \frac{e^{i(\theta+\rho)}\ov{b}\ov{p}}{\alpha} e^{-iHt_2} O e^{-iHt_1}\right)\ket{\phi_0}\notag\\
    = &~ |a|^2|q|^2 + \frac{|b|^2|p|^2}{\alpha^2}\bra{\phi_0}e^{iHt_1} O^\dagger O e^{-iHt_1}\ket{\phi_0}\notag\\
    +&\frac{e^{i\theta}\ov{a}\ov{b}\ov{p}q}{\alpha}\bra{\phi_0}e^{-iHt_2} O e^{-iHt_1}\ket{\phi_0} + \frac{e^{-i\theta}abp\ov{q}}{\alpha}\ov{\bra{\phi_0}e^{-iHt_2} O e^{-iHt_1}\ket{\phi_0}}
\end{align}

If we choose $|p|=|q|=\frac{1}{\sqrt{2}}$, then we have
\begin{align}
    &\Pr[\text{the outcome is }(1, 0^m)|W=I] - \Pr[\text{the outcome is }(0, 0^m)|W=I]\notag\\
    = &~ \Re \frac{4e^{i\theta} \ov{a}\ov{b}\ov{p}q}{\alpha}\bra{\phi_0}e^{-iHt_2} O e^{-iHt_1}\ket{\phi_0}.
\end{align}
Notice that to make the Hadamard test work, we need the coefficient $\frac{4e^{i\theta} \ov{a}\ov{b}\ov{p}q}{\alpha}$ to be a real or an imaginary number. 

Now, we show how to choose the parameters to minimize the variance. Without loss of generality, we may assume $a,b\in (0,1)$ such that $a^2+b^2=1$ and use $p,q$ to cancel the phase factor, i.e., $e^{i\theta} \ov{a}\ov{b}\ov{p}q=\frac{1}{2}ab$. It gives that:
\begin{align}
    &\Pr[\text{the outcome is }(1, 0^m)|W=I] - \Pr[\text{the outcome is }(0, 0^m)|W=I]\notag\\
    =&~ \frac{2ab}{\alpha}\Re \bra{\phi_0}e^{-iHt_2} O e^{-iHt_1}\ket{\phi_0},
\end{align}
and
\begin{align}
    &\Pr[\text{the outcome is }(1, 0^m)|W=I] + \Pr[\text{the outcome is }(0, 0^m)|W=I]\notag\\
    = &~ a^2 + \frac{b^2}{\alpha^2}\bra{\phi_0}e^{iHt_1}O^\dagger Oe^{-iHt_1}\ket{\phi_0} 
\end{align}

Now, define the random variable as follows:
\begin{align}
    X:=\begin{cases}
    \frac{\alpha}{2ab} & \text{if the outcome is }(1, 0^m),\\
    -\frac{\alpha}{2ab} & \text{if the outcome is }(0, 0^m),\\
    0 & \text{otherwise}.
    \end{cases}
\end{align}
Then, we have
\begin{align}
    \E[X]=\Re \bra{\phi_0}e^{-iHt_2} O e^{-iHt_1}\ket{\phi_0}.
\end{align}
And we have
\begin{align}
    \Var[X]=&~ \E[X^2]-\E[X]^2\notag\\
    =&~ \frac{\alpha^2}{4a^2b^2}\left(a^2 + \frac{b^2}{\alpha^2}\bra{\phi_0}e^{iHt_1}O^\dagger Oe^{-iHt_1}\ket{\phi_0} \right) - \left(\Re \bra{\phi_0}e^{-iHt_2} O e^{-iHt_1}\ket{\phi_0}\right)^2.
\end{align}
The second term is fixed for any parameters. And for the first term, we have
\begin{align}
    \frac{\alpha^2}{4a^2b^2}\left(a^2 + \frac{b^2}{\alpha^2}\bra{\phi_0}e^{iHt_1}O^\dagger Oe^{-iHt_1}\ket{\phi_0} \right) = &~ \frac{\alpha^2}{4b^2} + \frac{1}{4a^2}\bra{\phi_0}e^{iHt_1}O^\dagger Oe^{-iHt_1}\ket{\phi_0}\\
    = &~ \frac{\alpha^2}{4(1-a^2)} + \frac{\|Oe^{-iHt_1}\ket{\phi_0}\| ^2}{4a^2}\notag\\
    \geq &~ \frac{1}{4}(\alpha + \|Oe^{-iHt_1}\ket{\phi_0}\|)^2,
\end{align}
where the minimizer is at $a:=\sqrt{\frac{\|Oe^{-iHt_1}\ket{\phi_0}\|}{\alpha + \|Oe^{-iHt_1}\ket{\phi_0}\|}}$. However, since we do not know the value of $\|Oe^{-iHt_1}\ket{\phi_0}\|$, there are two approaches to resolve this issue: (1) use another quantum circuit to estimate $\|Oe^{-iHt_1}\ket{\phi_0}\|$ and then set the parameters; (2) just take $a:=\sqrt{\frac{1}{\alpha+1}}$. Notice that when the first gate is the Hadamard gate, i.e., $a=\frac{1}{\sqrt{2}}$, we have
\begin{align}
    \Var\left[X~\Big|~a=\frac{1}{\sqrt{2}}\right] = \frac{1}{2}(\alpha^2 + \|Oe^{-iHt_1}\ket{\phi_0}\|^2).
\end{align}
When $a=\sqrt{\frac{1}{\alpha+1}}$, we have
\begin{align}
    \Var\left[X~\Big|~a=\frac{1}{\sqrt{\alpha+1}}\right]=&~ \frac{1}{4}\alpha(\alpha+1) + \frac{1}{4}\|Oe^{-iHt_1}\ket{\phi_0}\|^2(\alpha+1)\\
    = &~ \frac{1}{2}(\alpha^2 + \|Oe^{-iHt_1}\ket{\phi_0}\|^2) -\frac{1}{4}(\alpha-1)(\alpha-  \|Oe^{-iHt_1}\ket{\phi_0}\|^2)\notag\\
    \leq &~ \Var\left[X~\Big|~a=\frac{1}{\sqrt{2}}\right],
\end{align}
where the last step follows from $\alpha\geq 1$ and $\|Oe^{-iHt_1}\ket{\phi_0}\|^2\leq 1$. Therefore, we can reduce the estimator's variance by choosing $a=\sqrt{\frac{1}{\alpha}}$. Moreover, if $\alpha$ is large, the new variance is about half of the variance using the Hadamard gate. 

Similar strategy can also be used to reduce the variance of the random variable $Y$.

\end{document}